\documentclass[aps,prx,10pt,twocolumn, superscriptaddress, nofootinbib,longbibliography,floatfix,amsmath,amssymb]{revtex4-2}
\usepackage{graphicx} 
\AtBeginDocument{\providecommand{\selectlanguage}[1]{}\renewcommand{\selectlanguage}[1]{}} 

\usepackage{multirow}
\usepackage[utf8]{inputenc}
\usepackage[dvipsnames,svgnames]{xcolor}
\usepackage[T1]{fontenc}
\usepackage{amsmath,amssymb,amsthm}
\usepackage{algorithm,algpseudocode}
\usepackage[colorlinks]{hyperref}
\hypersetup{
    pdfstartview={FitH},
    colorlinks=true,
    linkcolor=blue,
    citecolor=blue,
    filecolor=magenta,
    urlcolor=blue
}
\usepackage{enumitem}
\usepackage{booktabs,array}
\usepackage{mathtools}
\usepackage{cleveref}
\usepackage{tikz-cd}
\usepackage{xspace}
\crefname{figure}{Fig.}{Figs.}
\Crefname{figure}{Fig.}{Figs.}
\crefname{section}{Section}{Sections}
\Crefname{section}{Section}{Sections}
\crefname{table}{Table}{Tables}

\Crefname{table}{Table}{Tables}
\crefname{appendix}{Appendix}{Appendices}
\Crefname{appendix}{Appendix}{Appendices}
\crefname{equation}{Eq.}{Eqs.}
\Crefname{equation}{Eq.}{Eqs.}
\crefname{theorem}{Theorem}{Theorems}
\Crefname{theorem}{Theorem}{Theorems}
\crefname{definition}{Definition}{Definitions}
\crefname{prop}{Proposition}{Propositions}
\crefname{conjecture}{Conjecture}{Conjectures}
\crefname{corollary}{Corollary}{Corollaries}
\usepackage{makecell}
\usepackage{braket}
\DeclareMathOperator{\im}{\text{im}}

\newcommand{\R}{R}
\usepackage{pifont}
\newcommand{\cmark}{\ding{51}}%
\newcommand{\xmark}{\ding{55}}%
\newcommand{\wt}{\operatorname{wt}}
\newcommand{\outline}[1]{\section*{#1}}
\AtBeginDocument{%
  \heavyrulewidth=.08em
  \lightrulewidth=.05em
  \cmidrulewidth=.03em
  \belowrulesep=.65ex
  \belowbottomsep=0pt
  \aboverulesep=.4ex
  \abovetopsep=0pt
  \cmidrulesep=\doublerulesep
  \cmidrulekern=.5em
  \defaultaddspace=.5em
}

\newtheorem{theorem}{Theorem}[section]
\newtheorem{prop}{Proposition}

\newtheorem*{conjecture*}{Conjecture}

\newtheorem{lemma}[theorem]{Lemma}

\newcommand{\F}{\mathbb{F}_2}
\newcommand{\FG}{\mathbb{F}_2[G]}

\DeclareMathOperator{\Ann}{Ann}

\newcommand{\code}[1]{[\![#1]\!]}\newcommand{\bZ}{\bar Z}\newcommand{\bX}{\bar X}
\newcommand{\T}{\mathsf T}
\DeclareRobustCommand{\tstate}{%
  \ifmmode
    \ket{\bar{T}}%
  \else
    \ensuremath{\ket{\bar{T}}}\xspace
  \fi
}

\DeclareRobustCommand{\cczstate}{%
  \ifmmode
    \ket{\overline{\mathrm{CCZ}}}%
  \else
    \ensuremath{\ket{\overline{\mathrm{CCZ}}}}\xspace
  \fi
}

\DeclareRobustCommand{\detstate}{%
  \ifmmode
    \ket{\overline{\textnormal{det}}}%
  \else
    \ensuremath{\ket{\overline{\textnormal{det}}}}\xspace
  \fi
}

\begin{document}

\title{Single-shot magic state factories for 
quantum LDPC codes}

\author{
Varun~Menon}
\thanks{These authors contributed equally; varunmenon@g.harvard.edu}
\affiliation{Department of Physics, Harvard University, Cambridge, MA 02138, USA}
\author{
Rohan~Mehta}
\thanks{These authors contributed equally; rohanmehta@g.harvard.edu}
\affiliation{Department of Physics, Harvard University, Cambridge, MA 02138, USA}
\author{
Andi~Gu}
\affiliation{Department of Physics, Harvard University, Cambridge, MA 02138, USA}
\author{
Andrei~C.~Diaconu}
\affiliation{Department of Physics, Harvard University, Cambridge, MA 02138, USA}
\author{
Daniel~Bochen~Tan}
\affiliation{Department of Physics, Harvard University, Cambridge, MA 02138, USA}
\author{Xiao~Xiao}
\affiliation{QuEra Computing Inc., 1380 Soldiers Field Road, Boston, MA 02135, USA}
\affiliation{Joint Center for Quantum Information and Computer Science,
University of Maryland, College Park, MD 20742, USA}
\author{Michael~J.~Gullans}
\affiliation{QuEra Computing Inc., 1380 Soldiers Field Road, Boston, MA 02135, USA}
\affiliation{Joint Center for Quantum Information and Computer Science,
University of Maryland, College Park, MD 20742, USA}
\author{
Qian~Xu}
\affiliation{\mbox{Institute for Quantum Information and Matter, California Institute of Technology, Pasadena, CA 91125, USA}}
\affiliation{Walter Burke Institute for Theoretical Physics, California Institute of Technology, Pasadena, CA 91125, USA}
\author{
Hengyun~Zhou}
\affiliation{Department of Electrical Engineering and Computer Science, Massachusetts Institute of Technology, Cambridge, MA 02138, USA}
\author{
Mikhail~D.~Lukin}
\affiliation{Department of Physics, Harvard University, Cambridge, MA 02138, USA}
\author{
J.~Pablo~Bonilla~Ataides}
\thanks{jbonillaataides@g.harvard.edu}
\affiliation{Department of Physics, Harvard University, Cambridge, MA 02138, USA}

\begin{abstract}
Quantum low-density parity-check (qLDPC) codes can substantially reduce the qubit overhead of fault-tolerant quantum computation, but any useful architecture based on such codes also requires an efficient mechanism to implement non-Clifford gates in a fault-tolerant manner. We introduce a space-time-efficient magic state factory for high-rate qLDPC codes by combining the transversal non-Clifford gates of recently developed \textit{tricycle} codes with single-shot four-dimensional balanced product \textit{quadcycle} codes. The key ingredient is an efficient ``escape'' mechanism that teleports postselected logical magic states from tricycle codes to larger distance quadcycle codes. Under a standard circuit-level noise model at physical error rate $10^{-3}$, our factories produce \cczstate states at logical error rates as low as $10^{-9}$ with 10--40 times lower space-time cost than state-of-the-art magic state generation methods. The resulting states are directly encoded in high-rate blocks suitable for computation. We identify the quadcycle codes by an AI-assisted search and show that they can serve as excellent memories capable of single-shot error correction. Finally, we present a concrete implementation of the entire factory with highly efficient atom transport on reconfigurable neutral-atom quantum computers. Together, these results provide an end-to-end, practical protocol for producing and utilizing high-fidelity magic states within a high-rate qLDPC architecture, with the lowest space-time cost to date.
\end{abstract}

\maketitle

\outline{Introduction}\label{sec:intro}

Quantum error correction is essential for scalable quantum computation, protecting information from decoherence and from the errors introduced by noisy operations. The paradigmatic approach pairs the surface code \cite{fowler_surface_2012} with efficient Clifford logic, realized either transversally \cite{zhou2025resource} or by lattice surgery \cite{horsman_surface_2012, litinski_game_2019, gidney2025factor}. The required non-Clifford operations are performed by utilizing logical resource states, prepared through procedures such as magic state distillation \cite{bravyi2005universal, bravyi2012magic, sales2025experimental}. Such procedures require substantial overhead in both space, through the number of physical qubits used, and time, through multiple syndrome-extraction rounds or repeated attempts in repeat-until-success schemes. Recently, much effort has been devoted to reducing these costs. One of the most promising approaches involves magic state \textit{cultivation}, which hierarchically suppresses errors on an initially noisy magic state \cite{sahay2026fold, gidney_magic_2024, vaknin2026cultivation, claes2025cultivating, chen2025efficient}. Despite these efforts, scaling surface code architectures to thousands of logical qubits and the error rates that large-scale algorithms demand remains resource intensive, and magic state production continues to dominate the overhead.

High-rate quantum low-density parity-check (qLDPC) codes offer a complementary route to fault tolerance. By encoding many logical qubits in a single block while maintaining distances that grow favorably with block size, they reach low logical error rates with substantially fewer physical qubits than comparable surface code memories. Advances in circuit-level simulation and decoding have identified finite-size qLDPC codes with high encoding rates and thresholds, and low logical error rates under experimentally relevant noise levels~\cite{gidney_stim_2021, roffe2005decoding, xu2024constant, panteleev2021degenerate, bravyi_high-threshold_2024, kasai_breaking_2026, steffan2025tile}. The nonlocal checks required for realization of these codes can be implemented naturally on platforms with reconfigurable connectivity such as neutral-atom arrays \cite{xu2024constant, cain_shors_2026, menssen2026strategic, menon_magic_2025, zhao2026towards}.

The use of these codes in complete architectures is limited by two critical elements. First, while recently proposed Pauli-product measurements and code-surgery protocols allow for efficient realization of
 Clifford logic operations on qLDPC codes \cite{cohen2022low, williamson2026low, cowtan_parallel_2026, cross_improved_2025, zheng_high-rate_2025, xu2025fast, baspin2025fast, cowtan_fast_2025, zhou2026parallel, maskara2026efficient},   no native, high-throughput source of non-Clifford states is currently known that preserves the rate advantage of the surrounding code. Second, the operations on qLDPC codes typically suffer from additional temporal overhead as
 conventional protocols repeat noisy stabilizer measurements a number of times growing with the code distance. This is in contrast to fully transversal architectures~\cite{zhou_low-overhead_2025} or single-shot error correction that extracts sufficient redundancy from a constant number of rounds \cite{bombin2015single}. Single-shot behavior is known for higher-dimensional topological and expander-code constructions \cite{brown2016fault, breuckmann2016local, leverrier2022quantum}, but high-rate codes combining it with strong finite-size performance have remained elusive.

\begin{figure*}[t!]
    \centering
     \includegraphics[width=\textwidth]{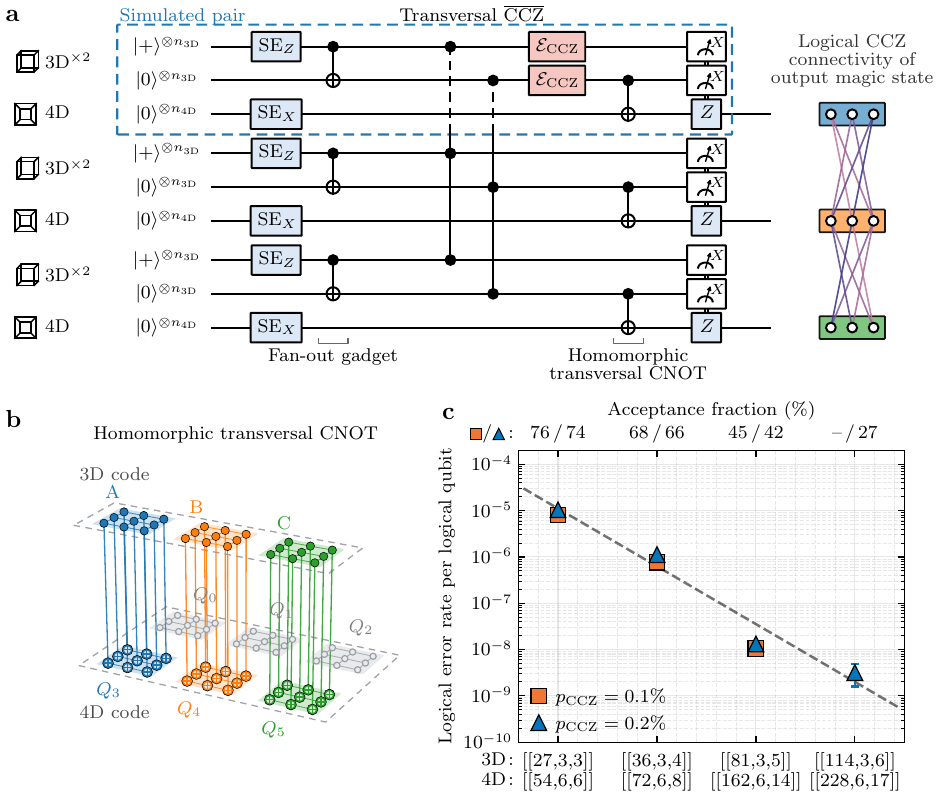}
    \caption{\textbf{Transversal dimension-jump magic state factory.}
(a) Three tricycle blocks are prepared in $\ket{\overline{+}}$ and acted on by a transversal $\overline{\mathrm{CCZ}}$ circuit using a fan-out gadget, producing the nine-qubit hypergraph state $\detstate$ shown on the right. The gadget pairs each block with an ancilla patch of the same code through one transversal layer of CNOTs, splitting the two CCZ layers across the two patches so that every qubit takes part in a single gate. A depth-1 homomorphic CNOT teleports the magic state into three paired quadcycle blocks prepared in $\ket{\overline{0}}$. Transversal $X$ measurements of the tricycle blocks provide the teleportation outcomes and detector information used for postselection. The dashed box marks the pair used in the simulations. (b) Physical geometry of the homomorphic CNOT. The three qubit sectors of the tricycle code are aligned with their embedded counterparts in the quadcycle code, so matched qubits are coupled by a single layer of site-local CNOT gates; the remaining quadcycle qubits are idle. (c) Post-selected logical error rate per output logical qubit for the simulated code pairs, plotted against the tricycle code distance $d_{\textnormal{3D}}$, with the code parameters given below the axis. Reset, measurement, idle and two-qubit gates undergo depolarizing noise at $p=10^{-3}$. The CCZ channel is the single-qubit marginal of a three-qubit depolarizing channel of strength $p_{\mathrm{CCZ}}$. The top axis gives the acceptance fraction of a \emph{single} tricycle--quadcycle pair, for $p_{\mathrm{CCZ}}=0.1\%$ and $0.2\%$ respectively. The factory requires all three pairs to be accepted, so its yield is this value cubed. Error bars are binomial, and the dashed line is a log-linear guide through the $0.2\%$ series.}
    \label{fig:dimjump}
\end{figure*}

In this Article, we construct an end-to-end space-time-efficient magic state factory using single-shot qLDPC codes. Our approach makes use of three-dimensional balanced-product \emph{tricycle} codes that support transversal non-Clifford gates and generate several logical magic states in parallel and in constant depth \cite{menon_magic_2025, jacob2025singleshot}. However, their strongly asymmetric distances limit the fidelity with which those states can be produced and held. We pair these tricycle codes with four-dimensional balanced-product \emph{quadcycle} codes, which share algebraic structure with tricycle codes, enabling an efficient state-teleportation mechanism between them. The shared structure enables a depth-1 homomorphic CNOT that ``escapes'' the magic into the quadcycle block, which supplies larger, balanced distances and single-shot preparation and error correction in both Pauli bases.
Thanks to these single-shot properties, this entire mechanism can be realized in constant depth and is natively compatible with postselection, further refining the magic state. 
With an AI-assisted search, we identify finite-size quadcycle codes with favorable rates, thresholds, and logical error rates under circuit-level noise.

Our factories produce entangled magic states \cite{rossi_quantum_2013} across nine logical qubits, corresponding to a gate that implements the determinant of a $3\times3$ matrix over $\F$, denoted $\detstate$. We demonstrate that with homomorphic measurement \cite{huang_homomorphic_2023} gadgets, these states can be converted into two usable $\ket{\overline{\mathrm{CCZ}}}$ states. We further construct low-overhead protocols for post-processing and injection into qLDPC computation codes, as well as the required Clifford feed-forward operations. Finally, we provide depth-optimal syndrome-extraction circuits and an end-to-end compilation based on the reconfigurable neutral atom array architecture, including explicit planar layouts and atom rearrangements required from initialization through output. Under a standard circuit-level noise model at $p=10^{-3}$, the resulting factories deliver \cczstate{} states at roughly an order of magnitude lower space-time cost than the state-of-the-art magic state cultivation protocols and at a lower logical error rate, already encoded in a high-rate block ready for downstream computation.

\outline{Single-shot magic state factory}\label{sec:single-shot}

Our factory separates magic state generation from storage and processing, assigning each task to a code optimized for it. Tricycle codes support transversal non-Clifford gates and can therefore generate multiple logical magic states in parallel, but their strongly asymmetric distances limit the fidelity with which these states can be retained, typically requiring full postselection on stabilizer measurement outcomes to achieve low logical error rates. Quadcycle codes, on the other hand, provide larger, more balanced distances, together with single-shot state preparation and error correction capabilities for both $X$ and $Z$ basis stabilizers, but they lack transversal non-Clifford gates. Fortunately, these two families can be paired in such a way as to harness their complementary benefits. We demonstrate that every tricycle code can be embedded into a corresponding quadcycle code, enabling a depth-$1$ transversal homomorphic $\mathrm{CNOT}$ between the two \cite{li_transversal_2025} that we use to teleport magic from the tricycle codes to the quadcycle codes.

This teleportation mechanism enjoys two distinct advantages. First, measurement outcomes from the transversal teleportation circuit provide the necessary information for efficient and accurate postselection, suppressing the infidelity of the encoded magic state by several additional orders of magnitude. Second, the postselected magic state is now natively stored in the quadcycle code, which has large enough distance to maintain the fidelity of the magic state in subsequent rounds of error correction. The quadcycle is also better suited for downstream processing.

We consider a class of tricycle codes that support a depth-$2$ code-space-preserving $\mathrm{CCZ}$ circuit whose logical action produces the $\detstate$ hypergraph magic state. To improve the fault-tolerance properties of the $\mathrm{CCZ}$ circuit, we use a fan-out gadget on top of the tricycle code blocks, which effectively amounts to concatenation with a distance $2$ bit-type ($Z$-basis) repetition code (see \cref{fig:dimjump}). The fan-out gadget converts the $\mathrm{CCZ}$ circuit of the tricycle codes into a strictly transversal depth $1$ circuit, which preserves the circuit distance of the tricycle code and leads to significantly greater error suppression. This comes at the cost of doubling the data register, though not the peak qubit count, since the ancilla patch reuses the qubits that served as $Z$-syndrome ancillae -- see Section~\ref{app:concat} of the Supplementary Information~\cite{SI} for details.

The protocol, summarized in Fig.~\ref{fig:dimjump}, proceeds in three stages. First, three tricycle blocks are prepared in $\ket{\overline{+}}$, each with its fan-out gadget applied, and acted on by a depth-$1$ transversal $\mathrm{CCZ}$ circuit producing the logical hypergraph state $\detstate$. In parallel, three paired quadcycle blocks are prepared in $\ket{\overline{0}}$. A depth-1 transversal CNOT then teleports the logical state into the quadcycle blocks, after which transversal measurements of the tricycle blocks provide both the teleportation outcomes and the syndrome information used for postselection. The resulting logical $\detstate$ state can be processed to yield two disjoint $\cczstate$ states for further computation.

\outline{Quadcycle escape codes}\label{sec:quadcycle-codes}

\begin{table*}[ht!]
\centering
\renewcommand{\arraystretch}{1.3}
\resizebox{\textwidth}{!}{%
\begin{tabular}{@{}lccccc@{}}
\toprule
\textbf{Method}
& \textbf{Output state}
& \multicolumn{2}{c}{\textbf{Encoded in}}
& \textbf{\makecell{CCZ-converted \\ error rate}}
& \textbf{\makecell{Expected space-time \\ volume per \cczstate}} \\
\cmidrule(lr){3-4}
&
& \textbf{Code}
& \textbf{Single-shot}
&
& \\
\midrule
\textbf{This work ($f=3$)}
& 2 \cczstate
& Quadcycle
& \cmark
& $1.2 \times 10^{-5}$
& $4.3 \times 10^3$ \\

Surface code cultivation ($f=3$) \cite{sahay2026fold}
& $\ket{\bar T}$
& Surface
& \xmark
& $1.2 \times 10^{-5}$
& $80.4 \times 10^3$ \\

Color code cultivation ($f=3$) \cite{gidney_magic_2024}
& $\ket{\bar T}$
& Grafted
& \xmark
& $1.2 \times 10^{-5}$
& $162.8 \times 10^3$ \\

\midrule
\textbf{This work ($f=4$)}
& 2 \cczstate
& Quadcycle
& \cmark
& $1.2 \times 10^{-6}$
& $8.6 \times 10^3$ \\

Surface code cultivation ($f=5$) \cite{sahay2026fold}
& $\ket{\bar T}$
& Surface
& \xmark
& $1.2 \times 10^{-6}$
& $83.8\times 10^3$ \\

\midrule

\textbf{This work ($f=5$)}
& 2 \cczstate
& Quadcycle
& \cmark
& $1.6 \times 10^{-8}$
& $6.7 \times 10^4$ \\

Color code cultivation ($f=5$) \cite{gidney_magic_2024}
& $\ket{\bar T}$
& Grafted
& \xmark
& $1.6\times 10^{-8}$
& $228.6 \times 10^4$ \\



\bottomrule
\end{tabular}
}
\caption{Comparison of our magic state factories with various state-of-the-art magic state generation protocols at physical error rates of $10^{-3}$, as in Ref.~\cite{sahay2026fold}. $f$ denotes the nominal fault distance of the protocol, which we compute exactly for our factories \cite{chan_diagnosing_2026}. We normalize the expected space-time volume of other protocols to account for the fact that our factories produce two extractable \cczstate states as opposed to a single $\ket{\bar T}$ state. Since $\geq 4$ $\ket{\bar T}$ states are needed to synthesize a $\cczstate$ \cite{beverland2020lower}, we multiply the expected space-time volume per $\ket{\bar T}$ state for the other protocols by $4$ to attain the numbers in the last column. The LER and space-time volume contributions from the $T$-to-CCZ conversion circuits are not included. Expected space-time volume numbers for cultivation-based protocols are estimated from \cite{sahay2026fold} (Figure 2(b)), and logical error rates are obtained from simulations with mid-circuit postselection \cite{mehta_practical_2026}. Additional overheads from decoder latency are not included for any scheme.}
\label{tab:comparison}
\end{table*}

The receiving code in our architecture must satisfy several requirements simultaneously. It must share sufficient algebraic structure with the tricycle code to support a transversal dimension jump~\cite{li_transversal_2025}, encode many logical qubits so that the factory remains high rate, support fault-tolerant state preparation, allow decoding using information from only a constant number of error-correction rounds, and have sufficiently large and balanced distances to preserve the fidelity obtained through postselection.

To meet these requirements, we study a particular family of Abelian multi-cyclic multivariate codes \cite{mian2026multivariate, lin2025abelian}, which for this work we refer to as \emph{quadcycle codes}. They can be understood as a four-dimensional generalization of bicycle (2D) and tricycle (3D) codes \cite{bravyi_high-threshold_2024, viszlai2024matchinggeneralizedbicyclecodesneutral, menon_magic_2025, jacob2025singleshot}, and they belong to the broad categories of balanced product \cite{breuckmann2021balanced} and lifted product codes \cite{panteleev2021quantum}. A quadcycle code is specified by four commuting sparse circulant matrices $A,B,C,D$ acting on a finite abelian group of size $n_G$. It contains $6n_G$ physical qubits arranged in six blocks, together with $4n_G$ stabilizer generators of each Pauli type. The first three factors, $A,B,C$, coincide with those defining a paired tricycle code, while $D$ introduces an additional cyclic direction. This shared structure embeds the tricycle complex within the quadcycle complex and enables the depth-1 homomorphic CNOT used for the escape operation. The complete parity-check matrices and mapping cone construction are given in Methods.

The quadcycle checks are redundant in both Pauli bases. In each basis, $n_G$ explicit relations among the stabilizer outcomes define \emph{metachecks}, which provide consistency conditions on a noisy measured syndrome. This bidirectional syndrome redundancy allows measurement faults to be identified without temporal repetition and supports single-shot preparation and error correction in both the $X$ and $Z$ bases. Consequently, the receiving blocks can be fault-tolerantly initialized and error corrected using only a constant number of syndrome-extraction rounds.

Quadcycle codes also inherit useful symmetries from their group-algebra construction. Group translations act as depth-1 code automorphisms, while an $X$--$Z$ duality supplies transversal Clifford operations and ensures equal $X$ and $Z$ distances (Methods). The finite-size codes identified in our search combine these properties with high encoding rates and substantially larger, more balanced distances than their paired tricycle codes.

The two code families therefore play complementary roles. The shared $A,B,C$ factors enable all logical magic states produced by the tricycle code to be transferred transversally, while the larger distance and single-shot structure of the quadcycle code in both the $X$ and $Z$ bases allow the accepted states to be verified, stored, and processed without sacrificing their fidelity.

Finally, we confirm that the escape codes are sufficiently good quantum memories to hold the magic states they receive. Figure~\ref{fig:4dsims} reports the circuit-level logical error rate per cycle of representative quadcycle codes under single-shot, sliding-window most-likely-error decoding. At the physical error rate $p=10^{-3}$ used in our factory simulations, this per-cycle rate is comparable to or below the logical error rate of the states the factory delivers (Fig.~\ref{fig:dimjump}c). This addresses the natural concern for downstream use: once a low-error magic state has been teleported into the quadcycle code, the code must have enough error-correction tolerance to \emph{maintain} that low error rate through subsequent storage and processing, rather than letting it grow. The comparable per-cycle rate shows that it does---the quadcycle code holds the factory-established fidelity across the additional error-correction cycles---and, because these codes are single-shot in both bases, each cycle of error correction costs only a constant number of syndrome-extraction rounds, independent of the code distance. In Section~\ref{sec:search} of the Supplementary Information~\cite{SI}, we additionally present several quadcycle codes which are not constrained to have a homomorphism to specific tricycle codes, found by an AI-assisted code search to have high rates and relative distances, strong single-shot circuit-level performance, and structured logical operators, demonstrating that these codes are good candidates for single-shot state-preparable LDPC quantum memories in their own right -- in particular we find quadcycle codes with parameters that improve upon those in Refs. \cite{mian2026multivariate, lin2025abelian}.

\begin{figure}[t!]
    \centering
    \includegraphics[width=\linewidth]{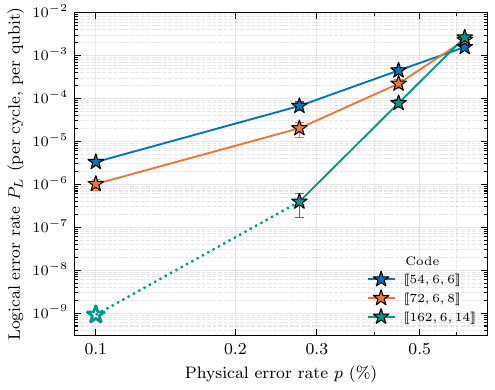}
    \caption{\textbf{Single-shot circuit-level performance of the quadcycle codes.} Logical error rate per cycle per logical qubit versus physical error rate $p$ for the $[[54,6,6]]$, $[[72,6,8]]$, and $[[162,6,14]]$ quadcycle codes, from windowed memory simulations under circuit-level depolarizing noise. Decoding is single-shot: a sliding window spanning three syndrome-extraction rounds and advancing one round per step ($W=3$, $S=1$) solves for the most-likely error within each window, so a constant number of rounds suffices independent of the code distance. The open marker extrapolates the $[[162,6,14]]$ curve to $p=10^{-3}$ using $P_L\propto p^{d/2}$ with $d=12$ (its circuit-level distance upper bound). Error bars are binomial.}
\label{fig:4dsims}
\end{figure}

\outline{Constant-depth preparation and postselection}

We now detail the three stages of the protocol and the mechanisms that make them fault tolerant. For the code pairs considered here, each tricycle block has parameters $[[n_{\textnormal{3D}},3, (d_{\textnormal{3D}}^X,d_{\textnormal{3D}}^Z)]]$ with $d_{\textnormal{3D}}^Z\ll d_{\textnormal{3D}}^X$, and is prepared in $\ket{\overline{+}}$ alongside the ancilla patch of the fan-out gadget in $\ket{\overline{0}}$. A single transversal layer of CNOTs between them fixes the gadget's checks deterministically, requiring no additional syndrome extraction or feed-forward. All $Z$ stabilizers are fixed to $+1$ by one round of $Z$-stabilizer extraction, which is fault tolerant because the tricycle code is single-shot state-preparable in the $Z$ basis. A depth-1 transversal physical CCZ circuit then implements the logical non-Clifford operation that produces the nine-qubit hypergraph state $\detstate$ (Fig.~\ref{fig:dimjump}a). It is the fan-out gadget that reduces this circuit from depth two to depth one and thereby preserves the circuit-level distance of the factory (Section~\ref{app:concat} of the Supplementary Information~\cite{SI}). Meanwhile, in parallel, three paired quadcycle blocks are prepared in $\ket{\overline{0}}$ using one round of $X$-stabilizer extraction. For the pairs studied here, each quadcycle block has the same number of physical qubits as a tricycle block together with its gadget ancilla patch, twice its logical dimension, and approximately balanced distances comparable to $d_{\textnormal{3D}}^X$.

The dimension jump teleportation is implemented by a single transversal layer of physical
CNOT gates whose controls act on qubits of the tricycle code, and the targets act on part of the paired quadcycle block. The exact physical circuit that implements this homomorphic CNOT is described by the \textit{chain map} that identifies the tricycle code with a subset of the quadcycle code. Logically, this introduces a logical CNOT between all $k_{\textnormal{3D}}$ tricycle logical qubits and one of the two
$k_{\textnormal{3D}}$-dimensional sectors of the quadcycle logical qubits (Fig.~\ref{fig:dimjump}b and Methods). We confirm the logical action of this map covers all of the logical qubits of the tricycle code. This ensures every logical magic state generated in
the tricycle block is extracted and teleported to the quadcycle code.

After the dimension jump, the tricycle blocks are measured transversally in
the $X$ basis. We reject any run with a nontrivial detector outcome, while
for accepted runs the decoded logical measurement outcomes determine
Pauli-frame $\overline{Z}$ corrections on the quadcycle blocks. This
postselection substantially increases the effective $Z$ fault distance of
the protocol. A distance-$d$ code can correct at most
$\lfloor(d-1)/2\rfloor$ faults, whereas full error detection rejects every
nontrivial fault pattern of weight below $d$. Conditioning on acceptance
therefore approximately doubles the minimum number of faults required for a
logical failure to survive undetected. We confirm this scaling directly. Rescaling every noise rate by a common factor gives a fitted error exponent of $n_{\mathrm{eff}}=3.11\pm0.04$ for the $d_{\textnormal{3D}}=3$ factory and $3.91\pm0.07$ for $d_{\textnormal{3D}}=4$, matching the code distance rather than the $\sim d/2$ expected of a decoder without error detection (Section~\ref{app:concat} of the Supplementary Information~\cite{SI}).

Full error detection remains practical here because the postselected space-time region is relatively small: the tricycle blocks are modest in size and participate in only a shallow, constant-depth circuit before being measured and discarded. The resulting rejection probability is therefore manageable despite postselecting on all detector outcomes. In our protocol, the resulting suppression compensates for the strongly asymmetric distance of the tricycle code and brings the effective $Z$-error suppression closer to that set by its larger $X$ distance and by the distance of the quadcycle block.

We simulate each tricycle--quadcycle pair under circuit-level depolarizing noise at $p=10^{-3}$, with an effective channel for each physical CCZ gate of strength $p_{\mathrm{CCZ}}=p$ and $2p$. Results for the four pairs are shown in \cref{fig:dimjump}(c).
We note that since this approach uses only a small number of physical CCZ gates, their increased error rates do not significantly affect the resulting logical-level performance (Methods).

Finally, we note that the same shared algebraic structure between the tricycle and quadcycle codes enables a direct code-growth protocol where the tricycle code is grown into a quadcycle code by measuring additional stabilizers, without requiring an additional code block. Although direct growth may use fewer qubits, we find that the 3D--4D state-teleportation scheme presented here has lower space-time cost and leads to slightly better logical error rates of the produced magic states -- see Section~\ref{sec:code_growth} of the Supplementary Information~\cite{SI} for details on the direct code-growth procedure.

\outline{Comparison to state-of-the-art magic state preparation schemes}
We compare the magic state preparation protocol outlined in the preceding section to other state-of-the-art approaches (Table \ref{tab:comparison}). We focus on fold-transversal surface code cultivation, which uses nonlocal connectivity \cite{sahay2026fold}, and the original color code cultivation approach, which imposes more stringent connectivity constraints \cite{gidney_magic_2024}. The logical error rates and associated overheads we tabulate for the cultivation-based protocols are based on Ref. \cite{mehta_practical_2026}, which studies magic cultivation \cite{gidney_magic_2024, sahay2026fold} with mid-circuit postselection using only information that is accessible to a decoder in practice. This treatment places cultivation on equal footing with our simulations in Fig. \ref{fig:dimjump}(c), where we only use information from the measurements involved in logical teleportation, and not the transversal readout of the quadcycle code blocks, to inform our postselection. Including information from the transversal measurement of the quadcycle code, which is analogous to the inclusion of readout information in past simulations of magic state cultivation \cite{gidney_magic_2024, sahay2026fold}, leads to  orders of magnitude lower logical error rates than we observe here \cite{li_transversal_2025, menon_magic_2025, xiao2026efficient}.  

Since cultivation produces $\ket{\bar T}$ states, whereas our factory natively produces $\detstate$, we normalize the resource accounting--in terms of both the logical error rate and the expected resource cost---of these protocols by their CCZ yield, i.e., the number of $\cczstate$ states that can be extracted from the resulting magic state with additional Clifford operations. This is motivated by the fact that most primitives for quantum algorithms of interest are naturally compiled to $\mathrm{CCZ}$ gates, such as arithmetic operations for cryptographic applications \cite{gidney2025factor, zhou2025resource} or multiplexed rotations in quantum simulation applications \cite{gidney2018halving, kivlichan2020improved}. Our normalization reflects that at least $4$ $\ket{\bar T}$ states are necessary to synthesize a CCZ gate \cite{t_to_ccz}, whereas $2$ \cczstate states can be obtained from the \detstate state. More details on the space-time cost accounting resulting in the numbers in \cref{tab:comparison} can be found in Methods.

\begin{figure}[t!]
    \centering
    \includegraphics[width=\linewidth]{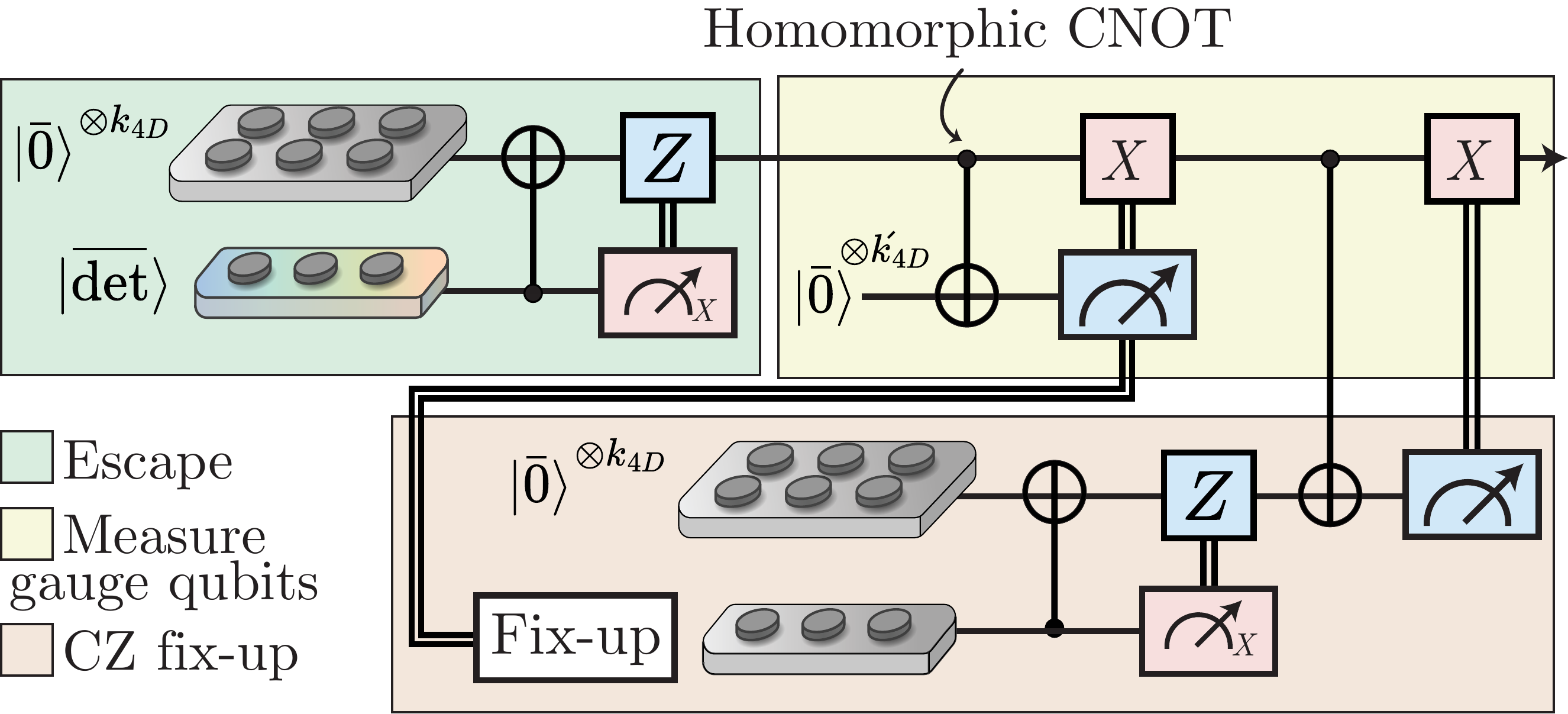}
    \caption{\textbf{Complete factory pipeline with downstream fault-tolerant gadgets.} Only one of the three parallel sets of code blocks is shown. The escape stage abstracts Fig.~\ref{fig:dimjump}, in which the magic state is teleported to three quadcycle blocks and postselected (not shown). Using ancillary quadcycle codes and homomorphic CNOTs, we then perform a transversal measurement that effectively addresses a single logical qubit in the $Z$ basis. As the ancillary quadcycle code is also single-shot state preparable, this measurement can be performed in constant depth. Then, depending on the logical outcome, we modify the factory to transversally prepare a resource state for the required addressable CZ corrections (Methods). Retries of the escape stage are not shown. Finally, the correction resource state is consumed transversally through gate teleportation, removing the undesired CZ gates dressing the disjoint \cczstate states. The resulting magic state has two disjoint \cczstate states and is ready to be consumed, e.g., via surgery with another code. }
    \label{fig:gadgets}
\end{figure}

Note that cultivation natively offers a mechanism to tune the overall logical error rate of the protocol at the expense of the expected resource overhead, by adjusting the retry strategy. We choose the operating point where magic state cultivation with mid-circuit postselection \cite{mehta_practical_2026} results in logical error rates equal to that of our factories at various sizes in order to offer a direct comparison.

\outline{Consuming magic downstream}\label{sec:computation}
\begin{figure*}[t]
\centering
\includegraphics[width=\textwidth]{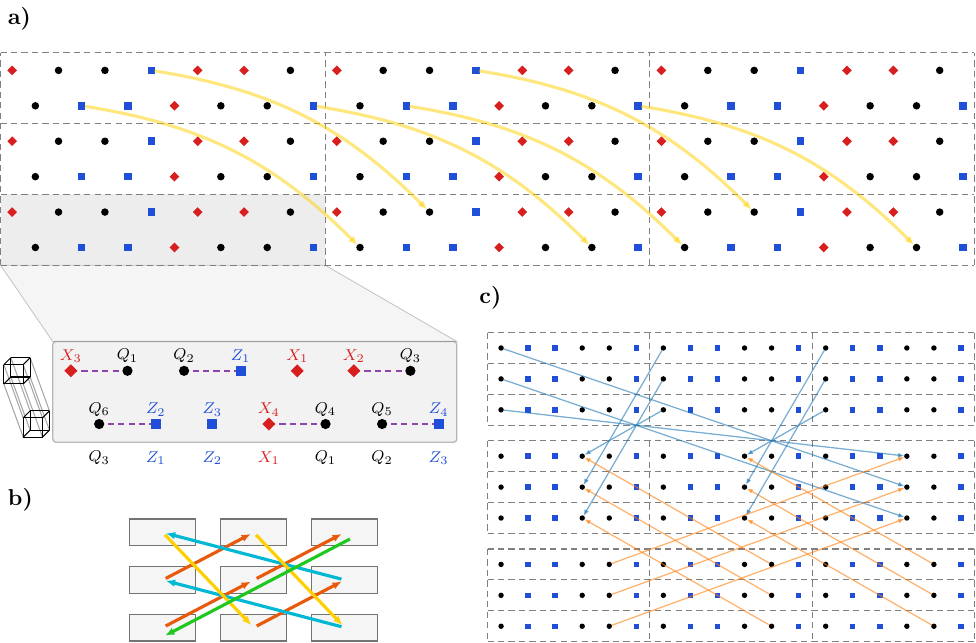}
\caption{\label{fig:compilation}\textbf{Qubit placement and routing for the neutral-atom implementation.}
\textbf{(a)}~Toric layout of a quadcycle code. Qubits are arranged into
    unit cells containing six data qubits, four $X$-check ancillae, and four
    $Z$-check ancillae. The corresponding tricycle cell, shown below, is
    identified with the lower row of the quadcycle cell. This shared layout
    allows the two codes to use the same routing patterns for their common
    stabilizer terms and aligns their data qubits for the homomorphic CNOT.
    The dashes inside the unit cell display which sectors are coupled for an $A$ term.
    \textbf{(b)}~Parallel atom rearrangement implementing the cyclic
    translation associated with the monomial $xy$. The translations
    are divided among at most four AODs so that all required two-qubit
    interactions can be realized in parallel. An interactive animation of the parallel atom rearrangement is included in the ancillary files with this manuscript.
    \textbf{(c)}~Routing for one layer of the transversal
    $\overline{\mathrm{CCZ}}$. Three tricycle blocks are stacked vertically,
    and atoms from the upper and lower blocks are moved into interaction range
    of the middle block. The three upper and three lower sector pairings are
    specified by six monomial translations (two are drawn), which are executed sequentially
    within each CCZ layer.}
\end{figure*}

Efficient magic state generation must be accompanied by a pipeline for consuming these states with low overhead. While a broader qLDPC architecture supporting general Clifford logic can accomplish this task \cite{webster2026pinnacle, cain_shors_2026}, we show that in our case it can be done efficiently with minimal further assumptions, in three phases: post-processing, Clifford fix-ups, and injection. Explicit constant-depth gadgets for each are given in Methods and in Sections~\ref{sec:Extracting CCZs}, \ref{sec:zero_out}, and~\ref{supp:teleporting-cz} of the Supplementary Information~\cite{SI} and are summarized in \cref{fig:gadgets}.

The hypergraph state \detstate is not itself a directly useful primitive for most algorithms. Measuring one logical qubit of each quadcycle block in the $Z$ basis instead yields two disjoint \cczstate states, which is optimal \cite{menon_magic_2025}, potentially dressed by a constant number of $\overline{\mathrm{CZ}}$ gates depending on the logical measurement outcomes. In Section~\ref{sec:zero_out} of the Supplementary Information~\cite{SI}, we show that these selective logical measurements can be performed with an efficient, constant-depth homomorphic measurement \cite{huang_homomorphic_2023} protocol using an auxiliary quadcycle code, exploiting the fact that half the quadcycle code remains in $\ket{\bar 0}$ and can serve as logical ancillae. The output is two disjoint \cczstate states dressed by additional $\overline{\mathrm{CZ}}$ gates, which can be fixed in parallel with feed-forward corrections -- in Section~\ref{supp:teleporting-cz} of the Supplementary Information~\cite{SI} we show that any such $\overline{\mathrm{CZ}}$ correction pattern can be implemented by efficiently preparing and consuming resource states transversally using an ancillary tricycle code. The cost of these logical measurements is not included in the space-time cost accounting in \cref{tab:comparison}, as more optimized schemes using high-rate and single-shot lattice surgery techniques may be possible, which we do not study here. Similarly, we do not include the costs of converting $\ket{T}$ states to $\cczstate$ states for the cultivation protocols in \cref{tab:comparison}.

Injection into another quadcycle block, or into a complementary code designed for Clifford logic \cite{webster2026pinnacle, cain_shors_2026}, proceeds by qLDPC code surgery, and our escape codes are well suited to it. The qubits holding the magic state admit disjoint logical $Z$ operators, which simplifies parallel surgery \cite{cohen2022low, cowtan_parallel_2026, zhang_2025_time, xu2025batched, zheng_high-rate_2025, gu_qgpu_2026}, and these operators can be measured fault-tolerantly---independently or in parallel---by gadgets with minimal qubit overhead that are highly structured and efficient to implement with neutral-atom reconfigurability. The resulting gadgets bridge naturally with those of the target code \cite{swaroop_universal_2025, cross_improved_2025}. Injection requires $d$ rounds of error correction \cite{chamberland2022temporal}, which recent single-shot surgery techniques may reduce further \cite{baspin2025fast, cowtan_fast_2025}.

Finally, consuming the magic state leaves an adaptively determined pattern of logical $\overline{CZ}$ corrections. These Clifford corrections could be delegated to the broader architecture \cite{chamberland2022temporal, cowtan_parallel_2026}. Alternatively, the factory itself can be modified to transversally produce a resource state for whichever correction is required, which is then teleported in the same way (Methods). This ensures that feed-forward corrections do not bottleneck the factory, without requiring additional Clifford-logic capabilities. In practice, however, one might use logical Pauli-product measurement gadgets constructed by generalized lattice surgery techniques to perform this correction efficiently \cite{cohen2022low, zheng_high-rate_2025, baspin2025fast, maskara2026efficient}.

\outline{Efficient implementation on neutral atom arrays}\label{sec:implementation}

We next consider the practical implementation of the proposed magic state factories, focusing on reconfigurable neutral-atom arrays.
The factory compiles into natural atom-rearrangement patterns by making use of three key features of the tricycle--quadcycle construction. Their common product structure gives a regular toric layout in which every monomial of the defining polynomials acts as a uniform cyclic translation of a whole sector, executed as parallel row and column moves (Fig.~\ref{fig:compilation}a,b). Because a matched pair shares the factors $A$, $B$, and $C$, the tricycle cell is identified with the lower row of the quadcycle cell, so the two codes share routing schedules for their common stabilizer terms and the homomorphic CNOT reduces to one layer of site-matched gates between overlaid sectors. The transversal $\overline{\mathrm{CCZ}}$ uses the same primitive: the three blocks are stacked vertically and paired by monomial translations, with the fan-out gadget allowing every gate to be applied in a single Rydberg pulse (Fig.~\ref{fig:compilation}c). Every operation in the factory therefore admits an explicit routing schedule whose duration is a constant number of array traversal times, at optimal circuit depth for both syndrome extraction and the transversal CCZ~\cite{tan26syndrome}. The explicit constructions are provided in the Methods section.

\outline{Discussion and Outlook}\label{sec:outlook}

We have introduced a qLDPC-native architecture in which transversal gates on tricycle codes generate magic in parallel and a depth-1 homomorphic CNOT escapes it into single-shot quadcycle codes, delivering \cczstate{} states at roughly an order of magnitude lower space-time cost than magic state cultivation.

One immediate improvement would be to replace full error detection with sliding-scale
postselection \cite{lee2025efficient, xie_simple_2026, wills_forced_2026}. In the current analysis, every run with a nontrivial detector outcome is rejected, which
suppresses errors strongly but reduces the acceptance fraction and therefore increases the expected space-time cost. Accepting a carefully chosen subset of nontrivial syndromes could
raise acceptance substantially at little cost in postselected fidelity. Quantifying this tradeoff for the tricycle--quadcycle pairs studied here may lead to further reductions in factory overhead. This could be complemented by simulations of our factories that exactly include the effect of the CCZ gates, which can be realized efficiently with recent methods \cite{xiao2026efficient, cercelescu2026polynomial}.

The short depth of the factory is also advantageous for neutral-atom implementations. Because each atom is initialized, used, and measured over a small number of circuit layers, there is limited time for atom loss or leakage to accumulate before the affected block is read out~\cite{baranes_leveraging_2026, bluvstein_architectural_2025}. If the final measurements can distinguish atom loss from ordinary computational-basis outcomes, the locations of lost atoms can be supplied to the decoder as additional information, further improving error-correction performance. Our numerical results use a most-likely error decoder, which provides a strong benchmark for the achievable logical performance but is too computationally expensive for real-time operation at scale. Recent machine learning decoders have achieved near-optimal performance with practical inference times for high-rate qLDPC codes \cite{blue_machine_2026, gu2026scalable}. Hierarchical decoders have likewise demonstrated similar performance with sufficient throughput~\cite{zhao2026towards, bhardwaj2026high, senior2026scalable, gu2026scalable}. These developments suggest that much of the performance reported here may be attainable with practical decoders trained directly on the factory circuits, potentially incorporating loss-resolving readout as an additional decoder input.

While the tricycle codes used in the factories presented in this work all have $k=3$ logical qubits, higher-rate variants of tricycle codes have been introduced in Ref. \cite{menon_magic_2025}, potentially leading to higher $\cczstate$ throughput when paired with appropriately matched quadcycle codes. However, the entangled hypergraph magic states produced have more complex structure than the $\detstate$ state, and extracting as many $\cczstate$ states as possible is a computationally challenging task. Moreover, the transversal $\mathrm{CCZ}$ circuits supported by these higher-rate tricycle codes are deeper. A thorough analysis of the fault-tolerance and resource costs of factories using higher-rate tricycle codes is an interesting direction for future work.

Finally, an important direction is to evaluate the factory within complete fault-tolerant computations rather than only through resource-state-level metrics. End-to-end estimates could use the logical $\ket{\overline{\mathrm{CCZ}}}$ states produced here as resources for large-scale algorithms, including Shor's algorithm \cite{shor_1994_algorithms} and simulations of the Fermi--Hubbard model \cite{hubbard_electron_1963, gutzwiller_1963_effect}. Such studies would include the production, storage, consumption, and scheduling of magic states together with the Clifford portion of the computation, and would determine how the reduced factory overhead translates into the total qubit count and runtime of practically relevant algorithms.

\begin{acknowledgments}
We thank Nazli Ugur Koyluoglu, Masha Elovenkova, Alexandra Geim, Nishad Maskara, Louis Golowich, Christine Li, Zhiyang He, and Kaavya Sahay for helpful discussions. We acknowledge support from Anthropic under the AI for Science program. This work was further supported by the National Science Foundation (CCF 2313084), the CUA Physics Frontiers Center (PHY-2317134) and NVQL (PHY-2410716), IARPA and the Army Research Office, under the Entangled Logical Qubits program (W911NF 23-2-0219), the DARPA MeasQuIT program (HR0011-24-9-0359), and QuEra Computing.

\textit{AI use disclosure---} The authors used models from Anthropic’s Claude family to assist with numerical work, including code search and simulation development. All code was reviewed, tested, and validated by the authors.
\end{acknowledgments}

\bibliography{refs}

\clearpage
\newpage
\section*{Methods}

\noindent\textbf{Code constructions and properties}\\ 

We use notation and terminology as in previous works that study quantum group-algebra codes of various types -- see Refs. \cite{lin2024quantum, bravyi_high-threshold_2024, menon_magic_2025, lin2025abelian, jacob2025singleshot, breuckmann_cups_2025}.

Quadcycle codes are CSS codes constructed from a finite abelian group
$G$ of order $n_G \coloneqq |G|$ and its binary group algebra
$R=\FG$. An element $a=\sum_{g\in G}a_g g\in R$, with $a_g\in\F$,
may be identified with its support, namely the subset of $G$ on which
$a_g=1$. Multiplication in $R$ is given by group convolution. If
$G=\mathbb{Z}_{l_1}\times\cdots\times\mathbb{Z}_{l_r}$, then, after
associating one formal variable with each cyclic factor,
$R\cong\F[x_1,\ldots,x_r]/(x_1^{l_1}-1,\ldots,x_r^{l_r}-1)$.
Elements of $R$ can therefore be represented as multivariate cyclic
polynomials.

For $a\in R$, let $\mathbb{B}(a)$ denote the $n_G\times n_G$ binary
matrix representing left multiplication by $a$ in the regular
representation of $G$. These matrices satisfy
$\mathbb{B}(a)\mathbb{B}(b)=\mathbb{B}(ab)$ and
$\mathbb{B}(a)^T=\mathbb{B}(a^*)$, where
$a^*\coloneqq\sum_{g\in G}a_g g^{-1}$ is the involution induced by
group inversion. Because $G$ is abelian, $R$ is commutative, and all
matrices of the form $\mathbb{B}(a)$ commute pairwise.

A quadcycle code is specified by four elements $a,b,c,d\in R$. It has $n = 6|G|$ physical qubits, grouped into six blocks of size $|G|$, together with $4|G|$ $X$-checks and $4|G|$ $Z$-checks. Their parity-check matrices are the block matrices
\begin{align}\label{eq: quadcycle checks}
    \mathcal{H}_X = \mathbb{B}\!\begin{pmatrix}
        c^* & 0   & b^* & d^* & 0   & 0   \\
        0   & c^* & a^* & 0   & d^* & 0   \\
        a^* & b^* & 0   & 0   & 0   & d^* \\
        0   & 0   & 0   & a^* & b^* & c^*
    \end{pmatrix}
    \\
        \mathcal{H}_Z = \mathbb{B}\!\begin{pmatrix}
        b & a & c & 0 & 0 & 0 \\
        d & 0 & 0 & c & 0 & a \\
        0 & d & 0 & 0 & c & b \\
        0 & 0 & d & b & a & 0
    \end{pmatrix}
\end{align}
where $\mathbb{B}$ is applied entrywise, each entry becoming its $|G| \times |G|$ circulant block and each $0$ a zero block. Each row is one stabilizer generator. Since $R$ is commutative, $\mathcal{H}_X \mathcal{H}_Z^T = 0$ over $\F$: the contributions to every entry of the product cancel in pairs, so the $X$- and $Z$-checks commute and define a valid CSS code. Every generator has weight $|a| + |b| + |c| + |d|$ minus the weight of one element, i.e. the sum of three of $|a|, |b|, |c|, |d|$; for the weight-$(2,2,2)$ choices of $(a, b, c)$ and weight-$4$ choice of $d$ used here, the checks have weights $6$ and $8$. Each of the six qubit blocks --- the six columns of \eqref{eq: quadcycle checks} --- couples through two of the four factors in $\mathcal{H}_X$ and through the complementary two in $\mathcal{H}_Z$, which indexes the blocks by the two-element subsets of $\{a,b,c,d\}$. Labeling each by its $\mathcal{H}_Z$ pair, in column order they are $\{b,d\}$, $\{a,d\}$, $\{c,d\}$, $\{b,c\}$, $\{a,c\}$, $\{a,b\}$.

We
write
$A\coloneqq\mathbb{B}(a)$,
$B\coloneqq\mathbb{B}(b)$,
$C\coloneqq\mathbb{B}(c)$, and
$D\coloneqq\mathbb{B}(d)$. Both matrices have dimensions $4n_G\times6n_G$. After the blocks are
expanded, each binary row specifies one stabilizer generator.

\paragraph{Metachecks and the four-dimensional chain complex.}
The listed stabilizer generators are redundant in both Pauli bases.
The corresponding metacheck matrices, in circulant-expanded form, are
\begin{equation}\label{eq:quadcycle-metachecks}
    \mathcal{M}_X
    =
    \begin{bmatrix}
        A^T & B^T & C^T & D^T
    \end{bmatrix},
    \qquad
    \mathcal{M}_Z
    =
    \begin{bmatrix}
        D & B & A & C
    \end{bmatrix}.
\end{equation}
Each metacheck matrix has dimensions $n_G\times4n_G$ and supplies
$n_G$ explicit, though not necessarily independent, relations among
the listed stabilizers. Direct block multiplication gives
$\mathcal{M}_X\mathcal{H}_X=0$ and
$\mathcal{M}_Z\mathcal{H}_Z=0$. Together with the CSS orthogonality
condition, these maps form the five-term chain complex
\begin{equation}\label{eq:quadcycle-complex}
    \F^{n_G}
    \xrightarrow{\ \mathcal{M}_Z^T\ }
    \F^{4n_G}
    \xrightarrow{\ \mathcal{H}_Z^T\ }
    \F^{6n_G}
    \xrightarrow{\ \mathcal{H}_X\ }
    \F^{4n_G}
    \xrightarrow{\ \mathcal{M}_X\ }
    \F^{n_G}.
\end{equation}
Consecutive maps compose to zero:
$\mathcal{H}_X\mathcal{H}_Z^T=0$,
$\mathcal{M}_X\mathcal{H}_X=0$, and
$\mathcal{H}_Z^T\mathcal{M}_Z^T
=(\mathcal{M}_Z\mathcal{H}_Z)^T=0$.
Reading outward from the center of
\cref{eq:quadcycle-complex}, the physical qubits occupy
$\F^{6n_G}$; the adjacent copies of $\F^{4n_G}$ index the
$Z$- and $X$-checks; and the outer copies of $\F^{n_G}$ index the
$Z$- and $X$-metachecks.

For an ideal measured $X$-syndrome $s_X\in\F^{4n_G}$, physical
consistency requires $\mathcal{M}_X s_X=0$. Similarly, an ideal
$Z$-syndrome $s_Z\in\F^{4n_G}$ satisfies
$\mathcal{M}_Z s_Z=0$. If the measured $X$-syndrome is
$\widetilde{s}_X=s_X+\eta_X$, where $\eta_X$ is a
syndrome-measurement error, its metasyndrome is
$\mathcal{M}_X\widetilde{s}_X=\mathcal{M}_X\eta_X$. An analogous
relation holds in the $Z$ basis. The metachecks therefore provide
spatial consistency conditions that can reveal and help decode
measurement faults using a single syndrome-extraction round. Metachecks, when combined with soundness \cite{campbell2019theory}, suffice for single-shot state preparation of a code block: the ability to fault-tolerantly learn the eigenvalues of the stabilizers with only a single round of syndrome measurement. While the metachecks provide some protection against measurement errors, some low weight measurement errors will still remain undetected by the metachecks. Soundness ensures that these can effectively be treated as low weight errors acting on the data qubits. Ref. \cite{menon_magic_2025} details a proof that several of the quadcycle codes we consider here possess soundness in both bases.

This two-sided redundancy is the algebraic structure exploited by our
single-shot state-preparation and error-correction procedures in both
Pauli bases~\cite{bombin2015single, brown2016fault}. The existence of
metachecks alone, however, does not establish useful single-shot fault
tolerance: the resulting suppression also depends on the metacheck
distances, syndrome-extraction circuits, and decoder. We assess these
properties for the finite-size codes used here through the
fault-distance analysis and circuit-level simulations described
below.

\paragraph{Relation to tricycle codes.}

The elements $a,b,c$ are precisely the three factors defining the
paired tricycle code, while $d$ introduces the fourth cyclic
direction. At the level of chain complexes, the quadcycle code can be
constructed as the \textit{mapping cone} of the chain map induced by
multiplication by $d$ on the tricycle complex -- see Section~\ref{app:tcpairs} of the Supplementary Information~\cite{SI}. Equivalently, it is the
asymmetric balanced product of the tricycle complex with the
two-term chain complex
$R\xrightarrow{\cdot d}R$.

As detailed in Section~\ref{app:tcpairs} of the Supplementary Information~\cite{SI}, there is a canonical embedding of the
tricycle code complex into the quadcycle code complex. At the physical level,
the corresponding transversal homomorphic $\mathrm{CNOT}$ used for the dimension
jump acts on three of the six quadcycle qubit sectors. It can be shown that
$k_{\textnormal{4D}}\leq2k_{\textnormal{3D}}$. All tricycle--quadcycle pairs
used in this work saturate this upper bound,
$k_{\textnormal{4D}}=2k_{\textnormal{3D}}$. The mapping cone diagram, the
induced action on homology, and the logical map implemented by the
dimension jump are detailed in Section~\ref{app:tcpairs} of the Supplementary Information \cite{SI}.

\begin{table*}[t]
    \centering
    \renewcommand{\arraystretch}{1.2}
    \setlength{\tabcolsep}{5pt}

    \resizebox{\textwidth}{!}{%
    \begin{tabular}{@{}c l l l c l c@{}}
        \toprule
        \thead{$(l,m,n)$}
        & $a$ & $b$ & $c$
        & \thead{Tricycle code\\$[[n,k,(d_X,d_Z)]]$}
        & $d$
        & \thead{Quadcycle code\\$[[n,k,d]]$} \\
        \midrule
        $3,3$
        & $1+x$ & $1+y$ & $x+y$
        & $[[27,3,(6,3)]]$
        & $1+xy$
        & $[[54,6,6]]$ \\

        $3,4$
        & $1+xy$ & $1+y$ & $x+y$
        & $[[36,3,(8,4)]]$
        & $1+x+x^2+y^2$
        & $[[72,6,8]]$ \\

        $3,9$
        & $1+y$ & $x+y^2$ & $x^2+y^2$
        & $[[81,3,(15,5)]]$
        & $1+x+y^3+y^6$
        & $[[162,6,14]]$ \\

        $2,19$
        & $x+y^2$ & $x+y^3$ & $x+y^5$
        & $[[114,3,(19,6)]]$
        & $1+xy+y^2+y^9$
        & $[[228,6,17]]$ \\

        $2,5,7$
        & $1+yz$ & $x+z^2$ & $xy+z^3$
        & $[[210,3,(28,7)]]$
        & $1+xy+xy^2+y^3z$
        & $[[420,6,25]]$ \\
        \bottomrule
    \end{tabular}%
    }

    \caption{Polynomials defining the tricycle--quadcycle code pairs and
    their parameters. Codes are constructed over the group algebra
    $\mathbb{F}_2[\mathbb{Z}_l\times\mathbb{Z}_m\times\mathbb{Z}_n]$,
    with the cyclic group orders given in the first column. The three
    polynomials $a,b,c$ define the 3D tricycle code; the fourth
    polynomial $d$ specifies the paired 4D quadcycle code, for which
    $d_X=d_Z$. Quadcycle distances are determined exactly for
    $n\leq72$ using MaxSAT and estimated for larger blocklengths using
    Stern's algorithm, cross-checked with Monte Carlo sampling.}
    \label{tab:codes}
\end{table*}

\paragraph{Automorphisms and transversal Clifford operations.}

For every $g\in G$, simultaneous translation by $g$ permutes the
qubits within each block through the regular representation
$\mathbb{B}(g)$. Because these translations commute with $A$, $B$,
$C$, and $D$, they preserve the stabilizer group and define
depth-1 code automorphisms. Their induced logical actions are used
in the post-processing of the escaped hypergraph state.

The quadcycle complex also possesses an $X$--$Z$ duality. A transversal Hadamard on all $n$ physical qubits followed by an order-two permutation gives a Hadamard-SWAP, where the permutation swaps the column blocks in the first half of $\mathcal{H}_X/\mathcal{H}_Z$ with those in the second. Writing a logical $Z$ operator by its support on each half as a nontrivial cycle $\mathbb{B}(z_1, z_2) \in \ker \mathcal{H}_X$,
\begin{equation}
\mathbb{B}\begin{pmatrix}
        z_1 \\z_2
    \end{pmatrix} \xmapsto{H\text{-}\mathrm{SWAP}} \mathbb{B}\begin{pmatrix}
        (z_2^*)^T \\
        (z_1^*)^T
    \end{pmatrix} \in \ker \mathcal{H}_Z \setminus \im \mathcal{H}_X^T,
\end{equation}
the map carries low-weight $Z$ operators to low-weight $X$ operators, and therefore $d_X = d_Z$ \cite{bravyi_high-threshold_2024}. The same structure produces a transversal $\overline{CZ}$ gate between two distinct blocks. The duality also yields a fold-transversal $\overline{CZ}$--$\bar S$ gate \cite{breuckmann_fold-transversal_2024}: the $\overline{CZ}$ acts between interchanged pairs of logical qubits, and $\bar S$ on the fixed points of the permutation. Quadcycle codes, like bicycle codes, have no such fixed points, so the $\bar S$ component is absent. Recovering a $\bar S$ gate on bicycle codes requires imposing additional self-dualities~\cite{liang_self-dual_2026}, and we expect similar conditions to result in self-dual quadcycle codes.

\noindent\textbf{Circuit-level noise model}\\ We simulate every tricycle--quadcycle pair in Stim \cite{gidney_stim_2021} under the standard circuit-level depolarizing model (SD6): idling and single-qubit gates are followed by a single-qubit depolarizing channel, two-qubit gates by a two-qubit depolarizing channel, and resets and measurements by a single-qubit flip channel, all at strength $p=10^{-3}$. This is the model used in the cultivation studies we compare against \cite{gidney_magic_2024, sahay2026fold}, so the comparison in \cref{tab:comparison} is made on equal footing. Each physical CCZ triple carries a three-qubit depolarizing channel, in the uniform non-identity-Pauli convention, of strength $p_{\mathrm{CCZ}}=2p$. Because we simulate only a single tricycle--quadcycle pair, the other two legs of every triple belong to the remaining factory copies and are traced out. Since the transversal $\overline{\mathrm{CCZ}}$ is depth one, each simulated qubit takes part in exactly one triple, so the reduced channel is exactly the single-qubit marginal of the three-qubit channel, namely depolarizing noise of strength $\tfrac{16}{21}\,p_{\mathrm{CCZ}} = \tfrac{32}{21}\,p$. Tracing out the other legs erases cross-block correlations in the CCZ channel, which we expect to make both the logical error rate and the acceptance fraction conservative. Acceptance fractions are reported per block. The full factory requires all three pairs to be accepted, so its yield is the cube of the quoted value, which is what enters the space-time volumes of \cref{tab:comparison}.

\begin{figure}[ht!]
    \centering
    \includegraphics[width=\linewidth]{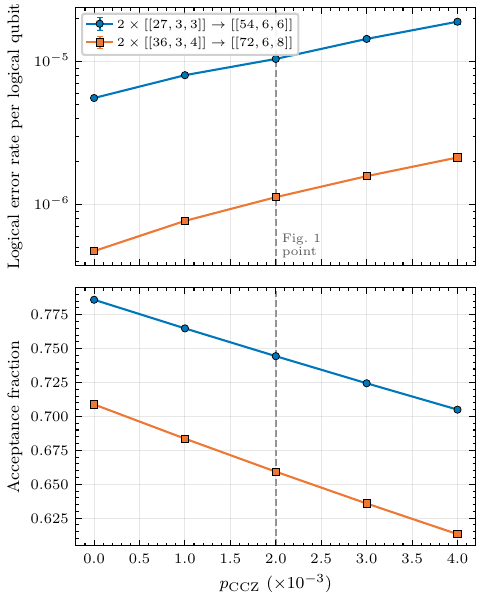}
    \caption{\textbf{Dependence of the factory on the CCZ error rate.} Post-selected logical error rate per output logical qubit (top) and single-block acceptance fraction (bottom) for the $2\times[[27,3,3]]\rightarrow[[54,6,6]]$ and $2\times[[36,3,4]]\rightarrow[[72,6,8]]$ factories, as a function of the CCZ error rate $p_{\mathrm{CCZ}}$, with all other operations held at depolarizing noise $p=10^{-3}$. The simulated CCZ noise is the single-qubit marginal of a three-qubit depolarizing channel; this marginal is exact because each qubit participates in only one gate in the depth-1 transversal $\overline{\mathrm{CCZ}}$ circuit. Error bars are binomial.}
\label{fig:ccz_sweep}
\end{figure}

\begin{figure*}[t!]
    \centering
    \includegraphics[width=\linewidth]{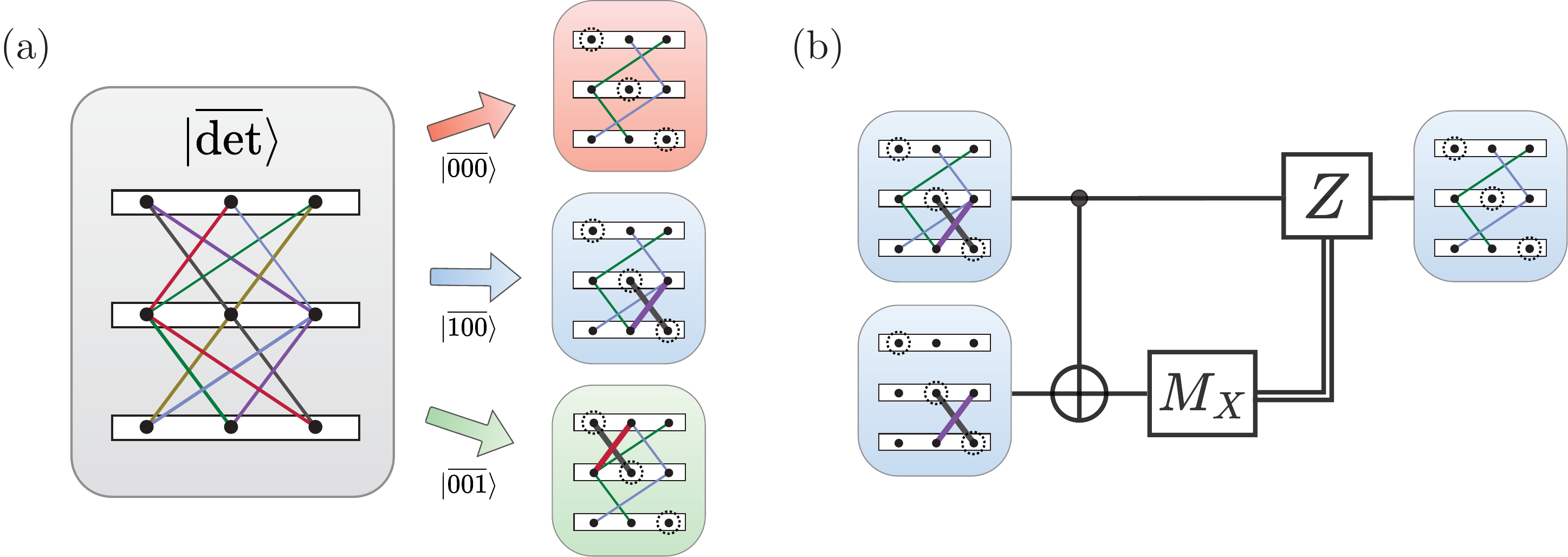}
    \caption{\textbf{Heralding disjoint \cczstate states from \detstate}. Each illustrated block is a hypergraph magic state, which is described by a diagonal circuit acting on logical qubits initialized in $\ket{\overline +}$, except for the circled logical qubits, which are in a computational basis state. Edges in the graph denote a CZ gate, whereas hyperedges denote CCZ gates.
    (a) Measuring $\bar q_{11}, \ \bar q_{22}$, and $\bar{q}_{33}$ eliminates four of the six CCZ gates from the \detstate,  and depending on the measurement outcome of $\ket{\bar{q}_{11} \bar{q}_{22} \bar{q}_{33}}$, annotated by the arrows, the resulting state is equivalent to disjoint \cczstate states up to some logical CZ gates. Each logical qubit measured in $\ket{\bar 1}$ contributes two CZ gates. (b) Any CZ-dressed resulting state can be heralded into disjoint $\cczstate$ states by teleporting in a correction.
    }
    \label{fig:gauging_and_fixups}
\end{figure*}

\noindent\textbf{A four-dimensional receiving code}\\ Three complementary features make the factory efficient. First, both code families are single-shot in the relevant bases, which keeps the protocol shallow. This generates the magic states rapidly and, just as importantly, confines postselection to a small space-time volume, making full error detection practical. Second, the dimension jump ends with a transversal measurement of the tricycle block, which exposes the full redundancy of the code and is effectively immune to measurement errors. In a generic mid-circuit postselection step, transversally measuring the data would collapse the computational state. Here the measurement is already the teleportation primitive that transfers the logical information into a separate quadcycle block, so its outcomes can be used for strong postselection without disturbing the escaped magic state. Third, the choice of a four-dimensional receiving code is essential. Two-dimensional codes can also be homomorphically related to the tricycle code, but they are not single-shot, so their preparation would require additional syndrome-extraction rounds and deepen the factory. Their smaller distances would further require subsequent code growth, likely with additional postselection, to preserve the fidelity obtained during the escape, and their smaller logical dimensions would prevent a single block from receiving all the magic states a tricycle code generates. A quadcycle block instead combines single-shot preparation with sufficient distance to preserve the postselected fidelity without further growth, and sufficient logical dimension to accept every generated magic state in one transfer.

\noindent\textbf{Space-time cost accounting}\\ 

The space-time cost figures of merit in \cref{tab:comparison} are computed using the formula

\begin{equation}
    \text{cost} = \frac{ \text{raw space-time volume}}{\text{acceptance fraction} \times \text{\# magic states}}
\end{equation}

For the tricycle-quadcycle factory, the circuit depth is the quadcycle syndrome extraction cycle depth plus four layers: one for the $\mathrm{CCZ}$ layer, one for the dimension-jump $\mathrm{CNOT}$, and two for the transversal $X$ measurement of the tricycle code, including the Hadamard layer (see Section~\ref{supp:cost-breakdown} of the Supplementary Information~\cite{SI}). The tricycle code and quadcycle code state-preparation can be parallelized. The number of qubits includes ancilla qubits to measure checks as well as the fan-out gadget ancillae to reduce the depth of the physical $\mathrm{CCZ}$ circuit. The acceptance fractions are computed as the cube of the single-block acceptance fractions reported in \cref{fig:dimjump}(c) -- this is conservative, as correlated detector events across all three code blocks could lead to slightly higher acceptance fractions at fixed logical error rates.

For the cultivation protocols, we use the results of Ref. \cite{mehta_practical_2026} which studies cultivation with mid-circuit postselection at the escape stage. This treats cultivation on a realistic, equal footing as only information available from the middle of the cultivation circuit is used without relying on information from an artificial transversal measurement in the escape stage of cultivation. The numbers in \cref{tab:comparison} result from optimizing the number of escape rounds and escape code distance for cultivation protocols to achieve a fixed target logical error rate set by our factories for an equal comparison. For cultivation, the circuit depths are then computed as the sum of $\text{\# QEC rounds} \times \text{cycle depth}$ for each of the different stages and codes for both the color code cultivation protocol of Ref. \cite{gidney_magic_2024} and the fold-transversal surface code protocol of Ref. \cite{sahay2026fold}. We refer the reader to Ref. \cite{mehta_practical_2026} for more details on the basis of this comparison to cultivation and the impact of practical postselection on its resource requirements. One detail to note, however, is that Ref. \cite{mehta_practical_2026} simulated color code cultivation with the escape stage proposed in \cite{gidney_magic_2024}, whereas Ref. \cite{sahay2026fold}'s estimates of the protocol use a more compact escape stage. Nonetheless, upon our simulation of this more compact escape stage following \cite{mehta_practical_2026}, we found the two to be comparable.

\noindent\textbf{Post-processing, injection and feed-forward gadgets}\\ 

\emph{Extracting disjoint \cczstate states.} The factory delivers the nine-qubit hypergraph state \detstate, which is not a primitive for most algorithms. Measuring one logical qubit of each quadcycle block in the $Z$ basis yields two disjoint \cczstate states, the optimal yield for this state \cite{menon_magic_2025}. We realize these selective logical measurements with an auxiliary quadcycle code, combining the transversal Hadamard--SWAP of the quadcycle code with different choices of constant-depth homomorphic CNOT circuits, extending beyond the dimension-jump CNOT used in the factory itself, so that different patterns of logical operators can be addressed. The addressability achievable this way is restricted, but because half of the quadcycle logical qubits remain in $\ket{\bar 0}$ they can be treated as logical ancillae, and the required measurements are then implementable in constant depth. The fault distance depends on the chosen homomorphic measurement gadget and can be smaller than the data-code distance (Section~\ref{sec:zero_out} of the Supplementary Information~\cite{SI}). In that case, repeated measurements may be needed to reach the full circuit distance, although error detection can reduce the average number required \cite{akahoshi2025runtime}. Depending on the outcomes, the resulting pair of \cczstate states is dressed by additional CZ gates, which are corrected in parallel with the feed-forward corrections applied after consumption. The same techniques allow denser storage: half of the logical qubits in \cref{fig:dimjump} are unused, and two quadcycle magic states can be consolidated onto a single block in a single-shot fashion. Full constructions are given in Section~\ref{sec:Extracting CCZs} of the Supplementary Information~\cite{SI}.

\emph{Injection by code surgery.} Injecting the magic state into another quadcycle block, or into a complementary code designed for general Clifford logic \cite{webster2026pinnacle, cain_shors_2026}, is typically accomplished by qLDPC code surgery. The escape codes are favorable in this setting: the logical $Z$ operators of the qubits carrying the magic state are disjoint, which simplifies parallel surgery \cite{cohen2022low, xu2025batched, zheng_high-rate_2025}. These operators can be measured fault-tolerantly, independently or in parallel, by gadgets with minimal qubit overhead whose regular structure is well matched to neutral-atom reconfigurability. The gadgets can then be bridged with surgery gadgets for the target code to give low-overhead teleportation \cite{swaroop_universal_2025, cross_improved_2025}. This step involves $d$ rounds of repeated error correction \cite{chamberland2022temporal}, though recent single-shot code surgery may reduce that overhead \cite{baspin2025fast, cowtan_fast_2025}.

\emph{Feed-forward Clifford corrections.} Consuming a magic state leaves an adaptively determined pattern of logical CZ gates. These are Clifford, so their implementation could be reserved for the broader architecture, for instance via code surgery \cite{chamberland2022temporal, cowtan_parallel_2026}. Alternatively, the factory itself can be modified to produce a resource state for whichever correction is required, which is then teleported in the same manner. Because no transversal non-Clifford gate is involved, this variant is simpler and decodes more effectively \cite{cain_correlated_2024}. Teleporting the correction introduces a further adaptive fix-up, but it is a Pauli operator and can be tracked classically. The cost of the feed-forward correction is therefore at most that of the factory itself.

\noindent\textbf{Neutral-atom compilation}\\
We assume static traps generated by a spatial light modulator together with movable two-dimensional grids generated by crossed acousto-optic deflectors (AODs). During an AOD move, selected atoms are transported by shifting rows and columns while preserving their relative ordering. Detailed hardware assumptions are given in Section~\ref{sec:appendix_compilation_setup} of the Supplementary Information~\cite{SI}.

\emph{Layout and routing primitive.} The product structure of the codes gives a regular toric layout for both data and check qubits. For $G=\mathbb{Z}_l\times\mathbb{Z}_m$ the array is an $l\times m$ grid of unit cells, each holding one qubit from every data and check sector, so a quadcycle cell contains six data qubits, four $X$-check ancillae, and four $Z$-check ancillae (Fig.~\ref{fig:compilation}a). Additional cyclic factors are flattened into the plane as described in Section~\ref{sec:appendix_compilation_misc} of the Supplementary Information~\cite{SI}. Each monomial in the defining polynomials acts as a uniform cyclic translation of an entire sector, implemented by parallel row and column movements. Where periodic wraparound makes trajectories cross, the move is partitioned among at most four AODs (Fig.~\ref{fig:compilation}b). The quasi-cyclic structure of the codes thus translates directly into regular, highly parallel atom motion.

\emph{Shared layout for a matched pair.} Because a tricycle code and its quadcycle partner share the factors $A$, $B$, and $C$, the tricycle unit cell is identified with the lower row of the quadcycle unit cell. Their common stabilizer terms are then measured with the same AOD routing patterns, so syndrome extraction on the two blocks can be coordinated and run in parallel. The same alignment makes the escape operation simple: the matched data sectors are overlaid and the homomorphic CNOT is one layer of site-matched physical CNOTs, with the remaining quadcycle sectors idle. The algebraic embedding underlying the dimension jump is therefore preserved directly in the array geometry.

\emph{Transversal $\mathrm{CCZ}$.} The three tricycle blocks are stacked vertically, and atoms from the upper and lower blocks are moved into interaction range of the middle block, which acts as a fixed anchor and never moves (Fig.~\ref{fig:compilation}c). Six monomial translations pair the sectors of the upper and lower blocks with those of the middle block. Without the fan-out gadget these six orderings split into two layers that reuse the same atoms and therefore require two Rydberg pulses. With the gadget's ancilla patch in place the two layers are realized simultaneously, so a single pulse applies every logical $\overline{\mathrm{CCZ}}$. Full routing schedules are given in Section~\ref{sec:appendix_compilation_ccz} of the Supplementary Information~\cite{SI}.

\emph{Duration.} Let $\tau$ be the time to traverse the code region diagonally and $w$ the total weight of the defining polynomials. Syndrome extraction costs at most $2w\tau$, since each monomial term is one movement of duration at most $\tau$ followed by a return to the standard layout. The six monomial translations of the CCZ cost at most $12\tau$ including returns, and the homomorphic CNOT at most one further $\tau$ to overlay the matched sectors. For the codes considered here, with $w=10$, the complete factory is bounded by $2w\tau+12\tau+\tau=33\tau$. Space--time tradeoffs in the number of available AODs are discussed in Sections~\ref{sec:appendix_compilation_syndrome} and~\ref{sec:appendix_compilation_ccz} of the Supplementary Information~\cite{SI}.

\clearpage
\onecolumngrid

\setcounter{section}{0}
\setcounter{subsection}{0}
\setcounter{equation}{0}
\setcounter{figure}{0}
\setcounter{table}{0}
\setcounter{theorem}{0}
\setcounter{definition}{0}

\renewcommand{\thefigure}{S\arabic{figure}}
\renewcommand{\thedefinition}{S\arabic{definition}}

\renewcommand{\theHsection}{supp.\arabic{section}}
\renewcommand{\theHsubsection}{supp.\arabic{section}.\arabic{subsection}}
\renewcommand{\theHequation}{supp.\arabic{equation}}
\renewcommand{\theHfigure}{supp.\arabic{figure}}
\renewcommand{\theHtable}{supp.\arabic{table}}

\begin{center}
    {\large\bfseries
    Supplementary Information for ``Single-shot magic state factories for quantum LDPC
    codes''\par}
\end{center}

\vspace{1.5em}

\begin{center}
    \textbf{\MakeUppercase{Contents}}
\end{center}

\vspace{0.5em}

\begingroup
\setlength{\parindent}{0pt}
\newcommand{\sitocline}[2]{%
    \makebox[3em][r]{\hyperref[#1]{\ref*{#1}.}}\hspace{0.6em}%
    \hyperref[#1]{#2}\nobreak\dotfill\nobreak\pageref{#1}\par
    \vspace{0.35em}%
}
\sitocline{app:tcpairs}{Tricycle--quadcycle pairs}
\sitocline{sec: Transversal gates}{Clustered logical bases and transversal gates}
\sitocline{sec:search}{Code search}
\sitocline{sec:DJ CNOT}{Dimension jumping between 3D and 4D}
\sitocline{app:concat}{Restoring circuit distance with the fan-out gadget}
\sitocline{sec:numerics-methods}{Further details of numerical simulations}

\sitocline{sec: Tricycle magic}{Review of tricycle magic}
\sitocline{sec:Extracting CCZs}{Extracting disjoint \cczstate{} states}
\sitocline{sec:zero_out}{Homomorphic measurement gadgets for \cczstate{} extraction}
\sitocline{supp:teleporting-cz}{Teleporting $\overline{CZ}$ corrections}
\sitocline{supp:cost-breakdown}{Cost breakdown of magic factories}
\sitocline{sec: 2D-3D comparison}{Comparison to 2D--3D code-switching protocols}
\sitocline{sec:code_growth}{Direct code-growth from 3D to 4D}
\sitocline{sec:appendix_compilation_setup}{Neutral atom array architecture overview}
\sitocline{sec:appendix_compilation_syndrome}{Implementation of syndrome extraction}
\sitocline{sec:appendix_compilation_ccz}{Implementation of transversal CCZ}
\sitocline{sec:appendix_compilation_misc}{Trade-off and generalization of implementation}
\endgroup

\clearpage

\section{Tricycle--quadcycle pairs}
\label{app:tcpairs}
Tricycle codes are the three-fold balanced product equivalent of quadcycle codes \cite{jacob2025singleshot, menon_magic_2025}. For a tricycle code $T_\bullet$, an algebraically related quadcycle code $Q_\bullet$ can be constructed as the asymmetric balanced product of $T_\bullet$ with the two-term complex $R \xrightarrow{d} R$. Equivalently, $Q_\bullet$ is the mapping cone of the chain map $\tilde d$ shown in the following diagram.
\begin{equation}\label{eq: mapping cone}
\begin{tikzcd}
    \underbrace{\F^{|G|}}_{\text{$Z$ metachecks}}
        \arrow[r, "M_Z^T"] \arrow[dr, "\mathbb{B}(d)"]
    & \F^{3|G|} \arrow[r, "H_Z^T"] \arrow[dr, "\mathbb{B}(d) \otimes \text{id}_3"]
    & \F^{3|G|} \arrow[r, "H_X"] \arrow[dr, "\mathbb{B}(d) \otimes \text{id}_3"]
    & \F^{|G|} \arrow[dr, "\mathbb{B}(d)"] \\
    & \underbrace{\F^{|G|}}_{\text{$Z$ checks}} \arrow[r, "M_Z^T"]
    & \underbrace{\F^{3|G|}}_{\text{qubits}} \arrow[r, "H_Z^T"]
    & \underbrace{\F^{3|G|}}_{\text{$X$ checks}} \arrow[r, "H_X"]
    & \underbrace{\F^{|G|}}_{\text{$X$ metachecks}}
\end{tikzcd}
\end{equation}
The quadcycle code can also be expressed as a mapping cone over any of the other three factors, with the remaining factors defining the corresponding tricycle code. In this way, there are really four tricycle codes that relate naturally to a given quadcycle code, though we focus here on a single pair. Note that four-dimensional hypergraph product and lifted product codes can also be related to three-dimensional subsystems in a similar manner. A similar correspondence was independently and concurrently proposed in Ref. \cite{zhang2026coupled}.

This presentation immediately upper bounds the number of logical qubits that the quadcycle code can support. From the snake lemma \cite{weibel1994introduction},
\begin{equation}\label{eq: LES}
    \begin{tikzcd}[column sep = 2 em]
        \cdots \arrow[r] & H_n(T_\bullet) \arrow[r, "d_*"] & H_n(T_\bullet) \arrow[r, "p"] & H_n(Q_\bullet) \arrow[r, "q"] & H_{n-1}(T_\bullet) \arrow[r, "d_*"] & H_{n-1}(T_\bullet) \arrow[r] & \cdots
    \end{tikzcd}
\end{equation}
From rank-nullity,
\begin{equation}
    \dim H_{n}(Q_\bullet) = \dim \text{coker}(d_*|_{H_n}) + \dim \ker (d_*|_{H_{n-1}}) \leq 2k
\end{equation}
where the inequality follows from the self-duality of $T_\bullet$ as a chain complex (not as a code). The quadcycle codes we derive from tricycle codes saturate this upper bound on logical qubits.

\section{Clustered logical bases and transversal gates}\label{sec: Transversal gates}
Quadcycle codes admit several code automorphisms and transversal Clifford gates. The CSS structure alone gives a transversal CNOT between two copies of the code, acting on all $k$ logical qubits at once. The circulant construction adds the group translations: multiplication by any $g\in G$ permutes the qubits of each block by $\mathbb{B}(g)$, which commutes with every check, so $G$ acts by depth-1 automorphisms that preserve $\mathcal{H}_X$ and $\mathcal{H}_Z$. The logical action of these gates depends on the logical basis. Many codes from our search admit a \emph{clustered} basis \cite{gu_qgpu_2026} on which the action is transparent: for example, a $[[456,24,16]]$ code over $G=\mathbb{Z}_4\times\mathbb{Z}_{19}$, a $[[300,12,18]]$ code over $G=\mathbb{Z}_2\times\mathbb{Z}_{25}$, as well as the $[[162, 6, 14]]$ quadcycle escape code in \cref{tab:codes}. A clustered basis partitions the physical qubits into equal clusters, supports every logical operator on whole clusters, and pairs each $\bar X_i$ with one anticommuting $\bar Z_i$ on the same cluster, disjoint from the rest. For the first two examples, the clusters are rows: arrange each block's $|G|$ qubits as a $\mathbb{Z}_a\times\mathbb{Z}_b$ grid, and every logical operator is a full row --- $\bar X_i$ the $X$-string and $\bar Z_i$ the $Z$-string along it, of odd length $b=|G|/a$ ($19$ and $25$). The $k=6a$ rows tile all $N$ qubits, $a$ to a block, so each logical qubit is labeled by a pair $(\beta,j)$: its block $\beta\in\{\{a,b\},\{a,c\},\dots\}$ and its row $j\in\mathbb{Z}_a$. In this basis, translation by $x \in \mathbb{Z}_a$ permutes logical qubits inside each block, $(\beta,j)\mapsto(\beta,\,j+1\bmod a)$ with $\beta$ the block and $j$ the operator within it. On the other hand, translation by $y \in \mathbb{Z}_b$ has a logically trivial action.

A transversal Hadamard on all $n$ physical qubits followed by an order-two permutation gives a Hadamard-SWAP, where the permutation swaps the column blocks in the first half of $\mathcal{H}_X/\mathcal{H}_Z$ with those in the second. Writing a logical $Z$ operator by its support on each half as a nontrivial cycle $\mathbb{B}(z_1, z_2) \in \ker \mathcal{H}_X$,
\begin{equation}
    \ker \mathcal{H}_X \setminus \im \mathcal{H}_Z^T \ni \mathbb{B}\begin{pmatrix}
        z_1 \\z_2
    \end{pmatrix} \xmapsto{H\text{-}\mathrm{SWAP}} \mathbb{B}\begin{pmatrix}
        (z_2^*)^T \\
        (z_1^*)^T
    \end{pmatrix} \in \ker \mathcal{H}_Z \setminus \im \mathcal{H}_X^T,
\end{equation}
the map carries low-weight $Z$ operators to low-weight $X$ operators, and therefore $d_X = d_Z$ \cite{bravyi_high-threshold_2024}. The same structure produces a transversal $\overline{CZ}$ gate between two distinct blocks. The duality also yields a fold-transversal $\overline{CZ}$--$\bar S$ gate \cite{breuckmann_fold-transversal_2024}: the $\overline{CZ}$ acts between interchanged pairs of logical qubits, and $\bar S$ on the fixed points of the permutation. Quadcycle codes, like bicycle codes, have no such fixed points, so the $\bar S$ component is absent. Recovering a $\bar S$ gate on bicycle codes requires imposing additional self-dualities~\cite{liang_self-dual_2026}, and we expect similar conditions to result in self-dual quadcycle codes.

\section{Code search}\label{sec:search}

A quadcycle code is specified by a finite abelian
group $G$ and four elements $a,b,c,d \in R=\FG$. It has blocklength
$N=6|G|$. For each choice of group and factor weights, we sample the four
factors at random and retain only codes satisfying prescribed target
parameters: a logical dimension threshold $K \ge K_{\min}$ and a distance
threshold $d \ge d_{\min}$. In the searches reported here,
\begin{equation}
    K_{\min} = 6\lfloor \lceil N/40\rceil/6\rfloor,
\end{equation}
which imposes a rate floor near $2.5\%$. The distance threshold
$d_{\min}$ is chosen separately for each search window.

The aim of the search procedure is not to exactly compute the full parameters
of every candidate code, which would be prohibitively expensive. Instead, the
procedure is organized as a sequence of increasingly costly rejection tests.
This ordering is chosen so that inexpensive algebraic checks eliminate as many
candidates as possible before any heuristic distance computation is attempted.
Early tests certify that a candidate fails one of the target conditions, while
distance estimation is reserved for the small fraction of candidates that
survive these algebraic filters. In this way, computational effort is
concentrated on candidates that have not already been ruled out by exact
invariants. This tiered structure allows us to sample a significantly larger
portion of the code ensemble for a fixed computational budget.

\begin{figure}[h]
\centering
\includegraphics[width=0.92\textwidth]{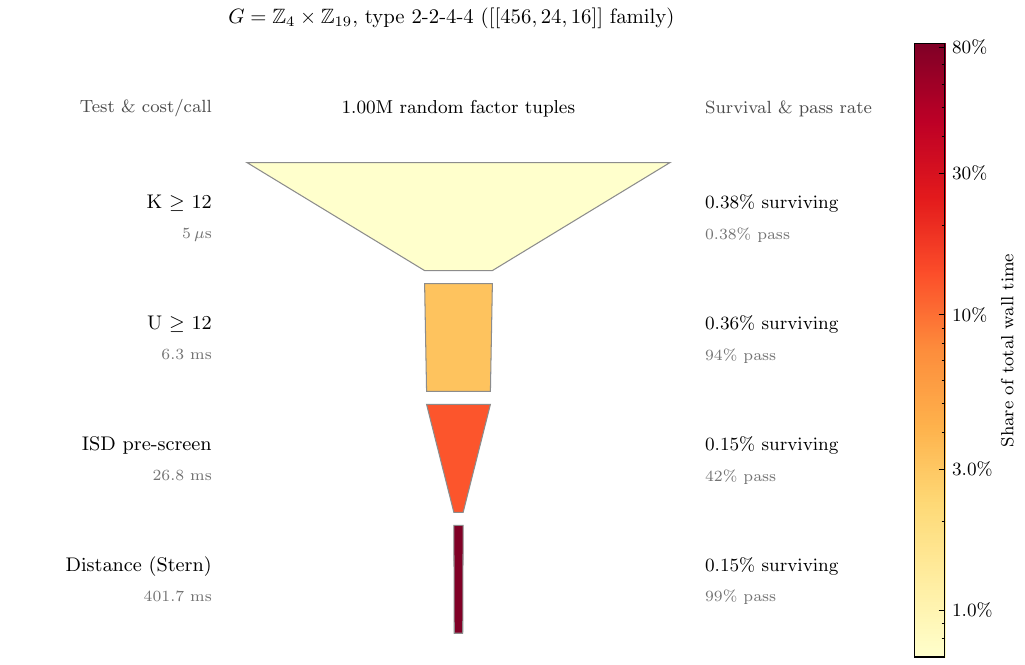}
\caption{Search funnel for the $[[456,24,16]]$ family
($G=\mathbb{Z}_4\times\mathbb{Z}_{19}$, type 2-2-4-4).
Band width is proportional to the number of surviving candidates on a
logarithmic scale, and shading shows the share of total wall time.
Per-candidate cost is shown at left and survivors at right. The
logical-count test is the dominant filter, while distance estimation is
reached by few candidates but dominates the runtime.}
\label{fig:search-funnel}
\end{figure}

The first test checks the logical dimension. We do not compute $K$ by
row-reducing the $N$-column parity-check matrices. Since $G$ is abelian, the
Chinese remainder theorem decomposes
\begin{equation}
\FG \cong \bigoplus_\lambda R_\lambda    
\end{equation}
into local rings indexed by Frobenius orbits $\lambda$ of characters. The code
complex is the Koszul complex on $(a,b,c,d)$ and splits over this
decomposition. Its middle cohomology gives the closed form
\begin{equation}\label{eq:search-logical-count}
K = 6 \sum_\lambda |\lambda|
\min\bigl(v_a^\lambda, v_b^\lambda, v_c^\lambda, v_d^\lambda\bigr),
\end{equation}
where $v_x^\lambda$ is the $t$-adic valuation of $x$ in $R_\lambda$. This
formula provides an exact logical-count test. If $K<K_{\min}$, the candidate
is discarded.

The second test searches for an explicit low-weight logical operator. The
block form of the parity checks exposes logical operators supported on a
single qubit block. Fix one of the six blocks $T=\{x,y\}$. A vector supported
on this block is a $Z$-cycle exactly when the two off-block factors annihilate
it, and it is nontrivial exactly when it is not in the local stabilizer image.
This gives the upper bound
\begin{equation}\label{eq:search-distance-bound}
d \le U \coloneqq
\min_{T=\{x,y\}}
\min_{v \in
\frac{\Ann_R(z)\cap\Ann_R(w)}{xR + yR}} |v|,
\qquad \{z,w\} = \{a,b,c,d\}\setminus T,
\end{equation}
where the outer minimum runs over the six qubit blocks. Any vector attaining
the minimum is an explicit logical operator. Hence $U<d_{\min}$ certifies that
the candidate fails the distance threshold, and the candidate is discarded. We also tighten this distance ceiling with bounded two- and three-block
witness searches and with a Kunneth representative obtained from the same
orbit data. These additional witnesses use the same principle: they exhibit
logical operators and therefore provide upper bounds on $d$. Computing the
ceiling is still about an order of magnitude cheaper than running the distance
estimator.

The final test estimates the distance with Stern's information-set-decoding
algorithm, a meet-in-the-middle refinement of Prange decoding. At the
distance-to-length ratios of the codes in our search, Stern trials have one to
two orders of magnitude higher success probability at comparable cost. Each
successful trial returns an explicit logical operator, so the running minimum
weight is an upper bound on $d$ and can only decrease as trials accumulate.
The estimator stops early once it finds an operator below $d_{\min}$, so
candidates that pass the algebraic ceiling but still have low distance are
usually rejected after only a few trials. Since quadcycle codes satisfy
$d_X=d_Z$ (Section~\ref{sec: Transversal gates}~\cite{SI}), we only need to estimate the distance in one basis.

\begin{figure}[h]
\centering
\includegraphics[width=\textwidth]{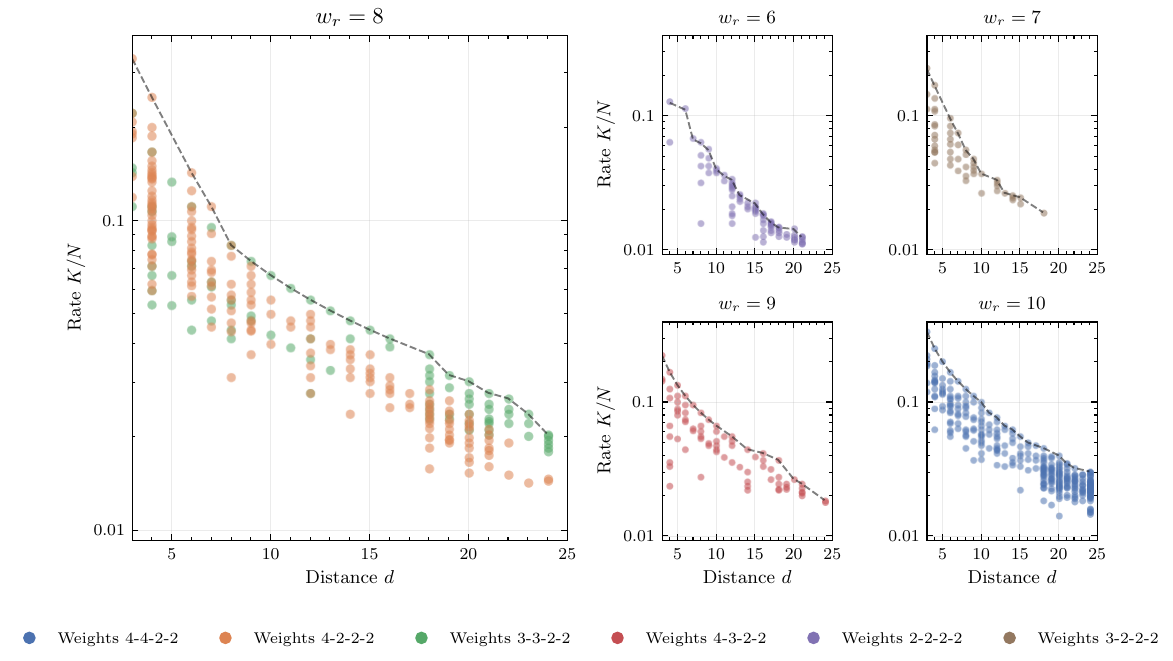}
\caption{Rate $K/N$ versus confirmed distance $d$, with one panel per
check weight $w_r$; the large panel shows $w_r=8$. Each point is a
code that satisfies the desiderata of the search, and the dashed line is the Pareto frontier.
Distances are upper bounds from Stern ISD.}
\label{fig:search-pareto}
\end{figure}

\Cref{fig:search-funnel} shows the resulting `funnel' for the search that produced
the $[[456,24,16]]$ code. The logical-count test removes $99.6\%$ of samples
at a cost of a few microseconds per sample. Roughly one in three hundred
candidates reaches the distance estimator. Although this final stage is invoked
on few candidates, it accounts for
more than $80\%$ of the wall time for the code search.

\section{Dimension jumping between 3D and 4D}\label{sec:DJ CNOT}
We describe the tricycle-to-quadcycle dimension jump in greater detail. It extends the dimension-jump method between 2D and 3D lifted product codes introduced in \cite{li_transversal_2025}. The key ingredient that facilitates the dimension jump is a homomorphic CNOT whose control acts on the tricycle codes and whose target acts on the quadcycle code. The chain map that enables the homomorphic CNOT is the one that identifies the tricycle code $T_\bullet$ with its embedding into the quadcycle code $Q_\bullet = \text{cone}(\tilde d)$.
\begin{equation}\label{eq: Diagram for dimension jump}
\begin{tikzcd}
    \F^{|G|} \arrow[r, "M_Z^T"] & 
    \F^{3|G|} \arrow[r, "H_Z^T"] & 
    \F^{3|G|} \arrow[r, "H_X"]  & 
    \F^{|G|}
    \\
    \F^{|G|} \arrow[r, "M_Z^T"] \arrow[dr, "\mathbb{B}(d)"] \arrow[u, "\text{id}_{|G|}"] &
    \F^{3|G|} \arrow[r, "H_Z^T"] \arrow[dr, "\mathbb{B}(d) \otimes \text{id}_3"] \arrow[u, "\text{id}_{3|G|}"] & 
    \F^{3|G|} \arrow[r, "H_X"] \arrow[dr, "\mathbb{B}(d) \otimes \text{id}_3"] \arrow[u, "\text{id}_{3|G|}"] & 
    \F^{|G|} \arrow[dr, "\mathbb{B}(d)"] \arrow[u, "\text{id}_{|G|}"] 
    \\
    & \F^{|G|} \arrow[r, "M_Z^T"] & 
    \F^{3|G|} \arrow[r, "H_Z^T"] & 
    \F^{3|G|} \arrow[r, "H_X"] & 
    \F^{|G|}
\end{tikzcd}
\end{equation}
As the chain map from the tricycle code to the quadcycle code is an embedding, the physical CNOT circuit that implements it has depth 1. In the main-text Section \hyperref[sec:implementation]{``Efficient implementation on neutral atom arrays''}, we also discuss how the CNOT is efficiently implementable on neutral atom hardware by virtue of its structure. Note that the direction of the identity map cannot be flipped unless $d=0$.

The logical action of the CNOT can be gleaned from \eqref{eq: LES}. Note that $q$ there describes the induced chain map on homology from the projection $Q_\bullet \rightarrow T_\bullet$, i.e., from the quadcycle to the tricycle code. At $n=1$, in the relevant case where the quadcycle code has twice as many logical qubits as the tricycle code, $d_* = 0$, so from exactness, $\im q = \ker d_*$, which is then all of $H_0(T_\bullet)$. The induced map on logical $Z$ operators is surjective. We further confirmed this numerically for all tricycle-quadcycle pairs studied in this work. This has the advantage that all the logical qubits of the tricycle code can be entangled with logical qubits of the quadcycle code, ensuring that all magic produced on the tricycle codes can be transversally teleported to the quadcycle code.

One can also change the connection of the auxiliary $T_\bullet$ complex.
\begin{equation}\label{eq: Related diagram}
\begin{tikzcd}
    \F^{|G|} \arrow[r, "M_Z^T"] \arrow[dr, "\mathbb{B}(d)"] &
    \F^{3|G|} \arrow[r, "H_Z^T"] \arrow[dr, "\mathbb{B}(d) \otimes \text{id}_3"] & 
    \F^{3|G|} \arrow[r, "H_X"] \arrow[dr, "\mathbb{B}(d) \otimes \text{id}_3"] & 
    \F^{|G|} \arrow[dr, "\mathbb{B}(d)"]
    \\
    & \F^{|G|} \arrow[r, "M_Z^T"] & 
    \F^{3|G|} \arrow[r, "H_Z^T"] & 
    \F^{3|G|} \arrow[r, "H_X"] & 
    \F^{|G|}
    \\
    & \F^{|G|} \arrow[r, "M_Z^T"] \arrow[u, "\text{id}_{|G|}"] & 
    \F^{3|G|} \arrow[r, "H_Z^T"] \arrow[u, "\text{id}_{3|G|}"] & 
    \F^{3|G|} \arrow[r, "H_X"] \arrow[u, "\text{id}_{3|G|}"] & 
    \F^{|G|} \arrow[u, "\text{id}_{|G|}"]
\end{tikzcd}
\end{equation}
The interpretation of this diagram is less clear since the auxiliary tricycle complex has been shifted by $+1$ degree. However, we can express the duality of the tricycle codes to make it more straightforward. First, tricycle codes generically satisfy $H_X = M_Z^T$. Thus,
\begin{equation}
\begin{tikzcd}
    T^*_\bullet = \F^{|G|} \arrow[r, "H_X^T"] & 
    \F^{3|G|} \arrow[r, "H_Z"] & 
    \F^{3|G|} \arrow[r, "H_X"] & 
    \F^{|G|}
\end{tikzcd}
\end{equation}
is the dual complex of $T_\bullet$. Let $\mathbb P$ denote the permutation matrix such that $\mathbb P \mathbb{B}(r) = \mathbb B(r^\dag)$ for any $r \in R$. Then, there is an explicit self-duality in the sense of there being an isomorphism of chain complexes $\tilde{\mathbb P}: T_\bullet^* \rightarrow T_\bullet$.
\begin{equation}
    \begin{tikzcd}
        \F^{|G|} \arrow[r, "H_X^T"] \arrow[d, "\text{id}_{|G|}"] & 
    \F^{3|G|} \arrow[r, "H_Z"] \arrow[d, "\mathbb{P} \otimes \text{id}_{3}"] & 
    \F^{3|G|} \arrow[r, "H_X"] \arrow[d, "\mathbb{P} \otimes \text{id}_{3}"] & 
    \arrow[d, "\text{id}_{|G|}"] \F^{|G|} 
    \\
    \F^{|G|} \arrow[r, "H_X^T"] & 
    \F^{3|G|} \arrow[r, "H_Z^T"] & 
    \F^{3|G|} \arrow[r, "H_X"] & 
    \F^{|G|}
    \end{tikzcd}
\end{equation}
Then, we can replace \eqref{eq: Related diagram} instead with
\begin{equation}
    \begin{tikzcd}
    \F^{|G|} \arrow[r, "M_Z^T"] \arrow[dr, "\mathbb{B}(d)"] &
    \F^{3|G|} \arrow[r, "H_Z^T"] \arrow[dr, "\mathbb{B}(d) \otimes \text{id}_3"] & 
    \F^{3|G|} \arrow[r, "H_X"] \arrow[dr, "\mathbb{B}(d) \otimes \text{id}_3"] & 
    \F^{|G|} \arrow[dr, "\mathbb{B}(d)"]
    \\
    & \F^{|G|} \arrow[r, "M_Z^T"] & 
    \F^{3|G|} \arrow[r, "H_Z^T"] & 
    \F^{3|G|} \arrow[r, "H_X"] & 
    \F^{|G|}
    \\
    & \F^{|G|} \arrow[r, "H_X^T"] \arrow[u, "\text{id}_{|G|}"] & 
    \F^{3|G|} \arrow[r, "H_Z"] \arrow[u, "\mathbb{P} \otimes \text{id}_{3}"] & 
    \F^{3|G|} \arrow[r, "M_Z"] \arrow[u, "\mathbb P \otimes \text{id}_{3}"] & 
    \F^{|G|} \arrow[u, "\text{id}_{|G|}"]
    \end{tikzcd}
\end{equation}
The bottom complex can be understood as a Hadamard-conjugated tricycle code: the metachecks now lie in the $X$ basis, and the resulting code has a larger $Z$ distance than $X$ distance. As $\mathbb P$ is a permutation matrix, the transversal-CNOT described by $\mathbb P \otimes \text{id}_3$ also has depth 1.

\section{Restoring circuit distance with the fan-out gadget}\label{app:concat}
In this section we describe the fan-out gadget that reduces the transversal $\overline{\mathrm{CCZ}}$ from depth two to depth one, and show that it restores the circuit-level distance of the factory. To see why this is necessary, we illustrate how the depth $2$ transversal CCZ can lead to a distance reduction. When performing the first CCZ, consider a single gate fault producing the error $Z_iX_j X_k$, where $i, j, k \in [n]$ are tricycle qubits in the three different code blocks. In a $Z$-biased noise model, these errors occur less frequently, but persist at $O(p_{\mathrm{CCZ}})$ nonetheless. Then, in the second CCZ layer, qubit $j$ may interact with another qubit $i'$ in the first block, while qubit $k$ may interact with an entirely different qubit $i''$ in the first block. The $X$ errors on these two qubits propagate to a phase error consisting of CZ and $Z$ terms supported on qubits $i, i'$, and $i''$ simultaneously. In the transversal $X$ measurement of the tricycle code, this correlated phase error stochastically collapses to a pattern of $Z$ errors. One such pattern, which occurs with constant probability, consists of $Z_i Z_{i'} Z_{i''}$, while also having no $Z$ support on any qubits of the second and third code blocks. Although the $X$ errors on those blocks remain, they are undetected by our postselection strategy as we measure transversally in the $X$ basis. Thus, a single gate fault can produce a weight-$3$ error on the first tricycle block. We verified numerically that such an error exists in the smallest instance of our factory---occurring on the $[[27, 3, 3]]$ code following the first CCZ layer---leading to a logical $Z$ error. This reduces the circuit-level distance of a factory based on the depth $2$ CCZ of the $[[27, 3, 3]]$ code to $d = 1$.

\begin{figure}[t!]
    \centering
    \includegraphics[width=0.6\linewidth]{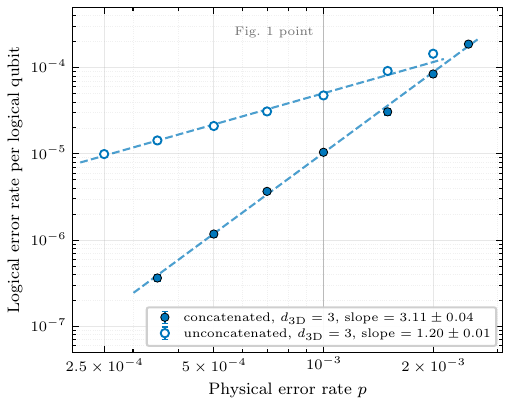}
    \caption{\textbf{Recovering $p^{d}$ scaling through the fan-out gadget.}
    Logical error rate per logical qubit versus physical error rate $p$, with all
    noise rates scaled together and $p_{\mathrm{CCZ}} = 2p$. The fitted slope is the
    effective error exponent and should equal the 3D code distance $d_{\textnormal{3D}}$.
    The factory with the gadget, $2\times[[27,3,3]]\rightarrow[[54,6,6]]$, reaches it. The factory without the gadget,
    $[[27,3,3]]\rightarrow[[54,6,6]]$, does not: its depth-2 $\overline{\mathrm{CCZ}}$
    is not fault tolerant, since a single error after the first CCZ layer spreads to three
    errors on the top simulated block, aligned along a logical operator. Dashed lines are
    weighted power-law fits. Error bars are Poisson on the logical failure count. The
    vertical line marks the operating point of Fig.~\ref{fig:dimjump}.}
    \label{fig:scaling}
\label{fig:fanout-scaling}
\end{figure}

To recover the full distance of the factory, we curb this error propagation between consecutive CCZ layers with the fan-out gadget: each tricycle block is paired with an ancilla patch of the same size, prepared in $\ket{\overline{0}}$ and entangled with it by one transversal layer of CNOTs. To analyze the gadget it is convenient to describe a block together with its ancilla patch as a single code, namely the tricycle code concatenated with an inner $d=2$ repetition code, with parity-check matrices
\begin{equation}
    \tilde H_X = \begin{pmatrix}
        H_X & H_X
    \end{pmatrix} \qquad \tilde H_Z = \begin{pmatrix}
        H_Z & 0 \\
        I & I
    \end{pmatrix},
\end{equation}
where $H_X$ and $H_Z$ are the parity-check matrices of the original tricycle code. The concatenated code has parameters $[[2n_{\textnormal{3D}}, k_{\textnormal{3D}} = 3, (d^X = 2d^X_{\textnormal{3D}}, d^Z = d^Z_{\textnormal{3D}})]]$. The $\ket{\overline +}^{\otimes 3}$ state of this concatenated code is also single-shot state preparable, as shown in Fig.~\ref{fig:dimjump}, since a $d=2$ repetition code can be encoded unitarily. The protocol consists of preparing a tricycle block in the usual way, followed by initializing $n_{\textnormal{3D}}$ additional qubits in $\ket{\bar 0}$ and performing a transversal CNOT. It can alternatively be viewed as preparing logical Bell pairs in the tricycle code \cite{bonilla_ataides_constant-overhead_2025}. These two stages can be performed in either order, but by preparing the repetition code second, we can re-use the $n_{\textnormal{3D}}$ qubits that serve as ancillae during the $Z$ basis syndrome extraction as the other half of the concatenated block. Despite the concatenation, the peak number of qubits needed to prepare $\detstate$ on the tricycle blocks remains $2n_{\textnormal{3D}}$ per block. Preparing this concatenated code also remains compact, increasing the two-qubit gate depth only incrementally from $4$ to $5$. Note that the weights of the $X$ checks in the resulting code are doubled to $12$ with this construction; however, these checks are never measured directly, as demonstrated in Fig.~\ref{fig:dimjump}, and are instead inferred classically from the transversal measurement of the tricycle blocks.

We now demonstrate that this concatenation scheme can indeed be used to reduce the CCZ depth of the tricycle code from $2$ to $1$. The approach is based on the ability to fan out controlled-$U$ gates, which reduces the circuit depth at the cost of ancilla qubits \cite{moore1998parallel}, and can be extended to the setting of logical qubits. The codespace of a tricycle code can be described by $\mathcal C \subset \mathcal H^{\otimes n}$, a subspace of the $n$-qubit Hilbert space. Any state $\ket{\overline \psi_i} \in \mathcal C$ in the codespace can be expanded in the computational basis as $\ket{\overline \psi_i} = \sum_{z} c_z^{(i)} \ket z $
where $z \in \F^n$. Define $\vec z := (z_1,z_2,z_3) \in \F^{3n}$ and $c_{\vec z} = c_{z_1}^{(1)} c_{z_2}^{(2)} c_{z_3}^{(3)}$, so that $\ket{\overline{\psi_1 \psi_2 \psi_3}} = \sum_{\vec z} c_{\vec z} \ket{\vec z}$. The CCZ circuit $U_{\mathrm{CCZ}}: \mathcal C^{\otimes 3} \rightarrow \mathcal C^{\otimes 3}$ is described by a diagonal $2^{3n} \times 2^{3n}$ matrix in the computational basis with entries $\pm 1$ because it comprises physical CCZ gates; and since it is diagonal, $U_{\mathrm{CCZ}} \ket{\vec z} = (-1)^{f(\vec z)}\ket{\vec z}$ for some $f: \F^{3n} \rightarrow \F$. Let $V, W$ be matrices of the same form, with $V\ket{\vec z} = (-1)^{g(\vec z)}\ket{\vec z}$ and $W\ket{\vec z} = (-1)^{h(\vec z)}\ket{\vec z}$. Suppose $U_{\mathrm{CCZ}} = VW$: then $f(\vec z) = g(\vec z) + h(\vec z)$. Using this fact, consider when $\ket{\overline \psi_i}$ is concatenated into a repetition code. The encoded state is $\ket{\overline \psi_i'} = \sum_z c_z^{(i)} \ket{z} \otimes \ket{z}$. It is easy to see that $U_{\mathrm{CCZ}} \otimes I$ persists as a transversal CCZ circuit in the concatenated code since it commutes with the CNOT that (un)-encodes the repetition code.
\begin{align}
    \ket{\overline{\psi_1' \psi_2' \psi_3'}}
    \xRightarrow{\text{Un-encode}} \sum_{\vec z} c_{\vec z} \ket{\vec z} \ket{0}^{\otimes 3n}
    \xRightarrow{U_{\mathrm{CCZ}}} \sum_{\vec z} c_{\vec z} (-1)^{f(\vec z)}\ket{\vec z} \ket{0}^{\otimes 3n} \xRightarrow{\text{Encode}} \sum_{\vec z} c_{\vec z} (-1)^{f(\vec z)}\ket{\vec z} \otimes \ket{\vec z}
\end{align}
which equals $(U_{\mathrm{CCZ}} \otimes I)\ket{\overline{\psi_1' \psi_2' \psi_3'}}$.
If we instead act on the concatenated state with $V \otimes W$, 
\begin{equation}
    (V \otimes W)  \ket{\overline{\psi_1' \psi_2' \psi_3'}} = \sum_{\vec z} c_{\vec z} V\ket{\vec z} \otimes W \ket{\vec z} = \sum_{\vec z} c_{\vec z} (-1)^{g(\vec z) + h(\vec z)}\ket{\vec z} \otimes \ket{\vec z} = (U_{\mathrm{CCZ}} \otimes I)\ket{\overline{\psi_1' \psi_2' \psi_3'}},
\end{equation}
so it is logically equivalent to applying $U_{\mathrm{CCZ}}$, followed by encoding. Choosing $V$ and $W$ to correspond to the first and second layers of transversal CCZ gates in $U_{\mathrm{CCZ}}$, we obtain a depth $1$ realization.

There is also a homomorphic CNOT between the concatenated tricycle code and the 4D code. Let us denote the concatenated tricycle code by $D_\bullet$,
\begin{equation}
    D_\bullet = 
    \begin{tikzcd}
\F^{|G|} \arrow[r, "(M_Z \ 0)^T "] &
    \F^{6|G|} \arrow[r, "\tilde H_Z^T"] & 
    \F^{6|G|} \arrow[r, "\tilde H_X"] & 
    \F^{|G|}
    \end{tikzcd}
\end{equation}
Then, we have the following chain map $\gamma: T_\bullet \rightarrow D_\bullet$, which gives rise to a depth-1 CNOT between a concatenated and unconcatenated tricycle code. 
\begin{equation}
    \begin{tikzcd}
\F^{|G|} \arrow[r, "(M_Z \ 0)^T "] & 
    \F^{6|G|} \arrow[r, "\tilde H_Z^T"] & 
    \F^{6|G|} \arrow[r, "\tilde H_X"] & 
    \F^{|G|} \\
\F^{|G|} \arrow[r, "M_Z^T"] \arrow[u, "\text{id}"] & 
    \F^{3|G|} \arrow[r, "H_Z^T"] \arrow[u, "\binom{I}{0}"] & 
    \F^{3|G|} \arrow[r, "H_X"] \arrow[u, "\binom{I}{0}"] & 
    \F^{|G|} \arrow[u, "\text{id}"] 
    \end{tikzcd}
\end{equation}
One can verify that $\gamma$ acts as a logical isomorphism. 
When we then compose this with the chain map for the dimension jump CNOT $\pi: Q_\bullet \rightarrow T_\bullet$ from \eqref{eq: Diagram for dimension jump}, we obtain the depth-1 transversal CNOT $\gamma \circ \pi: Q_\bullet \rightarrow D_\bullet$.

With the transversal CCZ circuit now being reduced to depth 1, we no longer suffer from the problematic error propagation of the depth 2 CCZ circuit. Crucially, the same number of CCZ gates is applied in aggregate, and therefore the number of fault locations introduced by the gates remains unchanged. The benefit of this concatenation is that distributing the circuit over more qubits controls the error propagation. We confirmed this numerically, with the results being shown in Fig.~\ref{fig:scaling}. As we vary the physical error rate $p \rightarrow 0$, for a factory instance with distance $d$ operating in full error detection, we expect the logical error rate to scale as $P_L \sim Ap^{d}$, or $\log P_L \sim d \log p + C$. For the unconcatenated tricycle code with a depth-2 CCZ, we observe a fitted exponent of $\hat d = 1.20 \pm 0.01$, in agreement with our analysis that the circuit distance of this factory drops well below $3$. As $p \rightarrow 0$, the suppression of the factory is thought to be distance-limited, so the fitted exponent is expected to approach $d=1$. At finite $p$, other factors contribute to the error suppression \cite{gu2026scalable} that should only maintain or increase the suppression, making $\hat d$ an effective upper bound on the effective distance. With concatenation and a depth-1 CCZ, we observe a slope of $\hat d = 3.11 \pm 0.04$, in agreement with $d = 3 \leq \hat d$.

\section{Further details of numerical simulations}\label{sec:numerics-methods}

We employed the standard circuit-level depolarizing error model (SD6) in our numerical simulations (which are performed in Stim \cite{gidney_stim_2021}), where idling and single-qubit gates are followed by single-qubit depolarization channels, two-qubit gates are followed by two-qubit depolarizing channels, and resets/measurements are followed by a single-qubit flip channel, all at strength $p$. This is the same error model employed in Refs. \cite{gidney_magic_2024, sahay2026fold} on magic state cultivation, offering a fair comparison. For the CCZ gates, we derived an effective channel arising from a three-qubit depolarization channel (in the uniform nonidentity-Pauli convention) with strength $p_{\mathrm{CCZ}} = 2p$. Since we simulate only a single tricycle-quadcycle pair, we traced out the other tricycle code qubits involved in the CCZ error channel, resulting in a marginal noise channel equivalent to single-qubit depolarization at a strength of $16/21\times p_{\mathrm{CCZ}}$, or $32/21 \times p$. After estimating the acceptance fractions $\alpha$ for the individual tricycle-quadcycle pairs we simulate (Fig.~\ref{fig:dimjump}(c)), we calculate the success rate of the entire factory as $\alpha^3$, treating the three pairs as independent, which is what we use in the expected space-time volume estimates of Table~\ref{tab:comparison}. From both a logical error rate and acceptance fraction perspective, we expect this to be a conservative means of modeling the full factory since it erases the cross-block correlations in the CCZ error channel.

For the quadcycle memory simulations, we optimize over interleaved, depth-optimal syndrome extraction macroschedules \cite{tan26syndrome}, picking for each code a schedule that attains the maximum single-round circuit-level distance. Similar to the situation for bivariate bicycle codes \cite{bravyi_high-threshold_2024, strikis2026highperformancesyndromeextractioncircuits}, these circuits typically have a circuit distance lower than their nominal code-capacity distance. For the factory simulations, in which the state preparation round of syndrome extraction needs only to act in a single basis, we optimize the tricycle $Z$-basis syndrome-extraction and quadcycle $X$-basis syndrome-extraction circuits over schedules obtained from edge coloration, again selecting the schedule with the highest circuit distance.

\begin{figure}[ht!]
      \centering
      \includegraphics[width=0.6\linewidth]{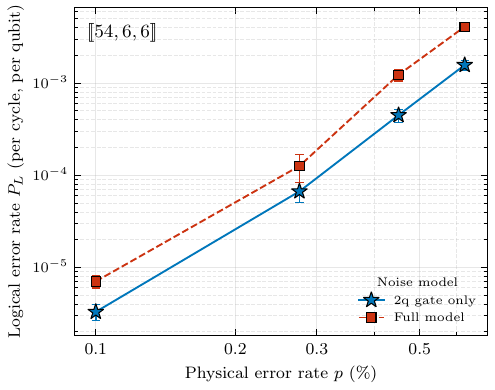}
      \caption{\textbf{Two-qubit gate errors dominate the logical failure rate.}
      Logical error rate per
      cycle per logical qubit for the $[[54,6,6]]$ quadcycle code under single-shot
      windowed most-likely-error decoding. Blue: noise on the two-qubit gates only.
      Red: the full circuit-level model, adding reset, measurement, and idle errors
      at the same rate $p$. The extra channels raise the logical error rate by only
      about a factor of two, with no change in slope. Error bars are binomial.}
  \label{fig:noise_model_54}
  \end{figure}

\section{Review of tricycle magic}\label{sec: Tricycle magic}
 Tricycle codes do not generally produce hypergraph magic states with disjoint \cczstate states, but rather some more complex logical hypergraph magic state on the $3k$ logical qubits. The exact structure of the hypergraph magic state is governed by the details of the cup product and generally also depends on the choice of logical basis for the tricycle codes. Prior work formulated the question of determining how many disjoint magic states can be extracted from a given hypergraph magic state by an optimal choice of logical basis and resetting certain logical qubits to $\ket 0$ prior to the CCZ in terms of the maximum subrank of a rank-3 binary tensor. Unfortunately, this problem is NP-hard in general, and the authors could only compute it exactly with $k \sim 4$. 

The $\detstate$ magic state produced by the 2-2-2 tricycle codes that we focus on in this work is a special case wherein the hypergraph magic state is independent of the choice of logical basis. We express it in terms of a connected hypergraph with six overlapping hyperedges as shown in Fig.~\ref{fig:dimjump}. The CCZs can be described as a set of triples $(i, j, k)$ denoting a CCZ between logical qubit $i$ in the first block, $j$ in the second block, and $k$ in the third block. The set of these triples for the $\mathrm{CCZ}$ on the 2-2-2 tricycle codes is described by $\overline{\mathrm{CCZ}}(\sigma)$ where $\sigma \in S_3$ is the one-line notation for the permutation. The $\overline{\text{det}}$ gate implements the determinant (or equivalently over $\F$, the permanent) of a $3 \times 3$ binary matrix. It is possible that this gate could be a useful arithmetic primitive in certain algorithms, the exploration of which we leave to future work. Instead, we focus on extracting \cczstate states from \detstate.

\section{\texorpdfstring{Extracting disjoint \cczstate states}{Extracting disjoint CCZ states}}\label{sec:Extracting CCZs}
We briefly discuss the protocol to extract two disjoint \cczstate states from \detstate. We use the notation $\bar q_{i\alpha}$ to denote the $i$th logical qubit on the $\alpha$th block, where $i, \alpha \in \{1, 2, 3\}$. Writing each permutation in $S_3$ as a triple $(i,j,k)$, recall that
\begin{equation}
    \detstate = \Biggl[\prod_{(i,j,k)\in S_3} \overline{\mathrm{CCZ}}(\bar q_{i1}, \bar q_{j2}, \bar q_{k3})\Biggr] \ket{\overline +}^{\otimes 9}.
\end{equation}
Consider instead the magic state where the unitary acts on a product state such that $\ket{\bar q_{\ell, \alpha}} = \ket{\bar 0}$ if $\ell = \alpha$; otherwise, $\ket{\bar q_{\ell, \alpha}} = \ket{\overline +}$. More generally, of the three logical qubits per tricycle code block, resetting a different one for each of the three blocks to $\ket{\bar 0}$ results, after an application of $\overline{\text{det}}$, in the following state containing two disjoint \cczstate states.
\begin{equation}
    \begin{aligned}
    &\Biggl[\prod_{(i,j,k)\in S_3} \overline{\mathrm{CCZ}}(\bar q_{i1}, \bar q_{j2}, \bar q_{k3})\Biggr]
    \Biggl[\bigotimes_{\alpha=1}^{3}\Biggl(\ket{\bar 0_{\alpha\alpha}}\otimes\bigotimes_{\substack{i=1\\i\ne\alpha}}^{3}\ket{\overline{+}_{i\alpha}}\Biggr)\Biggr]\\
    &\qquad= \ket{\overline{\mathrm{CCZ}}(\bar q_{12}, \bar q_{23}, \bar q_{31})}\otimes\ket{\overline{\mathrm{CCZ}}(\bar q_{13}, \bar q_{21}, \bar q_{32})}\otimes\bigotimes_{\alpha=1}^{3}\ket{\bar 0_{\alpha\alpha}}.
    \end{aligned}
\end{equation}
We refer to this subset of logical qubits as gauge qubits \cite{menon_magic_2025}. However, in the protocol we present in the main text, $\ket{\bar q_{\ell, \alpha}}$ is measured in $Z$ after the action of the CCZ circuit. One reason for this is that the aforementioned, selective reset protocol requires fault-tolerant gadget operations on the tricycle codes prior to the transversal CCZ. However, this contributes additional error locations and therefore reduces the acceptance fraction of error detection, so we instead elect to conjugate this measurement through the transversal CCZ and dimension jump steps. The result is that these logical qubits are instead measured afterwards, with a conditional fixup needed depending on the measurement outcome. Let $\mu(1) \subseteq \{1, 2, 3\}$ record the logical measurements: $\bar q_{\alpha \alpha}$ was measured to be in $\ket{\bar 1}$ iff $\alpha \in \mu(1)$. Then, the resulting state is
\begin{equation}\label{eq:Collapsed magic state}
    \begin{aligned}
    &\Biggl[\prod_{\substack{\alpha \in \mu(1)\\
    \beta,\gamma\in\{1,2,3\}\setminus\{\alpha\}\\
    \beta < \gamma }}
    \overline{\mathrm{CZ}}(\bar q_{\beta\beta}, \bar q_{\gamma\gamma})\overline{\mathrm{CZ}}(\bar q_{\beta\gamma}, \bar q_{\gamma\beta})\Biggr]\\
    &\quad\times\Biggl[\ket{\overline{\mathrm{CCZ}}(\bar q_{12}, \bar q_{23}, \bar q_{31})}\otimes\ket{\overline{\mathrm{CCZ}}(\bar q_{13}, \bar q_{21}, \bar q_{32})}
    \otimes\bigotimes_{\alpha\in\mu(1)}\ket{\bar 1_{\alpha\alpha}}\otimes\bigotimes_{\delta\in\{1,2,3\}\setminus\mu(1)}\ket{\bar 0_{\delta\delta}}\Biggr].
    \end{aligned}
\end{equation}
Aside from these measured logical qubits lying in the computational basis, this resulting logical state is a hypergraph magic state consisting of two disjoint $\cczstate$ states dressed with further CZ gates, determined by the (random) measurement outcomes (see Fig. \ref{fig:gauging_and_fixups}(a)). Note that some of the logical CZ gates in \eqref{eq:Collapsed magic state} act on two logical qubits, both of which are measured to be in computational basis states. Thus, these CZ gates reduce simply to a global phase and can be ignored. The remaining CZ gates must be removed by applying them a second time. 

In Section~\ref{sec:zero_out}~\cite{SI} we present a simple method with structured ancilla codes to perform the $Z$ logical measurement of the gauge qubits using the framework of homomorphic logical measurements \cite{huang_homomorphic_2023}. Alternatively, for codes admitting clustered logical bases (Section~\ref{sec: Transversal gates}~\cite{SI}), the measurements can be implemented using graph-based generalized lattice surgery techniques \cite{cohen2022low, williamson2026low, swaroop_universal_2025}. More generally, recent high-rate approaches to qLDPC lattice surgery \cite{zheng_high-rate_2025} may offer alternative approaches to implementing this surgery in parallel with low overhead, and recently developed single-shot methods may also reduce the temporal overhead to $O(1)$ \cite{baspin2025fast, cowtan_fast_2025} -- we leave an exploration of these directions to future work. The CZ gates dressing the magic state can be fixed in parallel with the feed-forward correction \cite{chamberland2022temporal} -- see Section~\ref{supp:teleporting-cz}~\cite{SI}.

\section{\texorpdfstring{Homomorphic measurement gadgets for \cczstate extraction}{Homomorphic measurement gadgets for CCZ extraction}}\label{sec:zero_out}

After teleportation, the $\detstate$ state is distributed across three quadcycle blocks, occupying three of the six logical qubits in each block. The remaining three logical qubits are unused and held in $|\bar0\rangle$. Without loss of generality, we label the qubits with the $\detstate$ as $1,2,3$ and the remaining logical qubits as $4,5,6$. The extraction requires measuring $\bZ$ on compute qubit $1$ of the first block, on compute qubit $2$ of the second block, and on compute qubit $3$ of the third block. Let $x_{ri}\in\F$ denote the logical bit of compute qubit $i$ on block $r$. A change of logical basis by an invertible matrix $B\in\mathrm{GL}(3,\F)$ applied to the compute qubits of all three blocks multiplies the determinant of $(x_{ri})$ by $\det B=1$ and hence preserves the $\detstate$ state. It therefore suffices to measure on block $r$ a logical $Z$ operator $\bZ^{v_r}=\prod_i\bZ_i^{(v_r)_i}$ of the compute qubits for any linearly independent $v_1,v_2,v_3\in\F^3$, since $B$ can be chosen to map $v_1,v_2,v_3$ to the three unit vectors. The action of the measured operators on the unused qubits can be arbitrary. Here we construct fault-tolerant gadgets for these logical measurements and determine their fault distances.

Recall that a quadcycle code with factors $a,b,c,d\in\R=\F[G]$ has six qubit sectors labeled by pairs of factors, $ab,ac,ad,bc,bd,cd$, each of size $n=|G|$ and indexed by group elements. Its $X$ checks form four sectors labeled by triples. The $X$ checks of sector $abc$ act on the qubit sectors $ab,ac,bc$ through $\mathbb{B}(c),\mathbb{B}(b),\mathbb{B}(a)$, the multiplier being the factor of the triple absent from the pair, and likewise for $abd,acd,bcd$. Its $Z$ checks form four sectors labeled by single factors. The $Z$ checks of sector $a$ act on $ab,ac,ad$ through $\mathbb{B}(b),\mathbb{B}(c),\mathbb{B}(d)$, and likewise for $b,c,d$. All factors of the codes considered have even weight.

We measure the logical operators with an ancillary quadcycle code block, defined by factors $a',b',c',d'\in\R$. In the circuits constructed here, three of these factors coincide with the factors of the data code and the fourth is a new element of $\R$, related to the data factors by the $\R$-linear relation~\eqref{eq:cprime} below. We describe the case in which $c'$ is the new factor, so that $a'=a$, $b'=b$ and $d'=d$. The other cases follow by renaming the factors. The construction is based on \textit{homomorphic logical measurement}, using a constant-depth $\mathrm{CNOT}$ circuit that couples target logicals of the data block to the ancilla block, which is then measured transversally -- we refer the reader to Ref.~\cite{huang_homomorphic_2023} for more details on homomorphic measurement. We first describe the circuit between the data block and the ancilla block, then show that it preserves the code space of both blocks, and finally determine the logical operators it measures. The basic construction is determined by chosen group elements $g$ and $t$ and consists of three layers of CNOT gates with data qubits as controls and ancilla qubits as targets.
\begin{enumerate}[nosep,leftmargin=*]
\item For every sector and position $h$, CNOT from the data qubit at $h$ to the ancilla qubit at $h$ of the same sector.
\item For the sectors $ac,bc,cd$ and every $h$, CNOT from the data qubit at $h$ to the ancilla qubit at $hg$ of the same sector.
\item For every $h$, CNOT from the data qubit at $h$ of sector $bc$ to the ancilla qubit at $ht$ of sector $ab$, and from the data qubit at $h$ of sector $cd$ to the ancilla qubit at $ht$ of sector $ad$.
\end{enumerate}
For certain codes, we may only need $2$ layers, while for the $[[162,6,14]]$ code, certain measurements require a depth $4$ circuit (see \cref{tab:hom_meas}) which we elaborate on below.

The layers contain $6n$, $3n$ and $2n$ gates, and the circuit has depth three. We encode it in a $6\times6$ matrix $F$ over $\R$, indexed by ancilla sectors (rows) and data sectors (columns), whose entry $F_{P'P}$ is the set of shifts $s$ such that the circuit contains CNOTs from position $h$ of data sector $P$ to position $hs$ of ancilla sector $P'$. Thus $F_{PP}=1$ for $P\in\{ab,ad,bd\}$, $F_{PP}=1+g$ for $P\in\{ac,bc,cd\}$, $F_{ab,bc}=F_{ad,cd}=t$, and all other entries vanish. Under conjugation by the circuit, $X$ on a set of data qubits acquires $X$ on the ancilla qubits reached through $F$, and $Z$ on a set of ancilla qubits acquires $Z$ on the data qubits reached through $F^\T$.

The circuit and the ancilla factors are generated by a single $4\times4$ matrix $A$ over $\R$ relating the two factor lists $x=(a,b,c,d)$ and $x'=(a',b',c',d')$,
\begin{equation}
x_j=\sum_{i\in\{a,b,c,d\}}A_{ij}\,x'_i .
\label{eq:app-sub}
\end{equation}
Here $A$ is the identity except for $A_{cc}=1+g$ and $A_{ac}=t$, so that $a'=a$, $b'=b$, $d'=d$ and
\begin{equation}
(1+g)\,c'=c+t\,a .
\label{eq:cprime}
\end{equation}
A $k\times k$ minor of $A$ is the determinant of the submatrix obtained by keeping $k$ rows and $k$ columns, computed in $\R$, where it coincides with the permanent since $\R$ has characteristic two. For index sets $S'$ (rows) and $S$ (columns) we write $A[S',S]$ for this minor. Since rows and columns of $A$ are indexed by factors, a $2\times2$ minor is indexed by an ancilla qubit sector and a data qubit sector, a $3\times3$ minor by an ancilla and a data $X$-check sector, and the $4\times4$ minor is $\det A$. The entry $F_{P'P}$ of the circuit matrix is the minor $A[P',P]$, which reproduces the entries listed above.

\begin{prop}\label{prop:chain}
Let $A\in\R^{4\times4}$ satisfy~\eqref{eq:app-sub} and let $F$ be the matrix of its $2\times2$ minors. Under conjugation by the CNOT circuit defined by $F$,
\begin{enumerate}[nosep,leftmargin=*,label=(\roman*)]
\item the $X$ checks of data sector $T$ acquire the combination $\sum_{T'}A[T',T]$ of the $X$ checks of ancilla sectors $T'$,
\item the $Z$ checks of ancilla sector $i'$ acquire the combination $\sum_{i}A_{i'i}$ of the $Z$ checks of data sectors $i$.
\end{enumerate}
Consequently, both blocks remain in their code spaces, and every logical $Z$ operator of the ancilla block acquires a logical $Z$ operator or a stabilizer of the data code.
\end{prop}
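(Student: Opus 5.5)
The plan is to recognize the quadcycle complex as the Koszul complex of the sequence $x=(a,b,c,d)$ over $\R$. The CNOT circuit defined by $F$ is then the map induced on exterior powers by the $\R$-linear substitution~\eqref{eq:app-sub}. Every claim then reduces to the functoriality of the exterior algebra.

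\textbf{Setting up the Koszul picture.} Let $V=\R^4$ with basis $e_a,e_b,e_c,e_d$. Identify $Z$-check sectors with $\Lambda^1V$, qubit sectors with $\Lambda^2V$ (pairs), and $X$-check sectors with $\Lambda^3V$ (triples). Comparing with the sector description preceding the proposition, the maps are as follows.
\begin{itemize}[nosep,leftmargin=*]
\item The transpose of the $Z$-check map $\Lambda^1\to\Lambda^2$ is $v\mapsto \omega\wedge v$, where $\omega=\sum_j x_je_j$.
\item The transpose of the $X$-check map $\Lambda^2\to\Lambda^3$ is again wedging with $\omega$.
\end{itemize}
Equivalently, $\mathcal H_X^\T$ and $\mathcal H_Z^\T$ are consecutive differentials of $(\Lambda^\bullet V,\omega\wedge-)$. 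Characteristic two removes all signs. The only bookkeeping is the involution $*$ in $\mathcal H_X$ and the choice of whether rows or columns carry the group action. I would fix conventions once, by checking a single entry such as the $X$-check $abc$ acting on $ab$ through $c$, and note that $*$ only enters uniformly through transposes.

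\textbf{Key identity.} Let $\phi\colon V'\to V$ be the $\R$-linear map with matrix determined by $A$, so that $\phi(\omega')=\omega$, where $\omega'=\sum_i x'_ie'_i$. This is exactly the content of~\eqref{eq:app-sub}, $x_j=\sum_iA_{ij}x'_i$, read with the correct index orientation. Then $\Lambda\phi$ is an algebra homomorphism, so
\[
\Lambda\phi(\omega'\wedge v)=\omega\wedge\Lambda\phi(v).
\]
Hence $\Lambda^\bullet\phi$ is a chain map between the two Koszul complexes. Its matrix on $\Lambda^k$ is the matrix of $k\times k$ minors of $A$. In particular:
\begin{itemize}[nosep,leftmargin=*]
\item on $\Lambda^2$ it is $F$, whose entries are the minors $A[P',P]$, as already verified entrywise in the text;
\item on $\Lambda^3$ it is $A[T',T]$;
\item on $\Lambda^1$ it is $A$ itself.
\end{itemize}

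\textbf{Deriving (i) and (ii).} Conjugation by the circuit sends data $X$ on a set $s$ to itself plus ancilla $X$ on $Fs$. An $X$ stabilizer of data sector $T$ is $\mathcal H_X^\T e_T$, and $F\mathcal H_X^\T e_T$ is obtained by applying the chain-map identity in the degree-$3$-to-$2$ direction. Which of $\Lambda\phi$ or its transpose appears is dictated by the fact that $F$ maps data to ancilla. Either way, $F\mathcal H_X^\T e_T=\mathcal H_X'^\T\sum_{T'}A[T',T]e_{T'}$, which is (i). Dually, ancilla $Z$ on $u$ acquires data $Z$ on $F^\T u$. For $u=\mathcal H_Z'^\T e_{i'}$, the transposed chain-map relation in degree $1$ gives $F^\T\mathcal H_Z'^\T e_{i'}=\mathcal H_Z^\T\sum_iA_{i'i}e_i$, which is (ii).

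\textbf{Consequences.} Data $Z$ operators and ancilla $X$ operators commute with CNOTs whose controls are on the data and targets on the ancilla, so they are unchanged. Together with (i) and (ii), this shows that every stabilizer generator of the joint code is mapped into the joint stabilizer group, so the code spaces are preserved. Finally, let $u\in\ker\mathcal H_X'$ be a logical $Z$ of the ancilla. The chain-map relation from (i), transposed, gives $\mathcal H_XF^\T u=(\Lambda^3)^\T\mathcal H_X'u=0$. Thus $F^\T u$ is a data $Z$-cycle, i.e.\ a logical operator or a stabilizer of the data code.

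The main obstacle is the orientation bookkeeping. One must confirm that the paper's $F$ (ancilla rows, data columns) and the stated $A$ (with $x$ expressed in terms of $x'$) make $\Lambda^2$ of the substitution equal $F$ rather than its transpose or its $*$-conjugate. It is also not obvious a priori that the Koszul differential on the $X$ side is the one twisted by $*$. I would resolve both by testing on the single nontrivial entry $F_{ab,bc}=t$, using $(1+g)c'=c+ta$.
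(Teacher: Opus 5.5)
Your argument is correct in substance, but it is organized differently from the paper's proof. The paper proves (i) and (ii) by direct entrywise computation. It writes the check vectors with entries $x_{T\setminus P}$ and $x_{P\setminus i}$ and substitutes $x_j=\sum_l A_{lj}x'_l$. It then recognizes each resulting sum as the Laplace expansion of a $3\times 3$ (respectively $2\times 2$) minor along one row; the terms with $l\in P'$ vanish because of a repeated row. The final claim is the same commutation argument you give. Your exterior-algebra packaging is the coordinate-free form of these Laplace expansions. It explains why compound matrices of $A$ appear, and it yields every square of the chain map at once, including the metacheck squares that the paper only mentions as analogous identities. In exchange, the paper's computation never has to sort out transposes or the involution, because it works with the explicit check vectors.

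That bookkeeping is where your write-up slips, in two places.

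\begin{enumerate}
\item $\mathcal H_X^{\mathsf T}$ and $\mathcal H_Z^{\mathsf T}$ both map check space to qubit space, so they cannot be consecutive differentials. The $*$-free $R$-matrices that actually enter (i) and (ii) are $\mathcal H_X^{\mathsf T}$ on $\Lambda^3\to\Lambda^2$ and $\mathcal H_Z$ on $\Lambda^2\to\Lambda^1$. Both are the contraction $\iota_\omega$, which is the plain transpose of $\omega\wedge-$.
\item The condition $\phi(\omega')=\omega$ forces $\phi(e'_i)=\sum_jA_{ij}e_j$. Hence $\Lambda^k\phi$ is the plain transpose of the minor matrix $M_k=(A[S',S])$, and in particular $\Lambda^2\phi=F^{\mathsf T}$, not $F$.
\end{enumerate}

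With these corrections, plain-transpose $\Lambda^{k+1}\phi\circ(\omega'\wedge-)=(\omega\wedge-)\circ\Lambda^k\phi$ over the commutative ring $R$. This gives $\iota_{\omega'}M_{k+1}=M_k\,\iota_\omega$, which for $k+1=3$ and $k+1=2$ is exactly (i) and (ii). The involution never appears.

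It is cleaner to avoid the transposes entirely. Take $\psi\colon V\to V'$, $e_j\mapsto\sum_iA_{ij}e'_i$, so that $\Lambda^k\psi=M_k$ and $\Lambda^2\psi=F$. The substitution $x_j=\sum_iA_{ij}x'_i$ says that the covector $\xi=(x_j)$ is the pullback $\psi^*\xi'$ of $\xi'=(x'_i)$. Naturality of the interior product, $\iota_{\xi'}\circ\Lambda\psi=\Lambda\psi\circ\iota_{\psi^*\xi'}$, is then (i) and (ii) verbatim.
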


\begin{proof}
Write $X_T$ for the $X$ checks of data sector $T$ as a vector over the six qubit sectors, with entry $x_{T\setminus P}$ on the sector $P\subset T$ and $0$ elsewhere, and $Z_i$ for the $Z$ checks of sector $i$, with entry $x_{P\setminus i}$ on $P\ni i$. Primed symbols refer to the ancilla block. Statement (i) is the identity $FX_T=\sum_{T'}A[T',T]\,X'_{T'}$. Its left side has entry $\sum_{P\subset T}A[P',P]\,x_{T\setminus P}$ on ancilla sector $P'$. Substituting $x_{T\setminus P}=\sum_lA_{l,T\setminus P}\,x'_l$ from~\eqref{eq:app-sub} and expanding the $3\times3$ minor $A[P'\cup\{l\},T]$ along its row $l$ gives $\sum_{l\notin P'}A[P'\cup\{l\},T]\,x'_l$, which is the entry of the right side on $P'$. Terms with $l\in P'$ vanish because the corresponding determinant has a repeated row. Statement (ii) is the identity $Z'_{i'}F=\sum_iA_{i'i}Z_i$. Its left side has entry $\sum_{l}x'_l\,A[\{i',l\},\{j,k\}]$ on data sector $\{j,k\}$, and expanding the $2\times2$ minors along the row $i'$ and using~\eqref{eq:app-sub} gives $A_{i'j}x_k+A_{i'k}x_j$, which is the entry of the right side. Both identities hold over $\R$ and hence, after applying $\mathbb{B}$, as binary matrix identities. For the last claim, let $z'$ be the support of an ancilla logical $Z$ operator, so that every ancilla $X$ check commutes with it. By (i), every data $X$ check acquires ancilla $X$ checks only, and therefore the data operator with support $F^\T z'$ commutes with every data $X$ check.
\end{proof}

For $A$ as in~\eqref{eq:cprime} the identities read $FX_{abd}=X'_{abd}$, $FX_{abc}=(1+g)X'_{abc}$, $FX_{acd}=(1+g)X'_{acd}$, $FX_{bcd}=(1+g)X'_{bcd}+tX'_{abd}$ and $Z'_aF=Z_a+tZ_c$, $Z'_bF=Z_b$, $Z'_cF=(1+g)Z_c$, $Z'_dF=Z_d$. Together with the analogous identities for the metachecks, Proposition~\ref{prop:chain} states that the matrices of $k\times k$ minors of $A$ form a chain map from the data complex to the ancilla complex, where $C_2=\R^6$ is the qubit space, $C_3=C_1=\R^4$ are the $X$- and $Z$-check spaces, $C_4=C_0=\R$ are the metacheck spaces, and the horizontal maps are the check and metacheck matrices.
\[
\begin{tikzcd}[column sep=large, row sep=large]
\text{\small $X$ metacheck} & \text{\small $X$ checks} & \text{\small qubits} & \text{\small $Z$ checks} & \text{\small $Z$ metacheck}\\[-18pt]
C_4 \arrow[r, "\partial_4"] \arrow[d, "\det A"'] & C_3 \arrow[r, "\partial_3"] \arrow[d, "3\times3\text{ minors}"'] & C_2 \arrow[r, "\partial_2"] \arrow[d, "F"'] & C_1 \arrow[r, "\partial_1"] \arrow[d, "A"'] & C_0 \arrow[d, "1"'] \\
C'_4 \arrow[r, "\partial'_4"'] & C'_3 \arrow[r, "\partial'_3"'] & C'_2 \arrow[r, "\partial'_2"'] & C'_1 \arrow[r, "\partial'_1"'] & C'_0
\end{tikzcd}
\]
The outer squares give compatibility with the metachecks and hence single-shot decoding of the ancilla block. Measuring the logical $Z$ operators of the ancilla block after the circuit measures the data operators they acquire, and we now identify these.

\begin{prop}\label{prop:measured}
Let $|G|$ be odd. The $Z$ operators supported on a full qubit sector are the logical $Z$ operators of the quadcycle code, with compute operators $\bZ_{bc},\bZ_{ac},\bZ_{ab}$ and unused operators $\bZ_{ad},\bZ_{bd},\bZ_{cd}$. The ancilla sector operator on $P'$ acquires the data operator $\prod_P\bZ_P^{\,\wt(F_{P'P})}$, where $\wt(r)$ is the weight of $r\in\R$. For $A$ as in~\eqref{eq:cprime} the ancilla sector operators on $ab$, $ad$, $bd$ acquire $\bZ_{ab}\bZ_{bc}$, $\bZ_{ad}\bZ_{cd}$, $\bZ_{bd}$, and those on $ac$, $bc$, $cd$ acquire the identity. The circuit therefore measures exactly one compute operator, $\bZ_{ab}\bZ_{bc}$.
\end{prop}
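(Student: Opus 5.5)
The plan is to work entirely with the all-ones vector $\mathbf 1_G=\sum_{g\in G}g\in\R$. The basic identity is $r\,\mathbf 1_G=\wt(r)\,\mathbf 1_G$ for every $r\in\R$, and likewise $r^*\mathbf 1_G=\wt(r)\,\mathbf 1_G$. Since all factors $a,b,c,d$ have even weight, multiplying $\mathbf 1_G$ by any factor gives zero.

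First, show that the full-sector operators are logicals. Let $\bZ_P$ denote $Z$ on every qubit of sector $P$, i.e.\ the vector $\mathbf 1_G$ placed in sector $P$. Each $X$ check sector $T\supset P$ acts on sector $P$ through $\mathbb{B}(x_{T\setminus P})^*$, so by the identity above $\bZ_P$ commutes with all $X$ checks and lies in $\ker\mathcal H_X$. Symmetrically, the full-sector $X$ operator $\bX_P$ lies in $\ker\mathcal H_Z$, because the $Z$ checks act on sector $P$ through even-weight factors. Next consider the $6\times6$ overlap matrix between the $\bX_{P'}$ and the $\bZ_P$. It is zero off the diagonal, since the operators live on disjoint sectors, and equals $|G|\equiv 1$ on the diagonal, because $|G|$ is odd. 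It is therefore the identity. Hence each $\bZ_P$ anticommutes with $\bX_P$, so no $\bZ_P$ is a stabilizer, and the six $\bZ_P$ are independent modulo stabilizers. Since $k_{4\mathrm D}=2k_{3\mathrm D}=6$ for the codes considered (Methods and \cref{app:tcpairs}), they form a basis of the logical $Z$ operators, with the $\bX_P$ as the dual basis.

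Second, identify which of these operators are compute and which are unused. The tricycle complex embeds through the sectors not involving $d$, namely $ab,ac,bc$; this is the identity part of the chain map in \eqref{eq: Diagram for dimension jump}. The teleported magic therefore occupies the logicals dual to those sectors, so $\bZ_{ab},\bZ_{ac},\bZ_{bc}$ are the compute operators. The sectors $ad,bd,cd$ lie in the cone part and carry the $\ket{\bar0}$ qubits. I expect this labelling to be the main obstacle. It requires checking that the homology classes sent to the tricycle code by the projection $q$ of \eqref{eq: LES} are exactly those supported on $d$-free sectors. I would first try pairing each $\bZ_P$ against the image of the tricycle full-sector $X$ logicals under the embedding. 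The pairing is again diagonal with entries $|G|$, so the images are $\bX_{ab},\bX_{ac},\bX_{bc}$.

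Third, compute the back-action. By \cref{prop:chain}, conjugating $Z$ on ancilla support $z'$ gives $Z$ on data support $F^{\T}z'$. For $z'=\mathbf 1_G$ in sector $P'$, this puts $F_{P'P}^*\mathbf 1_G=\wt(F_{P'P})\,\mathbf 1_G$ on each data sector $P$, which is the claimed product $\prod_P\bZ_P^{\,\wt(F_{P'P})}$. Now substitute the entries of $F$, assuming $g\neq 1$ so that $\wt(1+g)=2\equiv 0$, and using $\wt(t)=\wt(1)=1$:
\begin{itemize}
\item row $ab$ gives $\bZ_{ab}\bZ_{bc}$;
\item row $ad$ gives $\bZ_{ad}\bZ_{cd}$;
\item row $bd$ gives $\bZ_{bd}$;
\item rows $ac,bc,cd$ have only the diagonal entry $1+g$, so they give the identity.
\end{itemize}

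Finally, the span of the measured data operators is $\langle \bZ_{ab}\bZ_{bc},\,\bZ_{ad}\bZ_{cd},\,\bZ_{bd}\rangle$. Its only element with support purely on the compute logicals is $\bZ_{ab}\bZ_{bc}$, and the others act only on unused qubits that are already in $\ket{\bar0}$. So exactly one compute operator is measured.
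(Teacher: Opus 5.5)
Your proposal is correct and follows essentially the paper's route. The shared ingredients are the identity $\mathbb{B}(r)^{\T}\mathbf 1=\wt(r)\,\mathbf 1$, even factor weights with odd $|G|$ for the full-sector logicals, and $F^{\T}\mathbf 1_{P'}=\sum_P\wt(F_{P'P})\,\mathbf 1_P$. One small addition you should make, which the paper states: the same reasoning, with $c'$ of even weight, shows the ancilla sector operators are themselves ancilla logicals, so the transversal readout actually measures them.

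The remaining differences are minor. You read the parities directly off the listed entries of $F$. The paper instead reduces $A$ modulo weight parity and takes $2\times2$ minors of $\bar A$, which carries over unchanged to the other circuits of Table~\ref{tab:hom_meas}. Your independence argument via the dual $\bX_P$ and your justification of the compute/unused labelling through the $d$-free embedding fill in points that the paper's proof only asserts.
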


\begin{proof}
For $r\in\R$ the matrix $\mathbb{B}(r)^\T$ maps the all-ones vector to $\wt(r)$ copies of itself, since every group element is reached exactly once from each element of the support of $r$. Every factor has even weight, so a $Z$ operator on a full sector commutes with every $X$ check, and for odd $|G|$ these six operators represent the six logical $Z$ operators. The same holds for the ancilla block. The ancilla sector operator on $P'$ has support $\mathbf 1_{P'}$ and acquires $Z$ on $F^\T\mathbf 1_{P'}=\sum_P\wt(F_{P'P})\mathbf 1_P$. Weight parity is a ring homomorphism $\R\to\F$, so $\wt(A[P',P])$ equals the minor of the binary matrix $\bar A$ obtained from $A$ by reducing entries modulo weight parity. For $A$ as in~\eqref{eq:cprime}, $\bar A$ is the identity except for $\bar A_{cc}=0$ and $\bar A_{ac}=1$, and its $2\times2$ minors are $\bar A[P,P]=1$ for $P\in\{ab,ad,bd\}$, $\bar A[ab,bc]=\bar A[ad,cd]=1$, and $0$ otherwise.
\end{proof}

The even weight of $1+g$ removes the sectors containing $c$ from the measurement, the entry $t$ multiplies $\bZ_{ab}$ by $\bZ_{bc}$, and the choice of $c'$ affects only the ancilla code. For $\code{72,6,8}$ and $\code{228,6,17}$ the order of $G$ is even and the sector operators are not logical. There we embed the three $X$ logical operators of the tricycle code on the sectors $ab,ac,bc$, extend them to a basis $\bX_1,\dots,\bX_6$ of the $X$ logical operators, and let $\bZ_1,\dots,\bZ_6$ be the dual basis. With this choice, we aim to measure three linearly independent operators formed from products of $\bZ_1,\bZ_2,\bZ_3$. Table~\ref{tab:hom_meas} lists the measured operator for every circuit. On each block the logical basis is chosen so that this operator is the $Z$ of the first compute qubit.

The remaining circuits in Table~\ref{tab:hom_meas} differ only in the nonzero entries of $A$. With $A_{ac}=0$ the third layer is absent and $\bZ_{ab}$ is measured. An additional entry $A_{bc}=t'$ adds a fourth layer from sector $ac$ to $ab$ and from $cd$ to $bd$ with shift $t'$, changes~\eqref{eq:cprime} to $(1+g)c'=c+ta+t'b$, and measures $\bZ_{ab}\bZ_{ac}\bZ_{bc}$. The entry $A_{cd}=t$ in place of $A_{ac}$, used once on $\code{162,6,14}$, moves the third layer to run from $ad$ to $ac$ and from $bd$ to $bc$, sets $(1+g)c'=c$ and $d'=d+tc'$, and measures $\bZ_{ab}$.

\begin{lemma}\label{lem:solve}
Let $g\in G$ have order $m$, let $N_g=\sum_{k<m}g^k$, and let $u\in\R$. The equation $(1+g)c'=u$ has a solution $c'\in\R$ if and only if $N_gu=0$, and its solutions form a coset of $N_g\R$.
\end{lemma}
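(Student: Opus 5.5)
The plan is to treat multiplication by $1+g$ as an $\F$-linear endomorphism of $\R$ and to compute its kernel and image directly, using the coset decomposition of $G$ by the cyclic subgroup $\langle g\rangle$. Since $g^m=1$, we have $(1+g)N_g = N_g + gN_g = N_g+N_g$ over $\F$, because $gN_g=N_g$. Thus $(1+g)N_g=0$ in characteristic two. This identity gives the necessity direction at once: if $(1+g)c'=u$, then $N_gu=N_g(1+g)c'=0$.

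For the structure of the solution set, it suffices to show that $\ker(1+g)=N_g\R$. Then the solutions of an inhomogeneous linear equation form a coset of this kernel, provided one solution exists. We already know $N_g\R\subseteq\ker(1+g)$. For the converse, write $G$ as a disjoint union of cosets $h\langle g\rangle$ and decompose $\R=\bigoplus_{h}h\,\F[\langle g\rangle]$. Each summand is preserved by multiplication by $1+g$ and by $N_g$. On a single coset, $r=\sum_k r_k\,hg^k$ satisfies $gr=r$ exactly when $r_k=r_{k-1}$ for all $k$ (indices mod $m$), that is, when $r$ is $0$ or $h N_g$. Summing over cosets, $\ker(1+g)=\{r: gr=r\}$ is spanned by the elements $hN_g$, which is precisely $N_g\R$.

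For sufficiency, I would argue by dimension rather than construct a solution explicitly. The kernel $N_g\R$ has $\F$-dimension $|G|/m$, the number of cosets. Hence the image of $1+g$ has dimension $|G|-|G|/m$. By the same coset decomposition, $\ker N_g=\{u: N_gu=0\}$ is cut out by one linear condition per coset, namely that the coefficient sum of $u$ over that coset vanishes. This is because $N_g\cdot hg^k=hN_g$ for every $k$, so $N_gu=\sum_h(\text{coset sum})\,hN_g$ and the elements $hN_g$ are linearly independent. These conditions are independent, so $\dim\ker N_g=|G|-|G|/m$. Since $\operatorname{im}(1+g)\subseteq\ker N_g$ by the first paragraph and the dimensions agree, the two are equal. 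This proves that $N_gu=0$ implies solvability.

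The main point requiring care is the identification $\ker(1+g)=N_g\R$ and the matching dimension count for $\ker N_g$. Both reduce to the explicit per-coset description, so I expect no real obstacle beyond writing it cleanly. As an alternative for sufficiency, one could give an explicit solution on each coset: set $c'_k=\sum_{j\le k}u_j$. This is consistent around the cycle exactly because the coset sum of $u$ vanishes.
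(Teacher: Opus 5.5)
Your proposal is correct and follows essentially the same route as the paper. Both use $(1+g)N_g=0$ for necessity and identify $\ker(1+g)$ with the $g$-invariant elements $N_g\R$ via the coset decomposition, then obtain sufficiency by matching $\dim\operatorname{im}(1+g)=|G|-|G|/m$ with $\dim\ker N_g$ computed from the per-coset parity conditions; your explicit per-coset solution $c'_k=\sum_{j\le k}u_j$ is a valid constructive extra that the paper does not give.
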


\begin{proof}
Multiplication by $1+g$ and by $N_g$ are $\F$-linear maps of $\R$, and $(1+g)N_g=1+g^m=0$, so the image of multiplication by $1+g$ lies in the kernel of multiplication by $N_g$. An element lies in the kernel of multiplication by $1+g$ if and only if it is invariant under multiplication by $g$, that is, if and only if it is a sum of cosets of the subgroup $\langle g\rangle$. These elements form $N_g\R$, of dimension $|G|/m$, so the image of multiplication by $1+g$ has dimension $|G|-|G|/m$. Since $N_gh$ is the coset of $h$ for every $h\in G$, an element $u$ lies in the kernel of multiplication by $N_g$ if and only if every coset of $\langle g\rangle$ contains an even number of elements of the support of $u$. This kernel also has dimension $|G|-|G|/m$, so it equals the image, which proves the solvability criterion. The solutions of $(1+g)c'=u$ form a coset of the kernel of multiplication by $1+g$, which is $N_g\R$.
\end{proof}

Equation~\eqref{eq:cprime} therefore has a solution exactly when every coset of $\langle g\rangle$ contains an even number of elements of $c+ta$, and the solution is unique only up to $N_g\R$. Every solution yields a valid circuit by Proposition~\ref{prop:chain}, but $c'$ must have even weight for the ancilla block to possess the logical qubits that carry the measurement, and the solution determines the ancilla code. For odd $|G|$ a weight-2 factor $c=h_1(1+h_2)$ admits the weight-2 solution $c'=h_1(1+g)$ with $g^2=h_2$ at $t=0$. For $\code{72,6,8}$ and $\code{228,6,17}$ no even-weight solution exists at $t=0$, which necessitates the third layer. The ancilla block coincides with the data code when $c'=c$ solves~\eqref{eq:cprime}, as for $\code{54,6,6}$, whose diagonal entries $h(1+g)$ differ from $1+g$ by a relabeling of the ancilla qubits. A solution of weight $2$ preserves the check weights of the data code, and a solution of weight $4$ raises half of the ancilla check weights from $8$ to $10$.

The procedure is given in Algorithm~\ref{alg:meas}. The ancilla check round of step~5 and the data syndrome round of step~8 provide the information used in the fault analysis below.

\begin{algorithm}[H]
\caption{Homomorphic measurement of gauge logical qubits}
\label{alg:meas}
\begin{algorithmic}[1]
\Require data block with factors $a,b,c,d$. Group elements $g,t\in G$ and an even-weight factor $c'\in\R$ satisfying $(1+g)c'=c+ta$.
\Ensure eigenvalue of the measured compute operator.
\State Prepare the ancilla block, with factors $a,b,c',d$, in $|0\rangle^{\otimes N}$ and measure its $X$ checks.
\State Apply CNOT from data qubit $(P,h)$ to ancilla qubit $(P,h)$ for all sectors $P$ and positions $h$.
\State Apply CNOT from $(P,h)$ to $(P,hg)$ for $P\in\{ac,bc,cd\}$ and all $h$.
\State Apply CNOT from $(bc,h)$ to $(ab,ht)$ and from $(cd,h)$ to $(ad,ht)$ for all $h$.
\State Measure the ancilla $X$ checks.
\State Decode the ancilla $Z$ error $e$ from steps~1 and~5 and record $F^\T e$ in the data Pauli frame.
\State Measure every ancilla qubit in the $Z$ basis.
\State Perform the next syndrome round of the data block.
\State Decode the ancilla outcomes jointly with the data syndrome and correct the outcomes.
\State \Return the parity of the corrected outcomes on the support of the ancilla logical $Z$ that acquires the target operator.
\end{algorithmic}
\end{algorithm}

A fault is a single-qubit Pauli error on a data or ancilla qubit at any point of Algorithm~\ref{alg:meas}. Check measurements are treated as noiseless, their errors being handled by the metachecks of the two codes. A set of faults is undetected if all check outcomes of steps~1, 5 and~8 agree with the fault-free case, and harmful if it changes the measurement result or leaves a logical error on the data block. The fault distance of the circuit is the minimum size of an undetected harmful set of faults. Let $H_X,H_Z$ and $H'_X,H'_Z$ be the binary check matrices of the data and ancilla blocks, let $z'\in\F^N$ be the support of the ancilla logical $Z$ operator that acquires the target operator, and let $W$ be the span of the data operators acquired by the ancilla logical $Z$ operators. Define
\begin{align}
d_X&=\min\bigl\{\,2|e_d|+|e_a|\;:\;e_d,e_a\in\F^N,\ H'_Z(Fe_d+e_a)=0,\ z'\cdot(Fe_d+e_a)=1\,\bigr\},\label{eq:dX}\\
d_Z&=\min\bigl\{\,|\delta_b|+|\delta_a|+|e|\;:\;\delta_b,\delta_a,e\in\F^N,\ H'_X(\delta_b+\delta_a)=0,\nonumber\\
&\hphantom{=\min\bigl\{\,|\delta_b|+|\delta_a|+|e|\;:\;}F^\T\delta_b+e\in\ker H_X\setminus(\operatorname{rowspace}H_Z+W)\,\bigr\}.\label{eq:dZ}
\end{align}

\begin{prop}\label{prop:fd}
The fault distance of the circuit is $\min(d,d_X,d_Z)$.
\end{prop}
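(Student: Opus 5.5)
We prove the two inequalities separately: every undetected harmful fault set has size at least $\min(d,d_X,d_Z)$, and each of the three quantities is attained by some undetected harmful set. The first step is to propagate every fault to a canonical location. Because the circuit consists only of CNOTs from data to ancilla, Proposition~\ref{prop:chain} describes the propagation. An $X$ fault on a data qubit before the circuit spreads to $Fe_d$ on the ancilla, and an $X$ fault on the ancilla stays where it is. A $Z$ fault on an ancilla qubit propagates back to the data as $F^\T$ of its support, and a $Z$ fault on data stays on the data. Faults inside the circuit, between layers, are handled by splitting $F$ into its layer matrices. We then argue that each such fault is equivalent to a fault before the circuit plus a fault after it with the same weight accounting. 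This is where the factor $2|e_d|$ in \eqref{eq:dX} comes from: a data $X$ error that is caught by the next data syndrome round (step~8) must be paired with a compensating fault to remain undetected.

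\textbf{$X$-type faults.} These can only corrupt the $Z$-basis outcome of the ancilla. The net ancilla $X$ pattern $Fe_d+e_a$ goes undetected in the transversal $Z$ readout exactly when $H'_Z(Fe_d+e_a)=0$. It flips the result exactly when $z'\cdot(Fe_d+e_a)=1$. Any residual data $X$ error $e_d$ is seen by the data round of step~8 unless it is cancelled. Cancelling it costs a second fault of the same weight after the circuit, which gives the cost $2|e_d|$. If instead $e_d$ lies in $\ker H_Z$ and is a nontrivial logical, it is a logical $X$ error on the data block of weight at least $d$.

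\textbf{$Z$-type faults.} These cannot change the $Z$ readout directly, but they can leave a logical $Z$ error on the data block. We split ancilla $Z$ faults into $\delta_b$, occurring before the CNOTs, and $\delta_a$, occurring after them, and let $e$ be the data $Z$ faults. The ancilla $X$ checks measured in steps~1 and~5 see the combination $\delta_b+\delta_a$, so the fault set is undetected exactly when $H'_X(\delta_b+\delta_a)=0$. Step~6 corrects the propagated part using the recorded frame $F^\T e$. The remaining data $Z$ error is $F^\T\delta_b+e$. This error is harmful when it commutes with $H_X$ but is neither a stabilizer nor an element of $W$. The span $W$ is excluded because the ancilla $Z$ measurement projects onto an eigenstate of those operators, so errors in $W$ act only as a phase.

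\textbf{Mixed faults.} Mixed $Y$ faults decompose into these two cases. An $X$ fault and a $Z$ fault at the same location count as a single fault, so the minimum over mixed sets is no smaller than the minimum of the pure cases. Each minimizer of \eqref{eq:dX} or \eqref{eq:dZ}, and each minimum-weight data logical, can be realized explicitly, which gives equality.

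\textbf{Main obstacle.} The hardest step is the reduction of mid-circuit faults to the canonical positions without losing weight. It requires checking that the partial layer products of $F$ do not generate cheaper patterns than those captured by $e_d$, $e_a$, $\delta_b$, $\delta_a$. The approach is to use the fact that every fault inside the circuit commutes with the remaining layers, up to applying part of $F$. One then shows that such a partial application is equivalent to a combination of before and after faults of no greater total weight.
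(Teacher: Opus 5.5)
Your overall decomposition matches the paper's proof: you split into $X$ and $Z$ parts, use readout error $Fe_d+e_a$ and residual data $Z$ error $F^{\T}\delta_b+e$, note that $W$ acts trivially after the measurement, and realize explicit minimizers for the upper bound. However, your lower bound for $X$-type faults has a genuine gap.

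You assert that cancelling a data fault $e_d$ before the step-8 syndrome round ``costs a second fault of the same weight''. This is false in general. The after-layer fault $e'_d$ only needs $e_d+e'_d\in\ker H_Z$. If $e_d+e'_d=s$ is a nonzero data $X$ stabilizer, $e'_d$ can be much lighter than $e_d$, so the true cost $|e_d|+|e'_d|+|e_a|$ can be strictly below $2|e_d|+|e_a|$. To get $\ge d_X$ you must show that $(e'_d,e_a)$ is itself feasible in \eqref{eq:dX}. Replacing $e_d$ by $e'_d$ changes the readout error by $Fs$, so you need $H'_Z(Fe_d+e_a)=0$ and $z'\cdot(Fe_d+e_a)=1$ to survive that change. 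Proposition~\ref{prop:chain}(i) supplies exactly this: $F$ sends data $X$ checks to combinations of ancilla $X$ checks, so $Fs$ is an ancilla $X$ stabilizer and commutes with $H'_Z$ and $z'$. Then $|e_d|+|e'_d|+|e_a|\ge 2\min(|e_d|,|e'_d|)+|e_a|\ge d_X$.

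The case split must also be on $e_d+e'_d$ (nontrivial logical, costing $\ge d$, versus stabilizer), not on $e_d$. You cite Proposition~\ref{prop:chain} for ``describing the propagation'', but propagation is just $F$ and $F^{\T}$. Its actual role is this chain-map property, which your argument never uses.

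Your ``main obstacle'' is left as an unproved assertion, and the point above shows why it is not routine. The chain-map property holds for the full $F$ but not for partial layer products. Since layer~1 is the identity, a data $X$ fault $x$ between layers 1 and 2 gives readout error $(F+I)x$. For a data stabilizer $s$, $(F+I)s=Fs+s$ need not be an ancilla stabilizer: with $c$ replaced by $c'$, the data and ancilla $X$ checks of sector $abc$ differ by $c+c'$ on qubit sector $ab$. So the substitution trick is unavailable mid-circuit. Suppose the cancelling fault $e'_d\in x+\operatorname{rowspace}H_X$ is lighter than $x$. Then neither obvious rewrite, $(e_d,e_a)=(x,x)$ or $(0,(F+I)x)$, is evidently of $d_X$-weight at most the true cost $|x|+|e'_d|$.

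The paper's proof does not perform this reduction either. It places data $X$ faults and ancilla $Z$ faults only before or after the block of CNOT layers. You should either adopt that fault model explicitly or give an actual argument.

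Two smaller corrections:
\begin{itemize}
\item The harmful condition for the $Z$ part is $F^{\T}\delta_b+e\notin\operatorname{rowspace}H_Z+W$, the sum, not ``neither a stabilizer nor in $W$''.
\item For an undetected set the ancilla syndromes of steps~1 and~5 agree, so step~6 records nothing rather than ``correcting the propagated part''.
\end{itemize}
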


\begin{proof}
Both codes are CSS and the CNOTs map $X$ to $X$ and $Z$ to $Z$, so the $X$-type and $Z$-type parts of a set of faults are detected by disjoint sets of checks and act on the result independently. It suffices to treat the two types separately.

Let $e_d$, $e'_d$ and $e_a$ be the $X$ faults on the data block before the CNOT layers, on the data block after the layers, and on the ancilla block at any time. The data syndrome of step~8 is $H_Z(e_d+e'_d)$ and the readout of step~7 carries the error $Fe_d+e_a$, since $X$ faults on the ancilla block and $X$ faults on the data block after the layers are not copied. The set is undetected if $e_d+e'_d\in\ker H_Z$ and $H'_Z(Fe_d+e_a)=0$, and harmful if $z'\cdot(Fe_d+e_a)=1$ or $e_d+e'_d$ is a nontrivial logical operator. In the latter case the size is at least $d$. Otherwise $e_d+e'_d$ is a data $X$ stabilizer $s$, and $Fs$ is an ancilla $X$ stabilizer by Proposition~\ref{prop:chain}, so replacing $e_d$ by $e'_d$ changes $Fe_d+e_a$ by $Fs$ and preserves both conditions. Hence $(e'_d,e'_d,e_a)$ is undetected and harmful as well, and $|e_d|+|e'_d|+|e_a|\ge2\min(|e_d|,|e'_d|)+|e_a|\ge d_X$. Conversely a minimizer of~\eqref{eq:dX}, placed before and after the layers, is undetected and harmful, so the $X$-type fault distance is $\min(d,d_X)$.

Let $\delta_b$ and $\delta_a$ be the $Z$ faults on the ancilla block before and after the layers but before step~5, and $e$ the $Z$ faults on the data block. $Z$ faults on the ancilla block after step~5 do not affect the readout and are not copied. Step~5 measures $H'_X(\delta_b+\delta_a)$, and the data block carries the error $F^\T\delta_b+e$, since $\delta_a$ is not copied. The set is undetected if $H'_X(\delta_b+\delta_a)=0$ and $F^\T\delta_b+e\in\ker H_X$, and harmful if $F^\T\delta_b+e$ is not in $\operatorname{rowspace}H_Z+W$, because operators in this space act trivially on the post-measurement state. This is the definition of $d_Z$, which is at most $d$ since $\delta_b=\delta_a=0$ and $e$ a minimum-weight logical operator is admissible.
\end{proof}

The minima in~\eqref{eq:dX} and~\eqref{eq:dZ} were estimated with the randomized information-set decoder used for the code distances, with $10^5$ and $3\cdot10^4$ trials respectively, and are upper bounds. For every circuit in Table~\ref{tab:hom_meas} the estimate of $d_Z$ equals $d$, so the fault distance is governed by $d_X$. Table~\ref{tab:hom_meas} lists, for each code, a set of three circuits whose ancilla block has the check weights of the data code and, for $\code{162,6,14}$ and $\code{228,6,17}$, a set with maximal ancilla check weight $10$. For $\code{72,6,8}$ the higher check weight ancilla block gives no improvement.

Since the gadgets only involve an ancillary quadcycle code, the ancilla block can be implemented efficiently using the same atom-movement scheme as in Section~\ref{sec:appendix_compilation_setup}~\cite{SI}. Since the homomorphic $\mathrm{CNOT}$ circuits are determined by $R$-linear maps, we expect that the atom-movements for these circuits are highly regular and can be done with minimal overhead. We note that the logical measurement gadgets we present here are not optimized, and in practice one may choose to use lattice-surgery based methods \cite{cohen2022low, zheng_high-rate_2025, baspin2025fast, cowtan_fast_2025} to implement them more efficiently.

\begin{table}[t]\centering\footnotesize\setlength{\tabcolsep}{3.5pt}
\resizebox{\textwidth}{!}{%
\begin{tabular}{@{}l c l l l c c@{}}\toprule
code & check weight & replaced factor & entries of $A$ & measured & $\mathrm{CNOT}$ depth & fault distance \\\midrule
$\code{54,6,6}$ & 6 & none & $A_{aa}=x+x^2$ & $\bZ_{bc}$ & 2 & 6\\
 & & none & $A_{bb}=y+y^2$ & $\bZ_{ac}$ & 2 & 6\\
 & & none & $A_{cc}=xy^2+x^2y$ & $\bZ_{ab}$ & 2 & 6\\\midrule
$\code{72,6,8}$ & 8 & $c'=xy+x^2y$ & $A_{cc}=1+x$, $A_{bc}=x$ & $\bZ_2\bZ_3$ & 3 & 8\\
 & & $a'=1+x$ & $A_{aa}=1+x$, $A_{ca}=x$ & $\bZ_1$ & 3 & 8\\
 & & $a'=y^3+x^2y^2$ & $A_{aa}=1+x^2y^3$, $A_{ba}=y^3$ & $\bZ_1\bZ_2$ & 3 & 8\\\midrule
$\code{162,6,14}$ & 8 & $c'=y^2+xy$, $d'=1+y^6+x+xy^2$ & $A_{cc}=1+xy^8$, $A_{cd}=y$ & $\bZ_{ab}$ & 3 & 14\\
 & & $a'=1+xy^5$ & $A_{aa}=1+y^3$, $A_{ba}=y^8$, $A_{ca}=xy^3$ & $\bZ_{bc}\bZ_{ac}\bZ_{ab}$ & 4 & 14\\
 & & $b'=1+x$ & $A_{bb}=1+x^2$, $A_{cb}=1$ & $\bZ_{ac}\bZ_{ab}$ & 3 & 13\\
 & 10 & $c'=1+y+y^2+x^2y$ & $A_{cc}=1+y^8$, $A_{bc}=x^2y^8$ & $\bZ_{ac}\bZ_{ab}$ & 3 & 14\\
 & & $a'=y^8+xy^2+x^2y^5+x^2y^7$ & $A_{aa}=1+xy^3$, $A_{ca}=y^7$ & $\bZ_{bc}\bZ_{ab}$ & 3 & 14\\
 & & $c'=y^2+y^4+y^7+x^2y^3$ & $A_{cc}=1+y^6$, $A_{ac}=y^7$, $A_{bc}=x^2y$ & $\bZ_{bc}\bZ_{ac}\bZ_{ab}$ & 4 & 14\\\midrule
$\code{228,6,17}$ & 8 & $c'=y^4+x$ & $A_{cc}=1+y$, $A_{bc}=y$ & $\bZ_2$ & 3 & 17\\
 & & $a'=y^{13}+x$ & $A_{aa}=1+y^8$, $A_{ca}=y^8$ & $\bZ_1$ & 3 & 15\\
 & & $c'=y^5+xy^{13}$ & $A_{cc}=1+y^6$, $A_{ac}=xy^{11}$ & $\bZ_1\bZ_3$ & 3 & 14\\
 & 10 & $c'=y^5+y^9+y^{13}+xy^{15}$ & $A_{cc}=1+y^4$, $A_{ac}=y^{15}$ & $\bZ_1$ & 3 & 17\\
 & & $a'=xy^7+xy^{10}+xy^{13}+xy^{16}$ & $A_{aa}=1+y^3$, $A_{ca}=xy^2$ & $\bZ_1\bZ_3$ & 3 & 17\\
 & & $a'=y^7+x+xy^9+xy^{14}$ & $A_{aa}=1+y^{14}$, $A_{ba}=y^4$ & $\bZ_1\bZ_2$ & 3 & 17\\
\bottomrule\end{tabular}}
\caption{Homomorphic measurement gadgets for $\cczstate$ extraction. Each group of three rows is a set of logical $\bZ$ measurements, each acting on a separate code block. We construct gadgets whose ancilla blocks have either the same maximum check weight as the data block or a maximum check weight larger by $2$; the latter typically yields larger fault distances. Ancilla group-algebra factors not listed equal the data factors. The 'measured' column identifies which three linearly independent $\bZ$ operators are measured: changing logical bases of the logical qubits supporting the \detstate state to these operators turns the state into two disjoint $\cczstate$ states after the logical measurements (up to $\overline{CZ}$ fixups). }

\label{tab:hom_meas}
\end{table}

\section{\texorpdfstring{Teleporting $\overline{CZ}$ corrections}{Teleporting CZ corrections}}\label{supp:teleporting-cz}
\begin{figure}
    \centering
    \includegraphics[width=\linewidth]{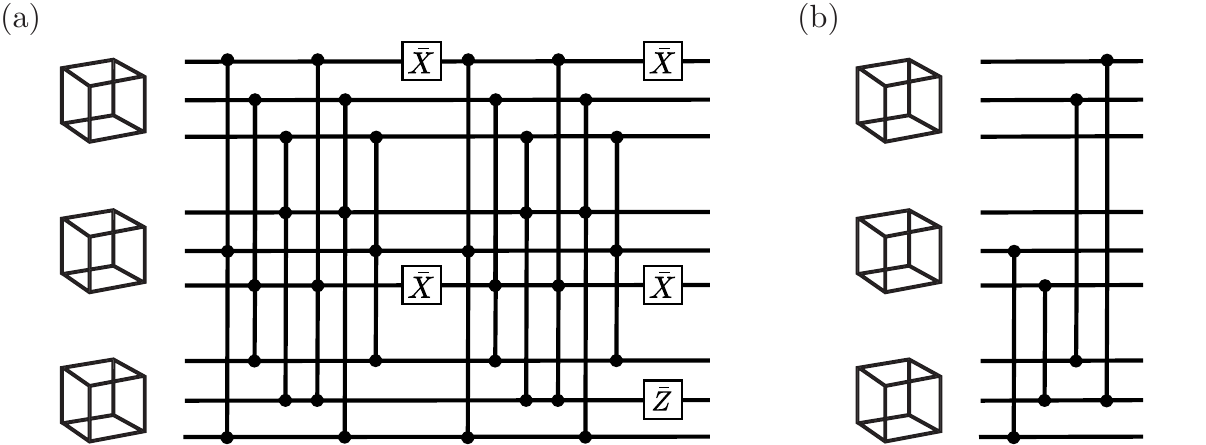}
    \caption{\textbf{Transversal, addressable CZ gates on tricycle codes}. The $k_{\textnormal{3D}} = 3$ logical qubits for each tricycle block are drawn with individual wires. (a) Conjugating logical $\bar X$ operators, e.g., on $\bar q_{11}$ and $\bar{q}_{23}$ with the transversal CCZ circuit results in a product of logical $X$, CZ, and $Z$ gates. Composing these with the same logical $X$ gates and the corresponding $Z$ gates yields (b) a transversal CZ circuit that implements $\overline{\text{CZ}}(\bar q_{22}, \bar q_{33})\overline{\text{CZ}}(\bar q_{23}, \bar q_{32})\overline{\text{CZ}}(\bar q_{12}, \bar q_{31})\overline{\text{CZ}}(\bar q_{11}, \bar q_{32})$. Applying the same method on the physical circuit, we find this transversal CZ is enacted by a depth-2 transversal CZ circuit, with the single-site $Z$ rotations tracked classically. }
    \label{fig:transversal_czs}
\end{figure}

We now demonstrate that the resource states to implement feed-forward Clifford corrections after injecting our magic states can be prepared transversally with our factory. This is also true for any CZ corrections needed after measuring the gauge qubits to extract $\cczstate$ states (Section~\ref{sec:Extracting CCZs}~\cite{SI}). Whether from measuring gauge qubits or in the teleportation stage, CZ corrections arise from a logical qubit being measured in $\ket{\bar 1}$ rather than $\ket{\bar 0}$, where no correction is necessary. Importantly, if a logical qubit is measured in $\ket{\bar 1}$, both of the incident CCZ gates simultaneously contribute a CZ correction, determined by propagating an $X$ logical operator through the CCZ circuit, as shown in Fig.~\ref{fig:transversal_czs}.

Thus, even though the \detstate contains six overlapping CCZs, there are fewer than $\binom{3}{2}^6 = 3^6 = 729$ possible patterns of CZ corrections. Rather, each of the nine logical qubits can be measured in $\ket{\bar 1}$, in which case it contributes a pattern of two CZ gates, with a total of $2^9 =512$ possible corrections. These can be derived from variants of Fig.~\ref{fig:transversal_czs}, in which there are the same $2^9 =512$ ways to apply logical $\bar X$ operators. As both the physical and logical CCZ circuits have depth 2, the physical and logical depths of the resulting circuit can be bounded by at most 4, ensuring that every such CZ circuit is indeed transversal. With a judicious choice of representative $X$ operators the physical depth could be even smaller. For instance, we numerically found that on the $[[81, 3, 5]]$ tricycle codes, any of these transversal CZ circuits can be performed in depth $3$.

\section{Cost breakdown of magic factories}\label{supp:cost-breakdown}
We compute circuit volumes in a model where single-qubit gates, multi-qubit entangling gates (including CNOT, CZ, CCZ, and controlled-$S$), resets to $\ket 0$, and measurements in the $Z$ basis each occupy one layer of circuit depth. Resetting a qubit to $\ket +$ and measuring in the $X$ basis are therefore depth-2 operations, requiring a $Z$ reset followed by a Hadamard gate or a $Z$ measurement preceded by a Hadamard gate. This maintains consistency with circuit depth accounting in Ref.~\cite{sahay2026fold}.

\begin{table}[ht]
    \centering
    \begin{tabular}{lccc}
    \toprule
        Step & \makecell{Peak qubit\\count} & \makecell{Maximum\\depth} & \makecell{Space-time\\volume} \\
    \midrule
    Tricycle $\ket{\overline +}$ preparation $\times 3$ & 162 & 7 & 972 \\
    Quadcycle $\ket{\bar 0}$ preparation $\times 3$ & 270 & 10 & 1890 \\
    Tricycle fan-out gadget $\times 3$ & 162 & 2 & 243 \\
    Tricycle transversal CCZ & 162 & 1 & 162 \\
    Homomorphic CNOT & 243 & 1 & 243 \\
    Transversal tricycle $X$ measurement & 162 & 2 & 324 \\
    \midrule
    Total & & & 3834 \\
    \bottomrule
    \end{tabular}
    \caption{Space-time volume breakdown for the $[[27, 3, 3]] \rightarrow[[54, 6 ,6]]$ factory (small), including all three block pairs. The first two numerical columns give the peak qubit count and maximum depth of each stage considered separately. Their product is not the space-time volume in the final column, given that the peak count does not account for reductions in the number of qubits in use during a stage, and the overall depth of the circuit does not compose linearly from the depth of each stage due to parallelization.}
    \label{tab:breakdown}
\end{table}

Accounting for the fan-out, we include initialization of the additional tricycle qubits and the transversal encoding CNOT. Both copies participate in the transversal CCZ layer and are included in the final tricycle $X$-measurement row. The homomorphic CNOT acts on only half of the quadcycle data qubits, but we include the idling cost from the other half of the qubits. The space-time volume, estimated in terms of the circuit volume, for each stage is listed in Table~\ref{tab:breakdown}. Note that summing the volumes from each component separately results in a total volume estimate that is in fact accurate for the entire factory overall, as we verified these stages can be arranged such that there is no ``inactive'' circuit volume \cite{litinski_active_2022}. 

The space-time volumes of our larger factories scale proportionately with the code sizes, while the circuit depths remain fixed. The only exception to this is the quadcycle preparation circuit, where the circuit depth of measuring the $X$ stabilizers increases from $10$ to $12$, due to the fact that the $[[54, 6, 6]]$ is a 2-2-2-2 quadcycle code with check weight $6$, whereas the other quadcycle codes in Fig.~\ref{fig:dimjump} are all 4-2-2-2 codes with maximum check weight $8$. To obtain the expected space-time volume estimates per \detstate, which are halved to obtain the results in Table~\ref{tab:comparison}, we divide the total space-time volume for the factory by $\alpha^3$, where $\alpha$ is the acceptance fraction of a single block.  This estimate could be even a bit conservative due to error correlations between blocks, e.g., correlations from the three-qubit depolarizing channel following the transversal CCZ gate.

\section{Comparison to 2D-3D code-switching protocols}\label{sec: 2D-3D comparison} 

The technique of dimension jumping between codes of different algebraic or geometric dimensions has been explored in past works. Ref. \cite{raussendorf2005long} first introduced the notion of collapsing a 3D cluster state into a 2D cluster state. 
Later works extended this method to contract 3D topological codes to 2D topological codes and introduced ``expansion'' mechanisms based on gauge-fixing to grow from 2D codes into 3D codes \cite{paetznick2013universal, anderson2014fault, kubica2015universal, bombin2016dimensional, watson2015qudit, butt2024fault}. An alternative approach to dimension jumping based on transversal teleportation emerged as a compelling candidate since it allowed the state preparation of the target code to be performed offline. In retrospect, this transversal dimension jump can be understood in terms of a homomorphic CNOT \cite{huang_homomorphic_2023} that arises naturally between a three-dimensional topological code and a two-dimensional projection of it \cite{heussen2024efficient}. However, owing to the low encoding rates of topological codes, Ref. \cite{beverland2021cost} reported that code-switching between 2D and 3D topological codes does not offer substantial savings over complementary approaches to generating magic fault-tolerantly, calling into question the benefits of dimension jumping.

This line of work was in part reinvigorated by the advent of high-rate qLDPC codes. Ref. \cite{hong2024single} uses a generalization of dimensional contraction to prepare 2D hypergraph product (HGP) codes in constant depth by first initializing to a 3D HGP code. Ref. \cite{bergamaschi2025constant} applies the same mechanism to arbitrary CSS codes. Ref. \cite{golowich2025constant} considered switching between $(r-1)$- and $r$-dimensional expander HGP codes for any $r \geq 3$ by using contraction and transversal teleportation to respectively decrease and increase dimension, enabling single-shot, addressable Clifford gates. In a similar vein, Ref. \cite{tan2025single} achieves single-shot, addressable universality. There, the authors use transversal teleportation for their code-switching protocols in conjunction with contraction \cite{bergamaschi2025constant} to prepare their 2D codes in constant depth. Finally, Ref. \cite{li_transversal_2025} extends the homomorphic CNOT relating 2D/3D topological and hypergraph product codes to lifted product codes, including the subfamily of balanced product codes we consider in this work. We remark that most of these qLDPC constructions focus on asymptotic overheads and do not obviously yield advantageous protocols at finite sizes.

An exception is Ref. \cite{li_transversal_2025}, which considers magic-state generation with the same family of tricycle codes we consider in this work. However, the authors there benchmark---just as in Ref. \cite{menon_magic_2025}---immediately measuring the tricycle code transversally and applying full postselection. Of course, such a protocol destroys the magic state in the process and therefore is unrealistic. They also consider teleporting the tricycle magic state to bicycle codes, yielding a 3D-to-2D switching protocol. We remark on a few key distinctions from our protocol that arise as a consequence of that choice. First, the bicycle codes they pair with these tricycle codes have $k_{\textnormal{2D}} = 2 < k_{\textnormal{3D}} = 3$ (Ref. \cite{li_transversal_2025}, Table I). As a result, the chain map that defines the appropriate homomorphic CNOT is not logically surjective. Consequently, one tricycle code's worth of logical qubits must be teleported across two or more bicycle codes. Moreover, as the rank of the logical map is uniformly $2$ (or 0), there are unavoidably some logical qubits of the tricycle code that are distributed across multiple bicycle code blocks. The resulting magic state becomes \detstate dressed with CNOTs.

Second, unlike our quadcycle codes, bicycle codes are not single-shot state preparable. In order to make the transversal teleportation fault-tolerant, typically $d$ rounds of error correction are needed on the bicycle code. This does not contradict our earlier assumptions. There, the fact that the teleportation protocol remains fault tolerant, even though the quadcycle code can be prepared in a non-fault tolerant way, is a consequence of our logical chain map being injective, which makes it compatible with algorithmic fault-tolerance \cite{zhou_low-overhead_2025}. Thus, use of an auxiliary bicycle code initialized in $\ket{\bar 0}$ requires $O(d)$ rounds of error correction, significantly inflating the space-time volume of the protocol. One potential workaround for this issue is if the bicycle code block is not freshly prepared; rather, it stores the logical qubits on which we want to execute the magic gate. However, in this setting, we can no longer postselect on the transversal measurement of the tricycle code.

Regardless of whether the bicycle code is a logical ancilla block or not, the most formidable limitation of the tricycle-to-bicycle teleportation protocols is that the bicycle codes tend to have distances comparable to that of the tricycle codes. With the magic state unable to rapidly escape to a higher-distance code, even if postselection on the transversal measurement of the tricycle codes is performed, its benefits are nullified by the low distance of the bicycle code. As such, the postselection advantages reported in Refs. \cite{li_transversal_2025, menon_magic_2025} and studied in this work are lost in that setting. 

Finally, we remark that the ability to explore code-switching between 3D and 4D codes at practical blocklengths is due largely to the balanced product construction. While ideas about switching between 2D and 3D codes were generally expected to carry over to the 3D–4D case, the practicality of 4D codes based on topological and HGP constructions has remained in question. For instance, for HGP codes constructed as the four-fold product of one seed $[n, k]$ classical code of rate $R = k/n$, the blocklength scales rapidly as $\sim n^4$, and the rate drops off quickly as $\sim R^4$. For 4D balanced product codes, in contrast, the blocklength scales as $\sim 4n$, and the rates are observed to be more favorable.

\section{Direct code-growth from 3D to 4D}\label{sec:code_growth}

In the main text we presented a code-switching protocol between the tricycle and quadcycle codes in the form of a transversal homomorphic $\mathrm{CNOT}$ between appropriately chosen code-pairs, enabling an efficient state-teleportation mechanism. It is instead possible to directly grow the tricycle codes into the paired quadcycle code by simply measuring the stabilizers of the quadcycle code. This is possible because the same algebraic structure that gives rise to the homomorphic $\mathrm{CNOT}$ also can be used to describe a code growth procedure. As discussed in Section~\ref{sec:DJ CNOT}~\cite{SI}, this homomorphic CNOT is described by a chain inclusion $\iota: T_\bullet \rightarrow Q_\bullet$ from the tricycle code chain complex $T_\bullet$ to the corresponding quadcycle code chain complex $Q_\bullet$. By initializing the $n_{\textnormal{3D}}$ additional physical qubits of the quadcycle code in $\ket{+}$, which correspond to the standard basis of $Q_1/\im \iota_1$, any tricycle code $X$ stabilizer extends into a quadcycle code stabilizer with, in the absence of noise, the same measurement eigenvalue, which allows us to form $X$ detectors. A basis of linearly independent $Z$ stabilizers of the quadcycle code that has support on these additional qubits initialized in $\ket +$ will have locally randomized measurement outcomes. However, any product of $Z$ checks whose support vanishes on these added qubits should, barring noise, be a $Z$ stabilizer of the tricycle code and should yield a consistent measurement outcome. This enables $Z$ detectors. In fact, directly measuring the quadcycle stabilizers on top of the tricycle code after introducing additional qubits is equivalent to logical teleportation with the homomorphic CNOT up to circuit rewrite identities in the language of ZX diagrams \cite{coecke2011interacting}. In our simulations, we found direct stabilizer growth to have just a slightly lower acceptance fraction and higher logical error rate than transversal teleportation, due to the presence of some additional idling locations where an idling error could occur. The entire protocol, together with the fan-out gadgets for reducing the depth of the transversal $\mathrm{CCZ}$ circuit to $1$ (see Section~\ref{app:concat}~\cite{SI}), is summarized in \cref{fig:code-growth}.

\begin{figure}
    \centering
    \includegraphics[width=1.0\linewidth]{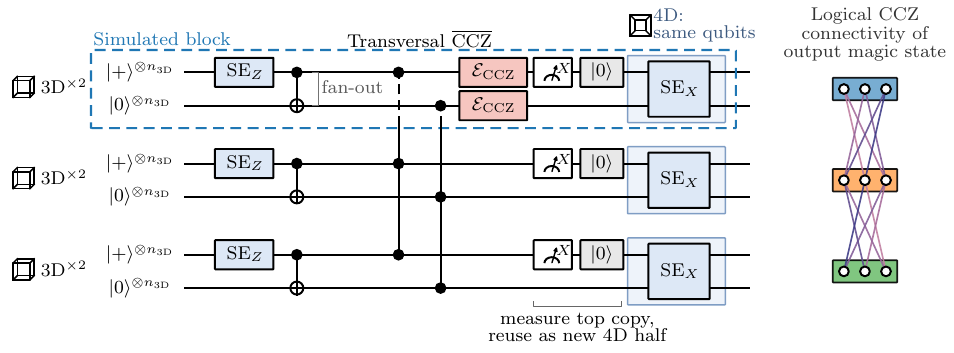}
    \caption{Direct code-growth protocol for tricycle to quadcycle escape. Three sets of two tricycle code blocks admit a depth-$1$ transversal $\mathrm{CCZ}$ circuit, using the fan-out gadget described in the main text. Additional measurements of $X$ stabilizers of the quadcycle code prepare the code-space of the quadcycle code in constant depth while preserving logical information.}
    \label{fig:code-growth}
\end{figure}

\section{Neutral atom array architecture overview}\label{sec:appendix_compilation_setup}

In this work we assume a single-zone architecture where Rydberg interactions can be performed throughout the zone. The SLM traps are initially defined as a list of coordinates relative to the bottom left corner of the zone. Each trap holds at most one atom at a time. The traps are positioned to allow Rydberg interactions between the atoms involved in an entangling gate at each time step, while keeping all other atoms far enough away to prevent unwanted interactions. The SLM array is assumed to be constant throughout each experiment. Moreover, we assume that the SLM trap lattice can have an arbitrary arrangement in the plane (e.g., rectangular, triangular, or a combination of the two), which is achievable in current hardware.

AODs create a grid of moving traps that pick up atoms from the SLM traps, transport them, and deposit them in their destination SLM traps. The AOD traps themselves are located at the intersection points of the column and row coordinates that are turned on. During one AOD movement, the row and column tones can be moved independently with the condition that the relative order of the tones does not change.

Formally, we model an AOD lattice as a series of $x,y$ coordinates (tones) $x_1 < x_2 < ... < x_p$ and $y_1 < y_2 < ... < y_q$ which creates a total of $pq$ traps located at coordinates $\{(x_i,y_j) \mid i \in [p], \ j \in [q] \}$. During a move, the tones are moved to the final positions $\overline{x_1} < \overline{x_2} < ... < \overline{x_p}$ and $\overline{y_1} < \overline{y_2} < ... < \overline{y_q}$, where each tone has the same relative position to the other tones, meaning that they transform according to $x_i \rightarrow \overline{x_i}$ and $y_j \rightarrow \overline{y_j}$. As a result, a trap that is initially at $(x_i,y_j)$ will end at $(\overline{x_i}, \overline{y_j})$ which allows us to define a displacement vector $\vec{v}_{i,j} = (\overline{x_i} - x_i, \overline{y_j} - y_j)$.
We use this vector notation to describe atom movements in the following sections.

\textbf{4D patches.}
Each of the three $[[54,6,6]]$ patches is a triangular grid of $21 \times 6$ doublets: for column $k \in \{0,\dots,20\}$ and row $d \in \{0,\dots,5\}$ the doublet center of copy $m$ is
\begin{equation}
  X = x_0 + p_x\!\left(k + \tfrac{1}{2}(d \bmod 2)\right),
  \qquad
  Y = Y_{4}^{(m)} + p_4\, d,
  \qquad
  p_4 = \sqrt{84} \approx 9.165 .
  \label{eq:traps-4d}
\end{equation}
Odd rows are shifted right by $p_x/2$; the row pitch $p_4$ is fixed by the safety rule $\sqrt{p_4^2 + (p_x/2 - 2)^2} = 10$, so that even with both rails of a doublet occupied during a gate, diagonal neighbors stay $10\,\mathrm{\mu m}$ apart. The unit cell $(i,j) \leftrightarrow \alpha = x^i y^j$ of the code occupies columns $k = 7i + c$, $c \in \{0,\dots,6\}$, and rows $d = 2j$ (lower) and $2j{+}1$ (upper), with the sector assignment
\begin{align*}
  \text{lower row } (d = 2j) &:\quad X_3\; Q_1\; Q_2\; Z_1\; X_1\; X_2\; Q_3\,,\\
  \text{upper row } (d = 2j{+}1) &:\quad Q_6\; Z_2\; Z_3\; X_4\; Q_4\; Q_5\; Z_4\,.
\end{align*}

\textbf{3D patches.}
Each $[[27,3,3]]$ patch is a rectangular grid of $18 \times 3$ doublets:
\begin{equation}
  X = x_0 + p_x\!\left(k + \tfrac{1}{2}\right),\quad k \in \{0,\dots,17\},
  \qquad
  Y = Y_{3}^{(m)} + p_3^{(m)}\, j,\quad j \in \{0,1,2\}.
  \label{eq:traps-3d}
\end{equation}
The constant $\tfrac{1}{2}$ column offset aligns every 3D doublet with the upper (odd-$d$) rows of the 4D patches, which is what the transversal 4D--3D CNOT relies on. The row pitch is $p_3^{(m)} = 10$ for the outer copies and $14$ for the middle copy ($m = 2$), which needs the extra room for CCZ blockade isolation. Cell $i$ occupies columns $k = 6i + c$ with the sector assignment $Q_3\; Z_1\; Z_2\; Q_1\; Q_2\; Z_3$ for $c = 0,\dots,5$ (the $X$ checks are dropped entirely).

\textbf{CCZ apex traps.}
Every doublet of the middle 3D copy carries a third, bare trap at
\begin{equation}
  \left(X,\; Y - \sqrt{3}\,\right),
  \label{eq:traps-apex}
\end{equation}
completing an equilateral triangle of side $2\,\mathrm{\mu m}$ with the doublet's two rails --- the assembly site of one CCZ triple (data doublets host layer~1, twin doublets layer~2; cf.\ Section~\ref{sec:appendix_compilation_ccz}~\cite{SI}).

\begin{table}[t]
  \centering
  \begin{tabular}{lccc}
    \toprule
    band & base ordinate $Y$ & row pitch & rows\\
    \midrule
    \multicolumn{4}{l}{\emph{off-array (below)}}\\
    measurement zone, copy-$3$ sub-band & $-114$ & $10$ & $3$\\
    measurement zone, copy-$2$ sub-band & $-82$ & $10$ & $3$\\
    measurement zone, copy-$1$ sub-band & $-50$ & $10$ & $3$\\
    \midrule
    \multicolumn{4}{l}{\emph{on-array}}\\
    3D copy $3$ & $10$ & $10$ & $3$\\
    3D copy $2$ & $42$ & $14$ & $3$\\
    3D copy $1$ & $82$ & $10$ & $3$\\
    4D copy $3$ & $118$ & $\sqrt{84}$ & $6$\\
    4D copy $2$ & $130 + 5\sqrt{84} \approx 175.83$ & $\sqrt{84}$ & $6$\\
    4D copy $1$ & $142 + 10\sqrt{84} \approx 233.65$ & $\sqrt{84}$ & $6$\\
    \midrule
    \multicolumn{4}{l}{\emph{off-array (above)}}\\
    fresh-qubit reservoir & \multicolumn{3}{c}{the $108$ $Z$-check doublets, shifted up by $190$}\\
    \bottomrule
  \end{tabular}
  \caption{Band ordinates of the trap layout (all in $\mathrm{\mu m}$; bands are stacked bottom-to-top as listed). The bases realize a $12\,\mathrm{\mu m}$ gap between adjacent patches, $16\,\mathrm{\mu m}$ between the 3D and the 4D group, $40\,\mathrm{\mu m}$ between the topmost zone row and the bottom 3D row, and $12\,\mathrm{\mu m}$ between zone sub-bands.}
  \label{tab:trap-bands}
\end{table}

\textbf{Initial occupancy and totals.}
In the initial layout the 3D patches are fully occupied, while the 4D atoms sit on the \emph{compact} sub-grid of columns $k = 6i + c'$ with cell assignment $X_3\,Q_1\,Q_2\,X_1\,X_2\,Q_3$ (lower row) and $Q_6\,Z_2\,X_4\,Q_4\,Q_5\,Z_4$ (upper row), $c' = 0,\dots,5$, with every $Z$-check position on the array empty. The compact columns are a subset of the grid of Eq.~\eqref{eq:traps-4d}; the remaining columns are the spare traps into which the patches re-expand to the full $2 \times 7$ cell in the last stage. The $108$ atoms that become those $Z$ checks are not absent, only \emph{off-array}: they wait in a reservoir of $108$ doublets obtained by translating the $Z$-check doublets up by $190\,\mathrm{\mu m}$, and descend into place by that single uniform translation while the patches expand beneath them.
The on-array layout therefore contains $756$ 4D traps, $324$ 3D traps and $54$ apex traps --- $\mathbf{1134}$ distinct traps inside $[9, 257] \times [10, 279.5]\,\mathrm{\mu m}$, i.e.\ $248 \times 269.5\,\mathrm{\mu m}$. The two off-array staging regions add $162$ readout-zone traps below and $216$ reservoir traps above, for $1512$ in total. The formulas above specify the complete trap layout.

\section{Implementation of syndrome extraction}\label{sec:appendix_compilation_syndrome}

The quasi-cyclic and product structure of lifted product codes enables parallel atom moves that naturally facilitate efficient implementation of these codes on neutral atom arrays.

Both the tricycle and quadcycle codes can be understood as generalizations of 2D Quasi-Abelian Lifted Product codes which, as shown by the authors in a previous work \cite{tan26syndrome}, can be scheduled for full syndrome extraction with near-optimal depth per round.
In this work, the magic state factory protocol in Fig.~\ref{fig:dimjump} requires only partial syndrome extraction: the $X$ checks for the quadcycle code and the $Z$ checks for the tricycle codes.
As such, we can simplify the circuit by performing the CNOTs between all qubit pairs specified by each term of the component circulant matrices $A, B, C, D$.
The circuit is shown below. Its entangling-gate depth is minimal and equals the maximum weight of the checks being measured.

\begin{figure}[ht!]
    \centering
    \includegraphics[width=1.0\textwidth]{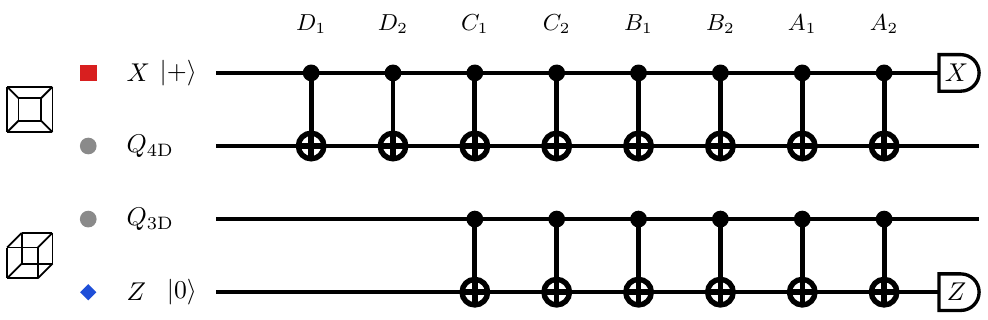}
    \caption{Schematic circuit diagram for the partial syndrome extraction of the tricycle and quadcycle codes from \cref{fig:dimjump}. The qubit lines represent multiple qubits and the CNOT gates represent parallel CNOTs between all qubits that are paired by that respective polynomial term. Note that the tricycle code idles when the $D$ terms are being performed on the quadcycle code.}
\end{figure}

The above circuit is naturally suited for neutral atoms. 
At each stage the check qubits are moved by some term $P_i$ where $P \in \{A, B, C, D \}$ and $i \in [W_P]$ where $W_P$ is the number of terms in the generating polynomial for circulant matrix $P$.
We can model the move of some qubit from sector $\mathcal{S}$ of unit cell $\mathcal{C}$ as an addition between two vectors:
\begin{equation}
    \vec{v}_\mathrm{cyclic} (\mathcal{C}, P_i) + \vec{v}_\mathrm{in-cell} (\mathcal{S}, P)
\end{equation}

Note here the dependencies of the two vectors. 
The cyclic-shift vector depends only on the starting unit cell $\mathcal{C}$ and the monomial $P_i$; it points from one unit cell to another.
The in-cell shift only depends on what circulant matrix $P$ is currently being implemented and which sector $\mathcal{S}$ the qubit belongs to: this is decided by the construction of the full parity-check matrices in Eq.~\eqref{eq: quadcycle checks}.
For the identity polynomial term, $1$, there is no cyclic shift: $\vec{v}_\mathrm{cyclic} = \vec{0}$. All moves then occur within a cell, between sectors paired by the circulant matrix $P$.

We define a set of qubits that move with the same $\vec{v}_\mathrm{cyclic}$ as a \textit{toric wrap class}. For the identity term, there is one toric wrap class, with $\vec{v}_\mathrm{cyclic} = \vec{0}$. For univariate terms, which are powers of a single variable (e.g., $x$, $y^2$), there are two classes: one that moves in the indicated direction, and another that wraps around one of the cyclic boundaries. Finally, for bivariate terms (e.g., $xy$, $x^2 y$), there are four classes: no wrap, wrap around $x$, wrap around $y$, wrap around both $x$ and $y$.

The split between $\vec{v}_\mathrm{cyclic}$ and $\vec{v}_\mathrm{in-cell}$ allows us to reason about which moves can be parallelized using AODs.
The unit cell of the 4D code has been designed to allow for every circulant matrix to move the paired qubits with a single AOD, as shown in \cref{fig: unit cell pairings}.
\begin{figure}[ht!]

    \centering
    \includegraphics[width=1.0\textwidth]{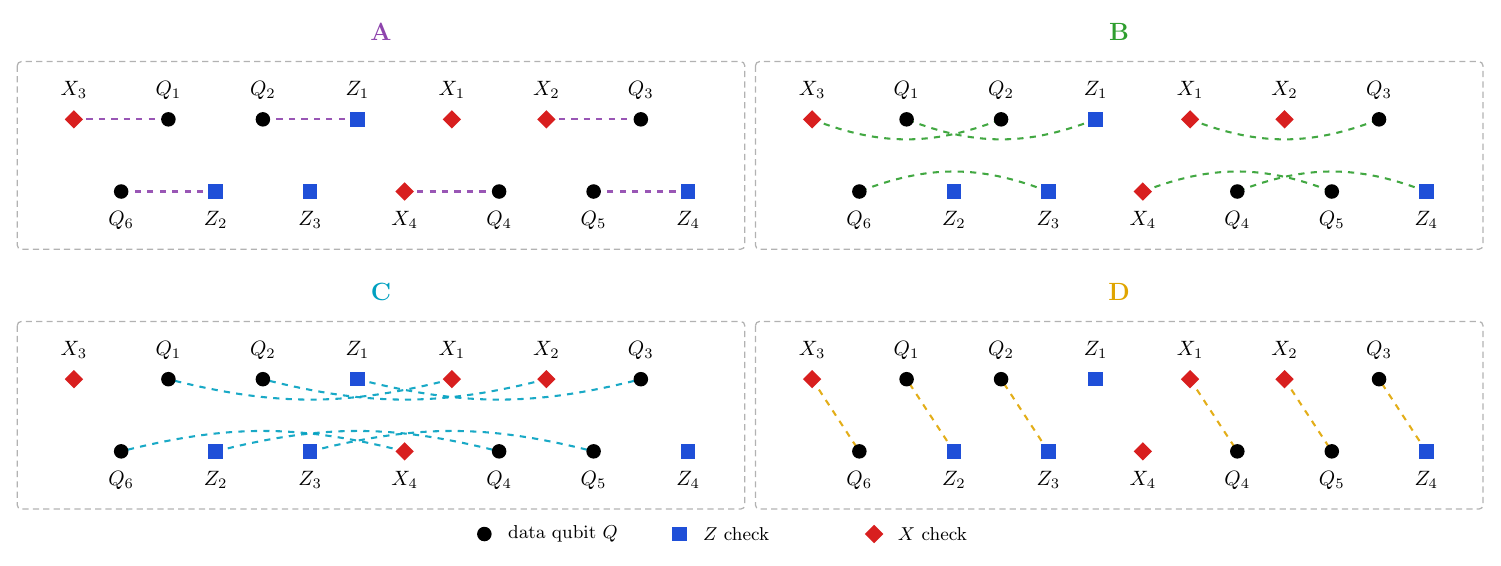}
    \caption{Atom pairings within a quadcycle unit cell for the four component circulant matrices. All pairings for $A$, $B$, and $C$ lie within a single row; $D$, which extends the construction to 4D, is the only matrix that pairs qubits in different rows. The pairings shown correspond to the identity polynomial term, $1$, with no cyclic shift.
    \label{fig: unit cell pairings}}
\end{figure}
One AOD can therefore move the paired atoms within every cell, allowing an identity term to be implemented with a single AOD move for the entire block.
Terms with two or four toric wrap classes require two or four AODs, respectively, to move all classes simultaneously. These cases are illustrated in \cref{fig:compilation} and discussed in Section~\ref{sec:appendix_compilation_misc}~\cite{SI}.

\begin{figure}[ht!]

    \centering
    \includegraphics[width=1.0\textwidth]{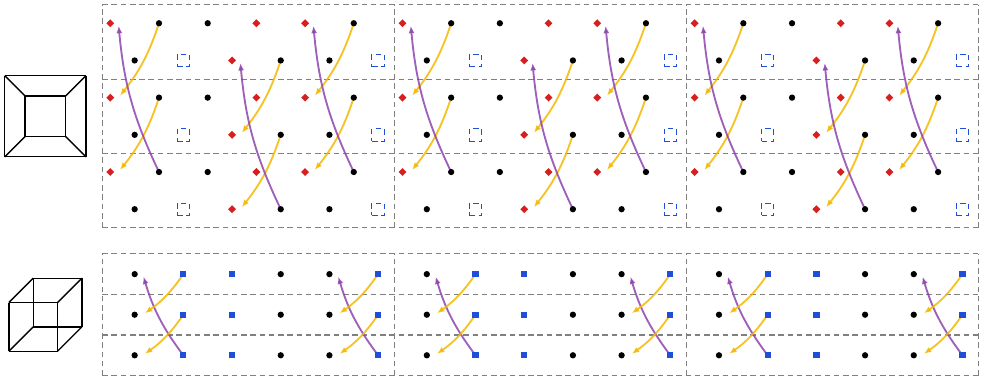}
    \caption{
    A quadcycle and a tricycle code patch performing the $A_2 = y$ term with two shared AODs. The first AOD (yellow) performs a cyclic shift of $y$ while the second AOD (purple) performs the wrap-around move. Note that the unit cells of both code patches are compressed because only one type of check is measured. The tricycle code's unit cell is reduced to 6 qubits without the $X$ checks, while the quadcycle has been compressed to a $2\times 6$ cell by consolidating the $Z_1,Z_3$ checks; moreover, the other $Z$ checks are also missing and are represented by the hollow blue squares. For visualization purposes, only 2 unit cells per code are shown to be moving; in reality, the same AODs would pick up the appropriate qubits from every unit cell of the diagram.
    \label{fig:coupled_AOD}
    }
\end{figure}

Since the quadcycle and tricycle codes' unit cells share a common row, the syndrome extraction for both types of codes can be done in parallel without extra AOD moves.
By placing the tricycle code's unit cells on the same rows as the quadcycle code's second row we can guarantee that the moves corresponding to the $A, B, C$ terms will have the same $\vec{v}_{polynomial}$ while the $\vec{v}_{cyclic}$ of the tricycle code will be analogous to that of the quadcycle; as such, the moves on these three polynomials could be done in parallel by sharing the AODs between the two codes.
Because $X$ and $Z$ checks are extracted on the quadcycle and tricycle codes, respectively, and $H_X$ uses transposed circulant matrices whereas $H_Z$ does not, sharing AODs requires moving different types of qubits in the two codes: one moves the check qubits, while the other moves the data qubits.
Since the tricycle codes have lower distance, and since moving atoms can lead to more errors, we chose to move the data qubits of the quadcycle code and the check qubits of the tricycle code.

\section{Implementation of transversal CCZ}\label{sec:appendix_compilation_ccz}

The unit cell layout of \cref{fig:coupled_AOD} (from left to right: $Q_3\,Z_1\,Z_2\,Q_1\,Q_2\,Z_3$) allows for highly parallel implementation of the transversal CCZ. Note that the $X$-checks are dropped since the magic state preparation measures only the $Z$ checks. Throughout this section we write $Q_s^{(m)}$ for the data qubits in sector $Q_s$ of copy $m$, where $s,m\in\{1,2,3\}$.
Further, to identify the representative qubit of some unit cell $(i,j)$, we write $Q_s^{(m)}(\alpha)$ where $\alpha = x^{i} y^{j} \in G = \mathbb{Z}_3 \times \mathbb{Z}_3$.
Similarly, we use $Z_s^{(m)}(\alpha)$ for check qubits.
For the realization of the CCZ, only the data sectors and their repetition-code twins (introduced below) move; copy~$2$ never moves.

The three copies are stacked in three horizontal bands, column-aligned, with copy~$1$ on top, copy~$2$ in the middle, and copy~$3$ on the bottom. 
Group elements are $\alpha=x^{i}y^{j}$; the cell of $\alpha$ sits at horizontal index $i$. 
 The scheme has depth two \cite{jacob2025singleshot}: every physical qubit participates in exactly two of the $54$ CCZ gates, so they split into two layers of $27$ disjoint gates.
For each layer, we list the three orderings of sectors that participate in CCZ gates, with tuple entries corresponding to $(\text{copy }1,\text{copy }2,\text{copy }3)$:
\begin{align*}
  \text{layer 1} &: (Q_1Q_2Q_3),\ (Q_2Q_3Q_1),\ (Q_3Q_1Q_2)
                    && \text{(the even permutation orderings)},\\
  \text{layer 2} &: (Q_1Q_3Q_2),\ (Q_2Q_1Q_3),\ (Q_3Q_2Q_1)
                    && \text{(the odd permutation orderings)}.
\end{align*}
Realized as two sequential stages, the two layers would cost two Rydberg flashes, since they reuse the same $81$ atoms. 

However, the concatenation from Section~\ref{app:concat}~\cite{SI} allows the implementation of both layers in a single flash.
Moreover, the concatenation itself does not require any extra qubits as we can recycle the $Z$ ancillae. Alternatively, we can replace them with fresh atoms in order to minimize loss errors due to heating.
After the $Z$ checks are read out (the ancillae fly to a measurement zone below the array for the readout), they are replaced with fresh atoms (in state $|0\rangle$) that have been assumed to be idling in the measurement zone.
Each is then paired with the nearest data qubit of its own patch --- one lattice site over, $Z_{s+1}$ next to $Q_s$ (indices mod $3$) --- and a transversal CNOT turns each pair into a distance-$2$ repetition encoding of the data qubit. Every data qubit $Q_s^{(m)}(\alpha)$ now owns a \emph{twin} $\tilde{Q}_s^{(m)}(\alpha)$ carrying the same information, and since $\mathrm{CCZ}$ is diagonal, a gate may act on either member of a pair. We let layer~1 act on the original data qubits and layer~2 on the twins: the two layers then occupy disjoint atoms. The whole concatenation costs three transport jobs (each a single $81$-atom AOD shot: to the zone, to the CNOT partners, back to the ancilla traps), one readout pulse and one CNOT flash.
In both layers, copy~$2$ is a fixed anchor: it supplies the middle qubit of every gate and never moves --- layer~1 anchors on its data doublets, layer~2 on its twin doublets.
To assemble the qubits for one CCZ gate we bring the other two copies to the anchor, so that the qubit triplets participating in a physical CCZ gate are placed in equilateral triangles with a separation of $2\,\mathrm{\mu m}$.
Thus, with a single Rydberg pulse the physical CCZ is implemented on all $54$ triplets of qubits at once.
This protocol uses at its peak a total of $\mathbf{432}$ atoms: $270$ for the quadcycle patches (unit cell of $10$ qubits) and $162$ for the tricycle patches (unit cells of $6$) in accordance with the missing check qubits from \cref{fig:coupled_AOD}.

We define an ordering by a permutation $\pi=(\pi_1,\pi_2,\pi_3)$ of $(1,2,3)$, so that copy~$m$ contributes its sector $Q_{\pi_m}$.
The gate family over $\alpha\in G$ is
\begin{equation}
  \mathrm{CCZ}\!\left(
    Q_{\pi_1}^{(1)}(\alpha),\;
    Q_{\pi_2}^{(2)}\!\big(s_2^{\pi}\,\alpha\big),\;
    Q_{\pi_3}^{(3)}\!\big(s_3^{\pi}\,\alpha\big)
  \right),
  \label{eq:ccz-family}
\end{equation}
where $s_2^\pi, s_3^\pi$ are cyclic shifts over $G$. For the layer-2 orderings the gate acts on the twins, $\tilde{Q}$ in place of $Q$ throughout; a twin sits one lattice site from its partner, so the cell-level translations below are unchanged. 
The cyclic shifts $s_2,s_3$ can be determined from the cup product on the balanced product code whose construction is $G$-invariant, and therefore can be represented by quasi-cyclic sparse matrices~\cite{jacob2025singleshot, menon_magic_2025, breuckmann_cups_2025}.

Because copy~$2$ holds the target cell $s_2\alpha$, the two moving copies must translate onto it:
\begin{itemize}
  \item copy~$1$ carries $Q_{\pi_1}^{(1)}$ from cell $\alpha$ to cell
        $s_2\alpha$, so its translation is exactly $s_2$;
  \item copy~$3$ carries $Q_{\pi_3}^{(3)}$ from cell $s_3\alpha$ to cell
        $s_2\alpha$, so its translation is $s_2 s_3^{-1}$.
\end{itemize}
Two structural facts make these moves highly parallel:
\begin{enumerate}
  \item \emph{The product over $\alpha$ is free.} All nine group elements of a data sector share one cyclic translation, so a single AOD move shifts an entire nine-atom sector in one go.
  \item \emph{The flash is maximally parallel.} One Rydberg pulse realizes all $54$ $\mathrm{CCZ}$s at once ($6$ orderings $\times\ 9$ group elements --- both layers, accounting for the twins).
\end{enumerate}
The only operations that cost AOD moves are therefore the sector translations
$s_2$ (copy~$1$) and $s_2 s_3^{-1}$ (copy~$3$), collected per ordering in
Table~\ref{tab:ccz-shifts}.

\begin{table}[t]
  \centering
  \begin{tabular}{llcc}
    \toprule
    layer & ordering $(\pi_1\pi_2\pi_3)$ & copy-$1$ shift $s_2$ & copy-$3$ shift $s_2 s_3^{-1}$\\
    \midrule
    \multirow{3}{*}{1}
      & $(Q_1Q_2Q_3)$ & $y$      & $x^{2}$    \\
      & $(Q_2Q_3Q_1)$ & $x$      & $x^{2}y$   \\
      & $(Q_3Q_1Q_2)$ & $xy^{2}$ & $y^{2}$    \\
    \midrule
    \multirow{3}{*}{2}
      & $(Q_1Q_3Q_2)$ & $x$          & $\mathbf{1}$ \\
      & $(Q_2Q_1Q_3)$ & $x$          & $x^{2}$      \\
      & $(Q_3Q_2Q_1)$ & $\mathbf{1}$ & $x^{2}$      \\
    \bottomrule
  \end{tabular}
  \caption{The two sector translations for each ordering. Copy~$1$ shifts by $s_2$;
  copy~$3$ shifts by $s_2 s_3^{-1}$. A shift moving one coordinate of
  $\mathbb{Z}_3\times\mathbb{Z}_3$ costs two toric wrap pieces, one moving both
  coordinates costs four, and the identity $\textbf{1}$ costs one.}
  \label{tab:ccz-shifts}
\end{table}

Similar to the syndrome extraction moves, the move count is decided by the number of toric wrap classes. Assigning one class to each identity term, two to each univariate term, and four to each bivariate term in Table~\ref{tab:ccz-shifts} gives the naive count of single AOD translations per layer:

\begin{itemize}
  \item \textbf{layer 1}: $2+2+4=8$ for copy~1, $2+4+2=8$ for copy~3, total $\mathbf{16}$;
  \item \textbf{layer 2}: $2+2+1=5$ for copy~1, $1+2+2=5$ for copy~3, total $\mathbf{10}$.
\end{itemize}
The $26$ single-AOD translations would place all moving atoms from both layers before the single Rydberg pulse. We can further merge these rigid translations. The AOD model of Section~\ref{sec:appendix_compilation_setup}~\cite{SI} allows a single translation to apply a non-uniform but order-preserving separable map (a per-column $\Delta x$ and a per-row $\Delta y$), not merely a rigid translation.
Pieces from different blocks of the same copy and also from different layers may therefore share one shot whenever their combined map still preserves the $x$- and $y$-tone orders and picks up no stray
atom.
Greedily fusing compatible pieces against the dense unit-cell layout reduces the $26$ rigid translations to $20$ placement jobs; the remaining fusions are forbidden only because their tone grids would sweep up movers that have not departed yet. 
We therefore batch the placement in stages. The target fusions are computed as if every pending mover had already left and the batches are ordered so that the blocking movers depart first, re-fusing the remainder against the live layout after every batch. The placement then reaches the same tone-order floor for the return moves: $\mathbf{16}$ jobs each way, for a total of $\mathbf{32}$ AOD jobs.

After merging the moves, we batch them onto the available AODs, one job per AOD per batch.
With 2 AODs a batch runs two jobs at once (typically copy-$1$ movers alongside copy-$3$ movers), giving $16$ move batches; with 4 AODs a batch additionally overlaps both wrap fragments of a shift, halving this to $8$ move batches. The merged job count is the same in both cases---merging is a property of the AOD tone map, not of how many AODs are available; the AOD count
only sets how many jobs run in the same instant. Table~\ref{tab:ccz-budget}
collects the totals.

\begin{table}[t]
  \centering
  \begin{tabular}{lc}
    \toprule
     & count\\
    \midrule
    naive rigid translations (layer $1$ + layer $2$) & $16+10=26$\\
    \midrule
    merged placement jobs (way in)     & $16$\\
    merged return jobs (way out)       & $16$\\
    AOD jobs (place $+$ return)        & $\mathbf{32}$\\
    \midrule
    move batches (2 AODs)              & $8+8=\mathbf{16}$\\
    move batches (4 AODs)              & $4+4=\mathbf{8}$\\
    \bottomrule
  \end{tabular}
  \caption{Move budget for the transversal $\overline{\mathrm{CCZ}}$ on three
  $[[27,3,3]]$ copies, after the distance-$2$ concatenation. The fused jobs include both layers, which are placed together ahead of the single flash, thanks to the concatenation. The only
  unavoidable serialization is given by
  place $\to$ flash $\to$ return.}
  \label{tab:ccz-budget}
\end{table}
Hence, the transversal $\mathrm{CCZ}$, or equivalently the six logical
$\overline{\mathrm{CCZ}}$s given by
$$\prod_{\sigma\in S_3}\overline{\mathrm{CCZ}}
(\text{copy}_1{:}\,\sigma(1),\,\text{copy}_2{:}\,\sigma(2),\,\text{copy}_3{:}\,\sigma(3))$$
are implemented with a single Rydberg flash and $32$ AOD transport jobs,
which are grouped into $16$ move batches with $2$ AODs or $8$ move
batches with $4$ AODs, on top of the concatenation's three $81$-atom transport
jobs, one readout pulse and one CNOT flash.

\section{Trade-off and generalization of implementation}\label{sec:appendix_compilation_misc}
As discussed in the preceding sections, the choice of code affects the cost of atom placement and routing.
If the generating polynomials have fewer bivariate monomials or equivalently more identity and univariate monomials, then the number of toric wrap classes decreases, yielding faster syndrome extraction circuit implementations and ultimately faster cycle times.

Formally, let the polynomial in question be $A = \sum_i^{T} x^{p_i} y^{q_i}$ where $T$ is the number of monomials in the polynomial, each term being a product of the two bases $x,y$ raised to powers $p_i, q_i$ respectively. 
Let the lift of the code be $L = \{\ell_x, \ell_y\}$ so that the unit cells are arranged onto a grid of size $\ell_x \times \ell_y$. We note that $x^{\ell_x} = y^{\ell_y} = 1$.
Then, we can write a cost function $\Theta$ for each of the monomials and expand it to the polynomial after summation:
\begin{align}
    \Theta(A, L) &= \sum_i^T \Theta(x^{p_i} y^{q_i}, L)
    \label{eq:poly_cost_function}\\
    \Theta(x^{p_i} y^{q_i}, L) &= \label{eq:term_cost_function} \begin{cases}
        \tau_0 &\textrm{ if } p=q=0 \\
        \tau_0+ \max\left( \Delta(0,q), \Delta(0,\ell_y-q) \right) &\textrm{ if } p = 0, q\ne 0 \\
        \tau_0+\max\left(\Delta(p,0), \Delta(\ell_x-p,0) \right) &\textrm{ if } p \ne 0, q=0 \\
        \tau_0+\max\left(\Delta(p,q),\Delta(p,\ell_y-q),\Delta(\ell_x-p,q),\Delta(\ell_x-p,\ell_y-q) \right) &\textrm{ if } p\ne 0, q\ne 0
    \end{cases}\\
    \Delta(p,q) &= \sqrt{\frac{1}{k} \sqrt{(u_x p)^2 + (u_y q)^2}}
\end{align}

The function $\Delta(p,q)$ estimates the time it takes to implement the move given by the monomial $x^p y^q$. The grid step lengths $u_x, u_y$ account for anisotropic spacings arising from the atom spacing and the dimensions of the unit cell. For example, the unit cell in Fig.~\ref{fig: unit cell pairings} is wider in the $x$ direction than in the $y$ direction. The constant $k$ relates the distance traveled to the time it takes the AOD to perform the move; according to the experimental implementation \cite{zhao2026towards} the shuttling time with AODs has a square-root dependence on the distance with a proportionality constant $k = 1375 \ {\rm m/s}^2$.

Assuming that monomials are implemented one at a time using four independent AODs, the movement cost for a monomial $x^{p_i} y^{q_i}$ is given by the toric wrap class that is slowest to implement. Depending on how many of $p,q$ are zero, we can have one, two or four toric wrap classes. For the identity monomial $p=q=0$ there is no dependence on distance since the atoms need to move only within their respective unit cells for a small distance.
We also add a constant $\tau_0$ to account for intra-cell movement $\vec{v}_{\mathrm{in-cell}}$, as well as the ramping time of the AOD, i.e., the transfer between AOD and SLM. Note that for small distances (e.g., when $\vec{v}_{\mathrm{cyclic}} = \vec{0}$), the movement time weakly depends on the distance and it is dominated by the transfer time from AOD to SLM.

\begin{figure}
    \centering
    \includegraphics[width=0.8\linewidth]{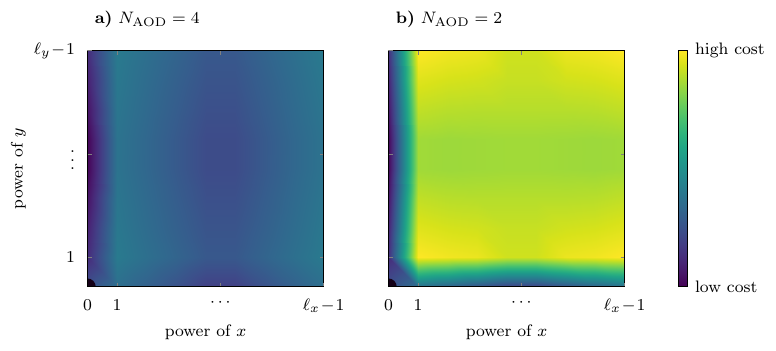}
    \caption{Heatmaps of the cost associated with implementing the different monomials using both 4 AODs (panel \textbf{a}) and 2 AODs (panel \textbf{b}). Both figures are drawn on the same scale. The results are illustrative: the quantitative costs depend on the hardware capabilities and the group sizes of the quantum error-correcting codes.}
    \label{fig:heatmap_comparisons}
\end{figure}

Equation~\eqref{eq:term_cost_function} presumes that all toric wrap classes of a monomial are displaced simultaneously, which for a generic monomial ($p \ne 0$, $q \ne 0$) requires four independent AODs. We now relax this assumption to the experimentally more common case of two AODs, where the four classes must be implemented in two successive batches.

It is convenient to name the wrap classes. For $p \ne 0$, $q \ne 0$ we collect their individual move times in the multiset
\begin{equation}
    \mathcal{D}(p,q,L) = \Big\{\, \Delta(p,q),\; \Delta(p,\ell_y-q),\;
        \Delta(\ell_x-p,q),\; \Delta(\ell_x-p,\ell_y-q) \,\Big\},
    \label{eq:wrap_class_set}
\end{equation}
and write its elements in non-increasing order as
\begin{equation}
    \Delta_{(1)} \;\ge\; \Delta_{(2)} \;\ge\; \Delta_{(3)} \;\ge\; \Delta_{(4)},
    \label{eq:order_statistics}
\end{equation}
suppressing the arguments $(p,q,L)$ for readability; the last line of Eq.~\eqref{eq:term_cost_function} is then simply $\tau_0 + \Delta_{(1)}$.

Every class within a batch is displaced in parallel, so a batch takes as long as its slowest member. Each batch is moreover a separate shuttling cycle and therefore carries its own overhead $\tau_0$: the atoms must be transferred from the SLM to the AOD, moved, and handed back before the following batch can be picked up. Partitioning the four classes into two pairs $\{\mathcal{P}_1,\mathcal{P}_2\}$ consequently costs $2\tau_0 + \max\mathcal{P}_1 + \max\mathcal{P}_2$. Minimizing over the three distinct pairings is immediate: the overhead is pairing-independent, while $\Delta_{(1)}$ is the maximum of whichever pair contains it and therefore contributes to the total regardless, leaving the remaining pair as cheap as possible once the second-slowest class $\Delta_{(2)}$ has been absorbed alongside $\Delta_{(1)}$. The optimal two-AOD schedule thus moves the two slowest classes together and the two fastest classes together, at a cost of $2\tau_0 + \Delta_{(1)} + \Delta_{(3)}$ -- twice the overhead, plus the slowest and the third-slowest wrap class. Monomials with a vanishing exponent carry only one or two wrap classes, which two AODs still discharge in a single batch; those terms pay a single $\tau_0$ and are unchanged. Collecting the cases,
\begin{align}
    \Theta^{(2)}(A, L) &= \sum_i^{T} \Theta^{(2)}\!\left(x^{p_i} y^{q_i}, L\right)
    \label{eq:poly_cost_2aod}\\
    \Theta^{(2)}\!\left(x^{p} y^{q}, L\right) &= \begin{cases}
        \tau_0 &\textrm{ if } p=q=0 \\
        \tau_0 + \max\left( \Delta(0,q), \Delta(0,\ell_y-q) \right)
            &\textrm{ if } p = 0,\ q \ne 0 \\
        \tau_0 + \max\left( \Delta(p,0), \Delta(\ell_x-p,0) \right)
            &\textrm{ if } p \ne 0,\ q = 0 \\
        2\tau_0 + \Delta_{(1)} + \Delta_{(3)}
            &\textrm{ if } p \ne 0,\ q \ne 0
    \end{cases}
    \label{eq:term_cost_2aod}
\end{align}

Both cost functions are instances of a single expression. Let $N$ be the number of available AODs and let $W$ be the number of wrap classes of the monomial: $W = 4$ generically, $W = 2$ when exactly one exponent vanishes, and $W = 1$ for the identity monomial, whose single class is trivial, $\Delta_{(1)} = 0$. The classes are discharged in $B = \lceil W/N \rceil$ batches of at most $N$, each batch greedily filled with the slowest classes not yet moved and each paying its own overhead, so that the $r$-th batch costs $\tau_0 + \Delta_{(rN+1)}$ and
\begin{equation}
    \Theta^{(N)}\!\left(x^{p} y^{q}, L\right)
        = B\,\tau_0 + \sum_{r=0}^{B-1} \Delta_{(rN+1)},
        \qquad B = \left\lceil \frac{W}{N} \right\rceil .
    \label{eq:term_cost_general}
\end{equation}
Taking $N \ge 4$ gives $B = 1$ and leaves $\tau_0 + \Delta_{(1)} = \tau_0 + \max \mathcal{D}$, recovering Eq.~\eqref{eq:term_cost_function}; $N = 2$ recovers Eq.~\eqref{eq:term_cost_2aod}; and $N = 1$ gives the fully serial cost $4\tau_0 + \sum_{r} \Delta_{(r)}$ for a monomial with four wrap classes.

The two-AOD cost therefore exceeds the four-AOD cost by $\tau_0 + \Delta_{(3)}$ throughout the interior of the exponent grid -- a strictly positive penalty, since all four wrap classes of a generic monomial have non-vanishing move times -- with equality exactly on the axes $p = 0$ or $q = 0$, where halving the AOD budget costs nothing. Fig.~\ref{fig:heatmap_comparisons} contrasts the two cost fields on a common color scale: for the $\ell_x = \ell_y = 9$ lift at grid-step ratio $u_x{:}u_y = 2{:}1$ the mean per-monomial cost rises from $0.372\,\mathrm{ms}$ to $0.608\,\mathrm{ms}$, a factor of $1.63$, and the worst-case monomial from $0.425\,\mathrm{ms}$ to $0.723\,\mathrm{ms}$. The penalty is concentrated in the interior, where all four wrap classes are present, while the axes -- the monomials that shift along a single lattice direction -- remain entirely unaffected.

The cost function above neglects batching or merging of moves for monomials implemented before the same pulse, such as those in Table~\ref{tab:ccz-budget}. Although these estimates are not tight, they provide an upper bound on the movement time that can guide the initial co-design of an error-correcting code for a given application.

\end{document}